\def\arcto{\mathord\shortrightarrow}
\def\arc#1#2{#1\arcto#2}
\def\rev{\mathit{rev}}
\def\Z{\mathbb{Z}}
\def\Real{\mathbb{R}}
\def\reverse#1{\smash{\overline{#1}}}
\let\eps\varepsilon
\def\snip{\mathbin{\raisebox{0.15ex}{\rotatebox[origin=c]{60}{\Rightscissors}\!}}}
\def\snip{\mathbin{\backslash\!\!\backslash}}
\def\Gsnip{\mathord{G_{\subsnip}}}
\def\Sigmasnip{\mathord{\Sigma_{\subsnip}}}
\def\gammasnip{\mathord{\gamma_{\subsnip}}}
\def\Gsnip{G'}
\def\Fsnip{F'}
\def\Sigmasnip{\Sigma'}
\def\gammasnip{\gamma'}
\def\Hsnip{H'}
\def\cycle{\gamma}
\def\path{p}
\def\dualarc{\alpha}
\def\Cut{X}
\def\Sigmabar{\overline{\smash{\Sigma}\vphantom{t}}}
\def\Gbar{\overline{\smash{G}\vphantom{t}}}
\def\bbar{\overline{\smash{b}\vphantom{t}}}
\def\nbar{\overline{n}}
\def\gbar{\overline{g}}
\def\chibar{\overline{\chi}}
\newtheorem{theorem}{Theorem}[section]
\newtheorem{corollary}[theorem]{Corollary}
\newtheorem{lemma}[theorem]{Lemma}
\title{Minimum Cuts in Surface Graphs%
\thanks{
Portions of this work were presented in preliminary form, by different subsets of the authors, at
the 25th Annual Symposium on Computational Geometry~\cite{cen-mcshc-09}, the 22nd Annual ACM-SIAM
Symposium on Discrete Algorithms~\cite{en-mcsnc-11}, and the 24th Annual ACM-SIAM Symposium on Discrete Algorithms~\cite{efn-gmcse-12}.
}}
\author{
Erin W. Chambers%
	\thanks{Department of Computer Science,
	Saint Louis University;
	\url{erin.chambers@slu.edu}.
	Supported in part by NSF grants CCF-1054779, IIS-1319573, and CCF-1614562.
	Portions of this work were done while this author was a student at the University of Illinois at Urbana-Champaign.}
\and
Jeff Erickson%
	\thanks{Department of Computer Science,
	University of Illinois, Urbana-Champaign;
	\url{jeffe@illinois.edu}.
	Supported in part by NSF grants CCF-0915519, CCF-1408763, and DMS-0528086.
  }
\and
Kyle Fox%
	\thanks{Department of Computer Science,
	University of Texas at Dallas;
	\url{kyle.fox@utdallas.edu}.
	Supported in part by the Department of Energy Office of Science Graduate Fellowship Program (DOE SCGF), made possible in part by the American Recovery and Reinvestment Act of 2009, administered by ORISE-ORAU under contract no. DE-AC05-06OR23100.
	Portions of this work were done while this author was a student at the University of Illinois at Urbana-Champaign.}
\and
Amir Nayyeri%
	\thanks{School of Electrical Engineering and Computer Science,
	Oregon State University,
	\url{nayyeria@eecs.oregonstate.edu}.
	Supported in part by NSF grants CCF-1065106, CCF-0915519, and DMS-0528086.
	Portions of this work were done while this author was a student at the University of Illinois at Urbana-Champaign.}
}
\begin{document}
\maketitle
\begin{abstract}
We describe algorithms to efficiently compute minimum $(s,t)$-cuts and global minimum cuts of
undirected surface-embedded graphs.
Given an edge-weighted undirected graph $G$ with $n$ vertices embedded on an orientable surface of
genus $g$, our algorithms can solve either problem in $g^{O(g)} n \log \log n$ or $2^{O(g)} n \log
n$ time, whichever is better.
When $g$ is a constant, our $g^{O(g)} n \log \log n$ time algorithms match the best running times
known for computing minimum cuts in planar graphs.

Our algorithms for minimum cuts rely on reductions to the problem of finding a minimum-weight
subgraph in a given \(\Z_2\)-homology class, and we give efficient algorithms for this latter
problem as well.
If $G$ is embedded on a surface with $b$ boundary components, these algorithms run in $(g + b)^{O(g
+ b)} n \log \log n$ and $2^{O(g + b)} n \log n$ time.
We also prove that finding a minimum-weight subgraph homologous to a single input cycle is NP-hard,
showing it is likely impossible to improve upon the exponential dependencies on $g$ for this latter
problem.
\end{abstract}

%
%


\section{Introduction}
\label{sec:intro}


Planar graphs have been a natural focus of study for algorithms research for decades, both because
they accurately model many real-world networks, and because they often admit simpler and/or more
efficient algorithms for many problems than general graphs.  Most planar-graph algorithms either
apply immediately or have been quickly generalized to larger families of graphs, such as graphs of
higher genus, graphs with forbidden minors, or graphs with small separators.  Examples include
minimum spanning trees \cite{p-omst-99, m-tltam-04}; single-source and multiple-source shortest
paths \cite{cce-msspe-13, fr-pgnwe-06, hkrs-fspap-97, k-msspp-05, kmw-spdpg-09, lrt-gnd-79,
tm-spltm-09, efl-hmcpf-18}; graph and subgraph isomorphism \cite{g-itegd-00, hw-ltaip-74,
m-itgbg-80, e-sipgr-99, e-dtmcg-00}; and approximation algorithms for the traveling salesman
problem, Steiner trees, and other NP-hard problems~\cite{bdt-passp-14, bkk-ptass-07, bkk-stpg-07,
dhm-aacd-07, e-dtmcg-00, blw-mghls-17, gs-lsatw-02}.

The classical minimum cut problem and its dual, the maximum flow problem, are stark exceptions to this general pattern.  Flows and cuts were introduced in the 1950s as tools for studying transportation networks, which are naturally modeled as planar graphs \cite{hr-fmern-55}.  Ford and Fulkerson's seminal paper \cite{ff-mfn-56} includes an algorithm to compute maximum flows in planar networks where the source and target lie on the same face.  A long series of results eventually led to planar minimum-cut algorithms that run in near-linear time, first for undirected graphs \cite{r-mstcp-83, hj-oamfu-85, f-faspp-87, insw-iamcmf-11} and later for directed graphs \cite{jk-mcdpn-92, hkrs-fspap-97, mnnw-mdpgo-15}.

In contrast, prior to our work, almost nothing was known about computing minimum cuts in even mild generalizations of planar graphs; in particular, except for the work reported in this paper, we are unaware of any algorithm to compute minimum-cuts in non-planar graphs that does not require first computing a maximum flow.  

This paper describes the first algorithms to compute minimum cuts in surface-embedded graphs of fixed genus in near-linear time.  Specifically, we describe two algorithms to compute minimum $(s,t)$-cuts in undirected surface graphs, the first in $g^{O(g)} n\log\log n$ time, and the second in $2^{O(g)} n\log n$ time.
We also extend our  algorithms to find \emph{global} minimum cuts in undirected surface graphs in the same asymptotic time bounds.  For all our algorithms, the input consists of an $n$-vertex graph with arbitrary real edge weights, embedded on an orientable surface of genus~$g$.

Our algorithms are based on a natural generalization of the duality between cuts and cycles in planar graphs, first proposed by Whitney \cite{w-pg-33} and first exploited to compute minimum cuts in planar graphs by Itai and Shiloach \cite{is-mfpn-79}.  By definition, a set $C$ of edges defines an $(s,t)$-cut in a graph $G$ if and only if their complement $G\setminus C$ is a disconnected graph, with $s$ and $t$ in different components.  If $G$ is embedded on a surface, then the corresponding edges $C^*$ in the dual graph~$G^*$ separate the \emph{faces} of $G^*$ into two disconnected subcomplexes, one containing the dual face $s^*$ and the other containing the dual face $t^*$. 

We formalize this characterization in terms of \emph{homology}, a standard equivalence relation from algebraic topology; specifically, we use cellular homology with coefficients in $\Z_2$.  Briefly, two subgraphs of a surface graph are \emph{homologous}, or in the same \emph{homology class}, if and only if their symmetric difference is the boundary of a subset of faces.  In light of this characterization, finding minimum $(s,t)$-cuts in surface graphs becomes a special case of finding the minimum-weight subgraph of a surface graph in a given homology class.  Indeed, both of our algorithms for computing minimum $(s,t)$-cuts solve this more general problem, which is sometimes called \emph{homology localization}~\cite{cf-qhc-08,cf-hrhl-10}.

Unlike in planar graphs, where every minimal cut is  dual to a simple cycle \cite{w-pg-33}, the dual of a minimum cut in a surface graph may consist of several disjoint cycles.  More generally, the minimum-weight subgraph in any homology class may be disconnected, even when the homology class is specified by a simple cycle; see Figure \ref{fig:homology2}.  Dealing with disconnected “cycles” is a significant complication in our algorithms.

Before describing our results in further detail, we first review several related results; technical terms are more precisely defined in Section \ref{sec:prelims}.


\subsection{Past results}

\subsubsection*{Minimum cuts in planar graphs}

For any two vertices $s$ and~$t$ in a graph $G$, an \EMPH{$(s,t)$-cut} is a subset of the edges of $G$ that intersects every path from $s$ to $t$.  A \emph{minimum} $(s,t)$-cut is an $(s,t)$-cut with the smallest number of edges, or with minimum total weight if the edges of $G$ are weighted.

Itai and Shiloach \cite{is-mfpn-79} observed that the minimum $(s,t)$-cut in a planar graph~$G$ is dual to the minimum-cost cycle that separates faces $s^*$ and $t^*$ in the dual graph $G^*$.  They also observed that this separating cycle intersects any shortest path from a vertex of $s^*$ to a vertex of $t^*$ exactly once.  Thus, one can compute the minimum $(s,t)$-cut by slicing the dual graph $G^*$ along a shortest path~$\pi$ from $s^*$ to~$t^*$; duplicating every vertex and edge of $\pi$; and then computing, for each vertex $u$ of $\pi$, the shortest path between the two copies of $u$ in the resulting planar graph.  Applying Dijkstra's shortest-path algorithm at each vertex of~$\pi$ immediately yields a running time of $O(n^2\log n)$.

Reif \cite{r-mstcp-83} improved the running time of this algorithm to $O(n\log^2 n)$ using a divide-and-conquer strategy.  Reif's algorithm was extended by Hassin and Johnson to compute the actual maximum flow in $O(n\log n)$ additional time, using a carefully structured dual shortest-path computation \cite{hj-oamfu-85}.  The running time was improved to $O(n\log n)$ by Frederickson~\cite{f-faspp-87}, and more recently to $O(n\log\log n)$ by Italiano \etal~\cite{insw-iamcmf-11}, by using a balanced separator decomposition to speed up the shortest-path computations.

Janiga and Koubek~\cite{jk-mcdpn-92} attempted to adapt Reif's $O(n\log^2 n)$-time algorithm to directed planar graphs; however, their  algorithm has a subtle error~\cite{kn-mcupg-11} which may lead to an incorrect result when the minimum $(t,s)$-cut is smaller than the minimum $(s,t)$-cut.

Henzinger \etal~\cite{hkrs-fspap-97} generalized Frederickson's technique to obtain an $O(n)$-time
planar shortest-path algorithm; using this algorithm in place of Dijkstra's algorithm improves the
running times of both Reif's and Janiga and Koubek's algorithms to $O(n\log n)$.  The same
improvement can also be obtained using more recent multiple-source shortest path algorithms by
Klein~\cite{k-msspp-05}; Cabello, Chambers, and Erickson~\cite{cce-msspe-13}; and Erickson, Fox, and
Lkhamsuren~\cite{efl-hmcpf-18}.

Minimum $(s,t)$-cuts in directed planar graphs can also be computed in $O(n\log n)$ time using the planar maximum-flow algorithms of Weihe \cite{w-mstfp-97} (after filtering out useless edges \cite{fls-fuadp-18}) and Borradaile and Klein \cite{b-epnfc-08, bk-tamfd-06, bk-amfdp-09}.

A \EMPH{cut} (without specified $s$ and $t$) is a subset of edges of $G$ that separate $G$ into two non-empty sets of vertices.
A \emph{global minimum} cut is a cut of minimum size, or minimum total weight if the edges of~$G$
are weighted.  Equivalently, a global minimum cut is an $(s,t)$-minimum cut of smallest total
weight, minimized over all pairs of vertices $s$ and $t$.  Chalermsook, Fakcharoenphol, and
Nanongkai~\cite{cfn-dnlta-04} gave the first algorithm for computing global minimum cuts that relies
on planarity; their algorithm runs in~$O(n \log^2 n)$ time.  Their algorithm was improved by
\L\c{a}cki and Sankowski~\cite{ls-mcsc-11} who achieved an $O(n \log \log n)$ running time.  Mozes
\etal~recently achieved the same $O(n \log \log n)$ running time for global minimum cuts in
\emph{directed} planar graphs \cite{mnnw-mdpgo-18},  using techniques reported in a preliminary
version of the current paper \cite{en-mcsnc-11}, specifically, the $\Z_2$-homology covers described in Section \ref{sec:homcover}.

\subsubsection*{Generalizations of planar graphs}

Surprisingly little is known about the complexity of computing maximum flows or minimum cuts in generalizations of planar graphs.  In particular, we know of no previous algorithm to compute minimum cuts in non-planar graphs that does not first compute a maximum flow.

By combining a technique of Miller and Naor \cite{mn-fpgms-95} with the planar directed flow
algorithm of Borradaile and Klein \cite{b-epnfc-08, bk-tamfd-06, bk-amfdp-09, e-mfpsp-10}, one can
compute maximum (single-commodity) flows in a planar graph with $k$ sources and sinks in $O(k^2
n\log n)$ time.  Very recently, Borradaile \etal~\cite{bkmnw-msmsm-17} described an algorithm to
compute maximum flows in planar graphs with an arbitrary number of sources and sinks in $O(n\log^3 n)$ time.  An algorithm of Hochstein and Weihe \cite{hw-mstfkc-07} computes maximum flows in planar graphs with $k$ additional edges in $O(k^3n\log n)$ time, using a clever simulation of Goldberg and Tarjan's push-relabel algorithm~\cite{gt-namfp-88}.  Borradaile \etal~\cite{bkmnw-msmsm-17} extend Hochstein and Weihe's framework to compute maximum flows in planar graphs with $k$ apices in $O(k^3n\log^3 n)$ time.

Chambers and Eppstein \cite{ce-focmf-13} describe an algorithm to compute maximum flows in $O(n\log n)$ time if the input graph forbids a fixed minor that can be drawn in the plane with one crossing.  Another related result is the algorithm of Hagerup \etal~\cite{hknr-cmfnc-98} to compute maximum flows in graphs of constant treewidth in $O(n)$ time.

Imai and Iwano \cite{ii-espap-90} describe a max-flow algorithm that applies to graphs of positive genus, but not to arbitrary sparse graphs.
Their algorithm computes minimum-cost flows in graphs with small balanced separators, using a combination of nested dissection \cite{lrt-gnd-79, pr-fepss-93}, interior-point methods~\cite{v-slpfm-89}, and fast matrix multiplication.
Their algorithm can be adapted to compute maximum flows (and therefore minimum cuts) in any graph of constant genus in time $O(n^{1.595}\log C)$, where $C$ is the sum of integer edge weights.
However, this algorithm is slower than more recent and more general algorithms \cite{ds-flgfi-08, gr-bfdb-98}.

Chambers, Erickson, and Nayyeri~\cite{cen-hfcc-12} describe maximum flow algorithms that are tailored specifically for graphs of constant genus.
Given a graph embedded on a surface of genus~$g$, their algorithms compute a maximum flow in  $O(g^8 n \log^2 n \log^2 C)$ time where $C$ is the sum of integer edge weights and in $g^{O(g)}n^{3/2}$ arithmetic operations when edge weights are arbitrary positive real numbers.  Their key insight is that it suffices to optimize the \emph{homology class} (with coefficients in $\Real$) of the flow, rather than directly optimizing the flow itself.

Euler's formula implies that a simple $n$-vertex graph embedded on a surface of genus $O(n)$ has at most $O(n)$ edges.
The fastest known combinatorial maximum-flow algorithm for sparse graphs, due to Orlin~\cite{o-mfotl-13}, runs in $O(n^2 / \log n)$ time.
The fastest algorithm known for sparse graphs with small integer capacities, due to Goldberg and Rao
\cite{gr-bfdb-98} and Lee and Sidford~\cite{ls-pfmlp-14}, run in time $O(n^{3/2} \polylog(n, U))$, where $U$ is an upper bound on the integer edge weights.
Madry~\cite{m-ncpef-13} describes a faster algorithm for unit capacity graphs that runs
in~$O(n^{10/7} \polylog n)$ time when the graph is sparse.

The fastest algorithm known to compute global minimum cuts in arbitrary weighted undirected graphs is a Monte Carlo randomized algorithm of Karger~\cite{k-mcnlt-00}, which runs in~$O(m \log^3 n)$ time but fails with small probability. 
A more recent deterministic algorithm of Henzinger, Rao, and Wang \cite{hrw-lfpfe-17}, based on breakthrough techniques of Kawarabayashi and Thorup \cite{kt-dgmcs-15,kt-decnt-18}, computes global minimum cuts in \emph{unweighted} graphs in $O(m \log^2 n \log^2 \log n)$ time.
The fastest deterministic algorithms known for global minimum cuts in arbitrary weighted graphs run in $O(nm + n^2\log n)$ time for undirected graphs \cite{ni-cemcg-92,f-eani-94,sw-sma-97} and in $O(mn \log(n^2/m))$ time for directed graphs \cite{ho-fafmd-94}.

For further background on maximum flows, minimum cuts, and related problems, we refer the reader to monographs by Ahuja \etal\ \cite{amo-nftaa-93} and Schrijver \cite{s-cape-03}.

\subsubsection*{Optimal homology representatives}

Homology is a topological notion of equivalence with nice algebraic properties.  Two subgraphs of a surface graph $G$ are \emph{homologous}, or in the same \emph{homology class}, if their difference is the sum of face boundaries, where summation is defined over some coefficient ring.  Our minimum-cut algorithms all reduce to the problem of finding a subgraph of \emph{minimum weight} in a given homology class (over the ring~$\Z_2$).  Several  authors have considered variants of this problem, which is often called \emph{homology localization}.

Most interesting variants of homology localization are NP-hard.  Chambers \etal~\cite{ccelw-scsih-08} prove that finding the shortest \emph{splitting} cycle is {NP}-hard; a cycle is splitting if it is non-self-crossing, non-contractible, and null-homologous.  A simple modification of their reduction (from Hamiltonian cycle in planar grid graphs) implies that finding the shortest \emph{simple cycle} in a given homology class is {NP}-hard.  Chen and Freedman \cite{cf-qhc-08, cf-qhc2-07} proved a similar hardness result for general simplicial complexes; however, the complexes output by their reduction are never manifolds.  Recently, Grochaw and Tucker-Foltz \cite{gt-ctugc-18} proved that homology localization in surface graphs, over a sufficiently large finite coefficient ring, is equivalent to Unique Games; in particular, there is no PTAS for \emph{any} finite coefficient ring unless the Unique Games Conjecture is false.

On the other hand, for homology with real or integer coefficients, homology localization in surface graphs is equivalent (via duality) to a minimum-cost flow problem and hence can be solved in polynomial time.  Chambers \etal~\cite{cen-hfcc-12} describe an algorithm to find optimal circulations in a given homology class in near-linear time, given a graph with integer coefficients on a surface of fixed genus.  Sullivan \cite{s-cath-90} and Dey \etal\ \cite{dhk-ohctu-11} prove similar results for higher-dimensional orientable manifolds.

\subsection{New results and organization}

In Section \ref{S:tree-cotree}, we describe two techniques to preprocess a graph on a surface with boundary, so that the homology class of any subgraph can be computed quickly.  These are both straightforward generalizations of standard methods for measuring homology in surfaces \emph{without} boundary based on tree-cotree decompositions \cite{ew-gohhg-05, ccelw-scsih-08, e-dgteg-03}.  In particular, we describe how to construct a \emph{system of arcs}---a collection of $O(g+b)$ boundary-to-boundary paths that cut the surface into a disk---in $O((g+b)n)$ time.  This generalization is essential for our algorithms, as our dual homology characterization of minimum $(s,t)$-cuts removes the dual faces $s^*$ and $t^*$, leaving a surface with two boundary components.

In Section~\ref{sec:crossing}, we present our first algorithm to compute minimum-weight subgraphs in a given homology class.  Our algorithm first computes a \emph{greedy system of arcs}; each arc in this system consists of two shortest paths.  Using an exchange argument, we prove that the minimum-weight subgraph in any homology class crosses each arc in the greedy system at most $O(g+b)$ times.  Our algorithm enumerates all possible sequences of crossings consistent with this upper bound, and finds the shortest subgraph consistent with each crossing sequence, by reducing to a planar minimum cut problem.  The resulting algorithm runs  in $(g+b)^{O(g+b)}n\log \log n$ time.

We describe our second algorithm to compute minimum-weight homologous subgraphs in
Section~\ref{sec:homcover}.  Instead of considering the \emph{sequence} of crossings with a
\emph{greedy} system of arcs, we instead count the \emph{number} of crossings with each arc in an
\emph{arbitrary} system of arcs.  The resulting vector of crossing numbers for an even subgraph
characterizes the homology class of that subgraph.  Our algorithm computes the shortest \emph{cycle}
in \emph{every} homology class, by constructing and searching a certain covering space of the
surface  that we call the \emph{$\Z_2$-homology cover}, using an extension~\cite{efl-hmcpf-18} of the multiple-source shortest path algorithm of Cabello \etal \cite{cce-msspe-13}.  We then assemble the minimum-weight \emph{even subgraph} in any desired homology class from these $\Z_2$-minimal cycles using dynamic programming.  The resulting algorithm runs in $2^{O(g+b)}n\log n$ time.

In Section~\ref{S:NPhard}, we prove that finding a minimum-weight even subgraph in a given homology class in NP-Hard.  
Unlike Chen and Freedman \cite{cf-hrhl-10}, this reduction is done on a 2-manifold, and unlike Chambers \etal~\cite{ccelw-scsih-08}, the target subgraph does not need to be a simple cycle.
This reduction implies that the exponential dependence on $g$ in our algorithms is unavoidable.  

Finally, in Section~\ref{sec:global}, we describe our algorithms for computing global minimum cuts.  Both algorithms ultimately reduce computing a global minimum cut to  $2^{O(g)}$ instances of computing minimum $(s,t)$-cuts; thus, our algorithms have the same asymptotic running times as the minimum $(s,t)$-cut algorithms from Sections \ref{sec:crossing} and \ref{sec:homcover}.

We note with some amusement that our algorithms solve a problem with a well-known polynomial-time solution by reducing it to an exponential number (in $g$) of instances of an NP-hard (but fixed-parameter tractable) problem!  
The authors of this paper are divided on whether to conjecture that minimum cuts in surface graphs can  be computed in time $O(g^c n\polylog n)$ for some small constant $c$, or that the problem is “fixed-parameter quadratic” with respect to genus, just as diameter and radius are fixed-parameter quadratic with respect to treewidth \cite{aww-afpsa-16}.  
Fomin \etal~\cite{flspw-fppcg-18} raise similar questions about the fixed-parameter efficiency of flows and cuts with respect to treewidth.

\section{Notation and Terminology}
\label{sec:prelims}

%



We begin by recalling several useful definitions related to surface-embedded graphs.  For further background, we refer the reader to Gross and Tucker \cite{gt-tgt-01} or Mohar and Thomassen~\cite{mt-gs-01} for topological graph theory, and to Hatcher~\cite{h-at-02} or Stillwell~\cite{s-ctcgt-93} for surface topology and homology.

\subsection{Surfaces and curves}
\label{SS:surfaces}

A \EMPH{surface} (more formally, a \emph{2-manifold with boundary}) is a compact Hausdorff space in which every point has an open neighborhood homeomorphic to either the plane $\Real^2$ or a closed halfplane $\set{(x,y)\in \Real^2\mid x\ge 0}$.  The points with halfplane neighborhoods make up the \EMPH{boundary} of the surface; every component of the boundary is homeomorphic to a circle.
A surface is \EMPH{non-orientable} if it contains a subset homeomorphic to
the M\"obius band, and \EMPH{orientable} otherwise. In this paper, we consider only compact, connected, and orientable surfaces.

A \EMPH{path} in a surface $\Sigma$ is a continuous function $p\colon [0,1]\to\Sigma$.
A \EMPH{loop} is a path whose endpoints~$p(0)$ and~$p(1)$ coincide;
we refer to this common endpoint as the \EMPH{basepoint} of the loop.
An \EMPH{arc} is a path internally disjoint from the boundary of~$\Sigma$
whose endpoints lie on the boundary of $\Sigma$.
A \EMPH{cycle} is a continuous function $\gamma\colon S^1\to\Sigma$;
the only difference between a cycle and a loop is that a loop has a
distinguished basepoint.
We say a loop~$\ell$ and a cycle~$\gamma$ are \EMPH{equivalent} if, for some
real number~$\delta$, we have~$\ell(t) = \gamma(t + \delta)$ for
all~$t \in [0,1]$.
We collectively refer to paths, loops, arcs, and cycles as \EMPH{curves}.
A~curve is \EMPH{simple} if it is injective; we usually do not distinguish between simple curves and their images in~$\Sigma$.
A simple curve~$p$ is \EMPH{separating} if~$\Sigma \setminus p$ is disconnected.

The \EMPH{reversal}~$\rev(p)$ of a path~$p$ is defined by
setting~$\rev(p)(t) = p(1-t)$. The \EMPH{concatenation}~$p \cdot q$ of two
paths~$p$ and~$q$ with~$p(1)=q(0)$ is the path created by
setting~$(p\cdot q)(t) = p(2t)$ for all~$t \leq 1/2$
and~$(p\cdot q)(t) = q(2t-1)$ for all~$t \geq 1/2$.

The \EMPH{genus} of a surface $\Sigma$ is the maximum number of disjoint simple cycles in $\Sigma$ whose complement is connected.
 Up to homeomorphism,
there is exactly one orientable surface and one non-orientable surface with any genus $g\ge 0$ and any number of
boundary cycles $b\ge 0$.
Orientable surfaces with~$b$ boundary components are differentiated by their \EMPH{Euler characteristic} ${\chi = 2 - 2g - b}$ (for non-orientable surfaces, ${\chi = 2 - g - b}$).

\subsection{Graph embeddings}
\label{SS:embeddings}

An \EMPH{embedding} of an undirected graph $G=(V,E)$ on a surface $\Sigma$ maps vertices to distinct points and edges to simple, interior-disjoint paths.  The \EMPH{faces} of the embedding are maximal connected subsets of $\Sigma$ that are disjoint from the image of the graph.
We may denote an edge~$uv \in E$ as~$f | g$ if it is incident to faces~$f$ and~$g$.
An embedding is \EMPH{cellular} if each of its faces is homeomorphic to the plane; in particular, in any cellular embedding, each component of the boundary of $\Sigma$ must be covered by a cycle of edges in~$G$.  Euler's formula implies that any cellularly embedded graph with $n$ vertices, $m$ edges, and $f$ faces lies on a surface with Euler characteristic $\chi = n-m+f$, which implies that $m = O(n+g)$ and $f=O(n+g)$
if the graph is simple.
We consider only such
cellular embeddings of genus $g=O(n^{1-\eps})$, so that the overall complexity of the embedding is $O(n)$.

Any cellular embedding on an orientable surface can be encoded combinatorially by a \EMPH{rotation system}, which records the counterclockwise order of edges incident to each vertex.
We also refer to the complex of vertices, edges, and faces induced by a cellular embedding as a \EMPH{combinatorial surface}.
Every combinatorial surface with boundary can be obtained from a combinatorial surface without boundary by deleting the interiors of one or more faces.

We redundantly use the term \EMPH{arc} to refer to a walk in the graph whose endpoints are boundary vertices.  Likewise, we use the term \EMPH{cycle} to refer to a closed walk in the graph.  Note that arcs and cycles may traverse the same vertex or edge more than once.

An arc or cycle in a combinatorial surface is \EMPH{weakly simple} if it can be continuously and
infinitesimally perturbed on the underlying 2-manifold $\Sigma$ into a simple path or cycle; we note that algorithms to detect if a cycle is weakly simple has been studied extensively of late~\cite{cex-dwsp-15,aaet-rwsp-17}.
Similarly, an arc or cycle~$\alpha$ and another arc or cycle $\beta$ are \EMPH{non-crossing} if some arbitrarily small perturbations of $\alpha$ and $\beta$ are disjoint; otherwise, we say that $\alpha$ \EMPH{crosses} $\beta$.

An \EMPH{even subgraph} is a subgraph of $G$ in which every node has even degree, or equivalently, the symmetric difference of cycles.
A \EMPH{cycle decomposition} of an even subgraph~$H$ is a set of edge-disjoint, non-crossing, weakly simple cycles whose union is $H$.

\begin{lemma}
\label{lem:decomposition}
Every even subgraph of an embedded graph has a cycle decomposition.
\end{lemma}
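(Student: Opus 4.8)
The plan is to build the cycle decomposition greedily, peeling off one weakly simple cycle at a time while maintaining non-crossing-ness with the cycles already removed. Since $H$ is an even subgraph, every vertex has even degree, so as long as $H$ is nonempty we can trace a closed walk: start at any edge and, upon arriving at a vertex, leave along some not-yet-used edge incident to it; evenness guarantees that whenever we enter a vertex we can also leave, so the walk can only terminate by returning to its start. The subtlety is that we must choose \emph{which} outgoing edge to take so that the resulting closed walk is weakly simple and non-crossing with the rest of $H$. The key idea is to use the rotation system: when the walk enters a vertex $v$ along an edge $e$, it exits along the edge $e'$ that is \emph{next in the cyclic (counterclockwise) order} around $v$ on one fixed side. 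This ``leftmost'' (or ``rightmost'') turning rule is the standard face-tracing rule, and a walk produced by it is automatically non-self-crossing after an infinitesimal perturbation, because at every vertex the incoming and outgoing strands are cyclically adjacent, so the local picture can be perturbed into two disjoint strands.

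First I would make this turning rule precise and verify that it always produces a closed walk $\gamma$ using a nonempty subset of the edges of $H$: evenness of $H$ ensures the exit edge always exists (each time we use an entering edge we simultaneously commit to an exiting edge, so the used edges at each vertex are consumed in pairs), and finiteness forces the walk to close up into a cycle. Second I would argue that $\gamma$ is weakly simple: perturb $\gamma$ in a small neighborhood of each vertex it passes through, using the cyclic adjacency of consecutive strands to route the copies of $\gamma$ as parallel non-crossing curves; a cycle that traverses some edge more than once is handled the same way, since the multiple passes along that edge can be drawn as parallel arcs. Third, and most importantly, I would show $\gamma$ is non-crossing with $H \setminus \gamma$: the perturbation above keeps $\gamma$ within a thin band, and the turning rule means that at each vertex $v$ the strands of $\gamma$ occupy cyclically contiguous ``slots'' in the rotation at $v$, so the remaining edges of $H$ at $v$ — which $\gamma$ did not use — all lie on one side and can be separated from $\gamma$'s band without crossing. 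Then remove $\gamma$ from $H$; the result $H \setminus \gamma$ is again an even subgraph (we deleted a closed walk, changing every degree by an even amount), of strictly smaller size, so the induction hypothesis gives a cycle decomposition of $H \setminus \gamma$, and adjoining $\gamma$ completes the decomposition. I should also check that $\gamma$ is non-crossing with each cycle produced later in the recursion, but this follows because those later cycles are subgraphs of $H \setminus \gamma$, which is disjoint from $\gamma$ as an edge set, and the perturbations can be chosen consistently (e.g., each cycle gets a thinner band than the one removed before it).

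The main obstacle is the third step: pinning down the combinatorial meaning of ``non-crossing'' for closed walks that may revisit vertices and edges, and checking that the leftmost-turn walk genuinely avoids crossing both itself and the leftover edges. The cleanest way to handle this is to phase the argument through a small perturbation of the combinatorial surface into the underlying $2$-manifold: replace each vertex by a small disk and each edge by a thin ribbon, so that $H$ becomes a collection of ribbon-strands, and show the leftmost-turn rule selects, at each vertex-disk, a set of non-crossing chords connecting the ribbon-ends. Once the picture is set up this way, ``non-crossing'' and ``weakly simple'' are literally statements about disjointness of arcs in a disk, and the contiguity property of the selected slots makes them evident. The degree-parity bookkeeping and the induction are then routine.
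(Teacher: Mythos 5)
Your overall strategy (greedily peel off closed walks using a turning rule in the rotation system, then induct) can be made to work, but the two justifications you give for the crucial non-crossing claims are not correct as stated, and this is exactly where the content of the lemma lies. First, once the rule is forced to be ``exit along the next \emph{unused} edge of $H$ in the rotation'' (the plain ``next edge in cyclic order'' rule can demand an already-traversed edge), the incoming and outgoing ends at a vertex are adjacent only among the \emph{currently unused} ends, not in the full rotation; consequently your claim that the strands of $\gamma$ occupy cyclically contiguous slots at each vertex, with the untouched edges of $H$ all on one side, is false. For example, if $\gamma$ passes through $v$ twice, first entering at end $3$ and exiting at end $4$, and later entering at end $6$ and exiting at end $1$ (legal under the next-unused rule, since the scan from $6$ reaches $1$ before $2$), then $\gamma$ occupies ends $\{1,3,4,6\}$ and the leftover ends $\{2,5\}$ are interleaved with them, not on one side. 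Second, your claim that non-crossingness of $\gamma$ with the later cycles ``follows because those later cycles are subgraphs of $H\setminus\gamma$, which is disjoint from $\gamma$ as an edge set'' is precisely the non-trivial point: edge-disjoint closed walks through a common vertex can be forced to cross there (that is why an arbitrary decomposition into edge-disjoint cycles does not prove the lemma), so edge-disjointness buys you nothing by itself.

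The construction itself is salvageable: one can show that the pairing of edge-ends produced at each vertex by the ``next unused end in a fixed rotational direction'' rule (with the used-marks persisting across all cycles) can never contain two interleaved chords $\{a,c\}$ and $\{b,d\}$ -- whichever end of each chord was the entering one, the scan that formed it passed over an end of the other chord, so each chord would have to be formed strictly after the other, a contradiction -- and a non-crossing chord diagram at every vertex yields weak simplicity and pairwise non-crossingness after perturbation in vertex disks, exactly as in your ribbon picture. But this is more machinery than needed: the paper's proof dispenses with the walk entirely. It simply \emph{chooses} a non-crossing perfect pairing of the ends of $H$ at each vertex up front (e.g., pair $e_{2i-1}$ with $e_{2i}$ in clockwise order) and lets that pairing define the cycles; your greedy walk is, in effect, a laborious way of producing one particular such pairing, and its correctness needs the temporal argument above rather than the contiguity and edge-disjointness claims you rely on.
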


\begin{proof}
Let $H$ be an even subgraph of $G$.  We can decompose $H$ into cycles by specifying, at each vertex~$v$, which pairs of incident edges of $H$ are consecutive.  Any pairing that does not create a crossing at $v$ is sufficient.  For example, if $e_1, e_2, \dots, e_{2d}$ are the edges of $H$ incident to $v$, indexed in clockwise order around $v$, we could pair edges $e_{2i-1}$ and $e_{2i}$ for each $i$.
\end{proof}

We emphasize that each cycle in a cycle decomposition may visit vertices multiple times; indeed, some even subgraphs cannot be decomposed into strictly simple cycles.

\EMPH{Slicing} a combinatorial surface along a cycle or arc modifies both the surface and the embedded graph.
For any combinatorial surface $S = (\Sigma, G)$ and any simple cycle or arc $\gamma$ in~$G$, we define a new combinatorial surface \EMPH{$S \snip \gamma$} by taking the topological closure of $\Sigma \backslash \gamma$ as the new underlying surface; the new embedded graph contains two copies of each vertex and edge of $\gamma$, each bordering a new boundary.
We define the~\EMPH{projection} of a curve in~$S \snip \gamma$ as the natural mapping of points (or vertices and edges) to~$S$. 

\subsection{Duality}

Any undirected graph~$G$ embedded on a surface~$\Sigma$ without boundary has a
\EMPH{dual graph}~$G^*$, which has a vertex~$f^*$ for each face~$f$ of~$G$,
and an edge~$e^*$ for each edge~$e$ in~$G$ joining the vertices dual to the
faces of~$G$ that~$e$ separates. The dual graph~$G^*$ has a natural cellular
embedding in~$\Sigma$, whose faces corresponds to the vertices of~$G$.
See Figure~\ref{fig:prelims_primaldual}.

\begin{figure}[ht]
\centering
\includegraphics[height=1.25in]{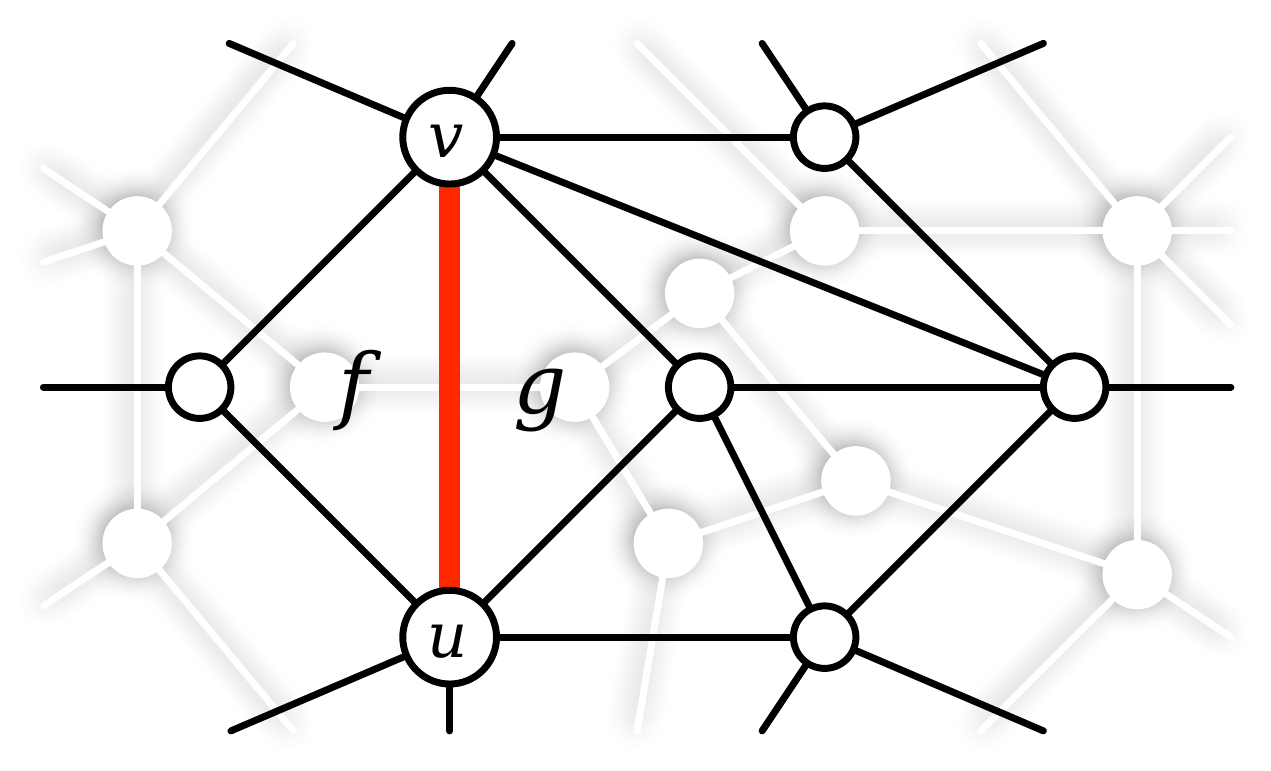}\quad
\includegraphics[height=1.25in]{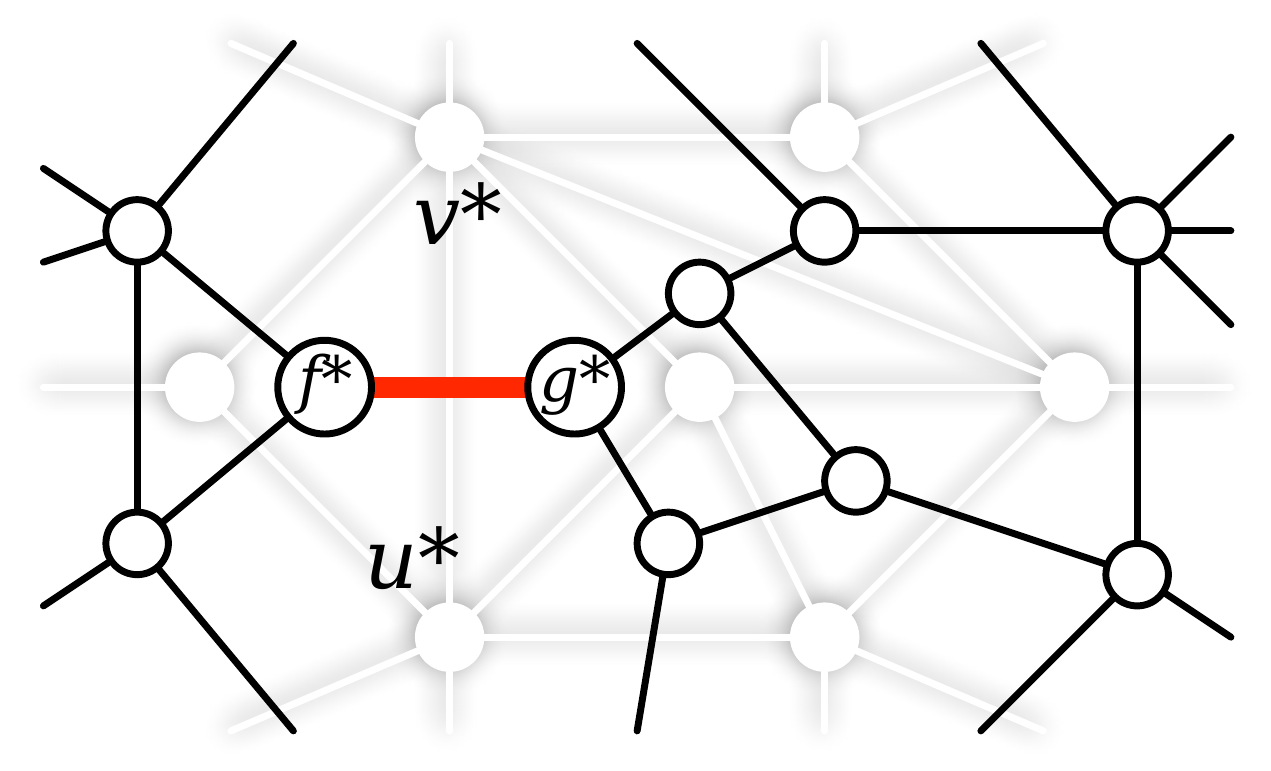}
\caption{Graph duality.  One edge $uv$ and its dual $(uv)^* =
f^*g^*$ are emphasized.} \label{fig:prelims_primaldual}
\end{figure}

Any undirected graph $G$ embedded on a surface $\Sigma$ with boundary has a \EMPH{dual graph~$G^*$}, defined as follows.\footnote{Our definition differs slightly from the one proposed by Erickson and Colin de Verdière~\cite{ce-tnpcs-10}.}  The dual graph~$G^*$ has a vertex $f^*$ for each face $f$ of~$G$, \emph{including the boundary cycles}, and an edge $e^*$ for each edge $e$ in $G$ (including boundary edges) joining the vertices dual to the faces that~$e$ separates.  For each boundary cycle $\delta$ of~$G$, we refer to the corresponding vertex $\delta^*$ of $G^*$ as a \EMPH{dual boundary vertex}.  The dual graph $G^*$ has a natural cellular embedding in the surface~\EMPH{$\Sigma^\bullet$} obtained from~$\Sigma$ by gluing a disk to each boundary cycle; each face of this embedding corresponds to a vertex of $G$.  See Figure \ref{fig:duality}.  (Duality can be extended to directed graphs~\cite{cen-hfcc-12}, but our results do not require this extension.)

\begin{figure}[htb]
\centering
\includegraphics[scale=0.45]{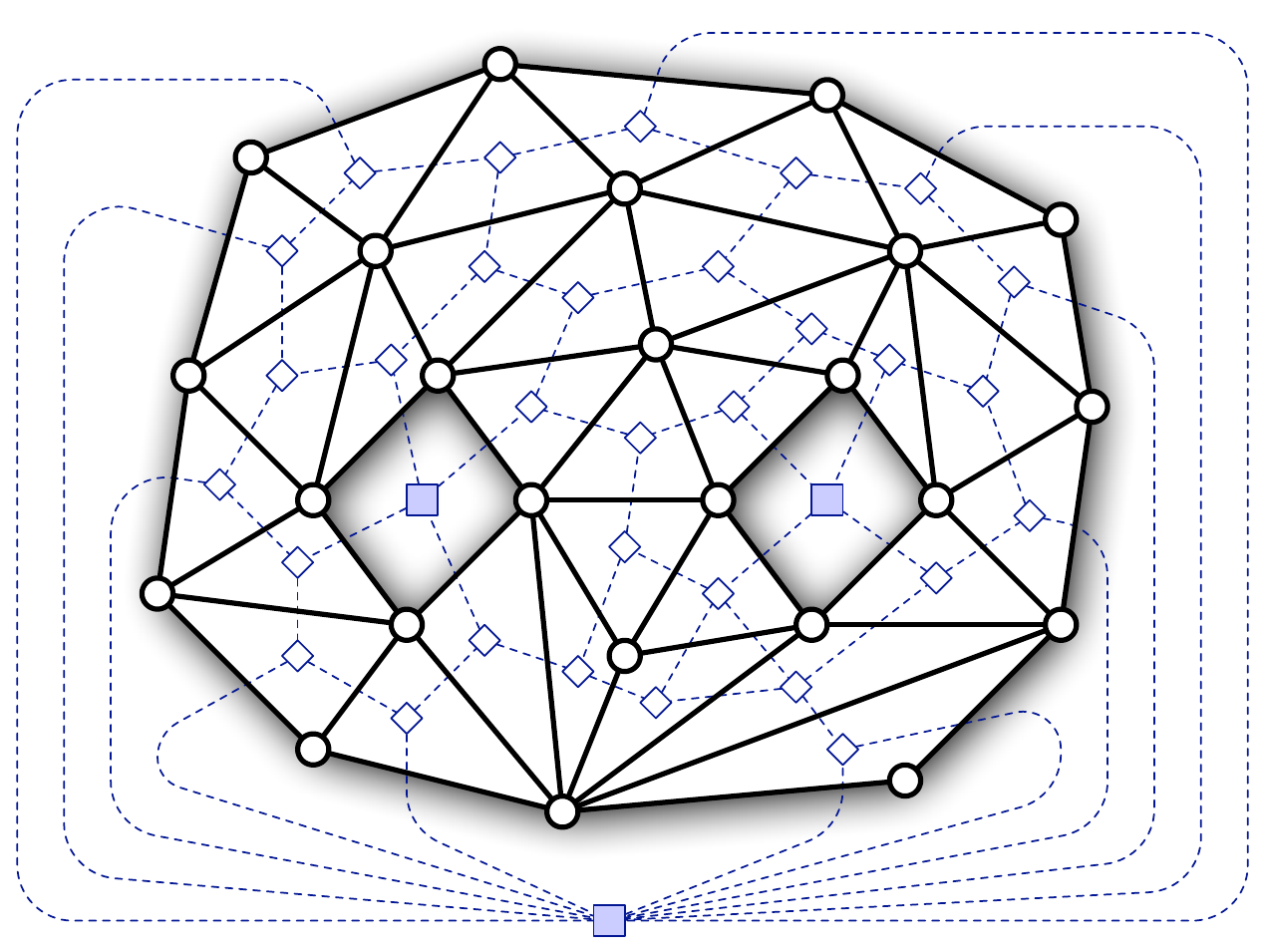}
\caption{A cellularly embedded graph $G$ (solid lines) on a pair of pants (the surface of genus 0 with 3 boundaries), and its dual graph $G^*$ (dashed lines).  Dual boundary vertices are indicated by squares.}
\label{fig:duality}
\end{figure}
 
For any subgraph $F = (U,D)$ of $G = (V,E)$, we write \EMPH{$G\setminus F$} to denote the edge-complement $(V, {E\setminus D})$.  Also, when the graph $G$ is fixed, we abuse notation by writing~$F^*$ to denote the subgraph of $G^*$ corresponding to a subgraph $F$ of~$G$; each edge in $F^*$ is the dual of a unique edge in~$F$.  In particular, we have the identity $(G\setminus F)^* = G^* \setminus F^*$.
Further, we may sometimes use~$D$ to refer to an edge set or the subgraph~$F = (V, D)$,
but it should be clear which we mean from context.


\subsection{Homotopy and homology}
\label{SS:homology}

Two paths~$p$ and~$q$ in $\Sigma$ are \EMPH{homotopic} if one can be continuously deformed into the other without changing their endpoints.
More formally, a \EMPH{homotopy} between~$p$ and~$q$ is a continuous map $h\colon {[0,1]\times [0,1] \to \Sigma}$ such that $h(0,\cdot) = p$, $h(1,\cdot) = q$, $h(\cdot, 0)=p(0)=q(0)$, and $h(\cdot,1)=p(1)=q(1)$.  Homotopy defines an equivalence relation over the set of paths with any fixed pair of endpoints.

Similarly, two cycles $\alpha$ and $\beta$ in $\Sigma$ are \EMPH{freely homotopic} if one can be
continuously deformed into the other.  More formally, a free homotopy between~$\alpha$ and~$\beta$
is a continuous map $h\colon {[0,1]\times S^1 \to \Sigma}$ such that $h(0,\cdot) = \alpha$ and
$h(1,\cdot) = \beta$.  Free homotopy defines an equivalence relation over the set of cycles
in~$\Sigma$.  We omit the word ``free'' when it is clear from context.

%
%
A cycle is \EMPH{contractible} if it is homotopic to a constant map.
Given a weight function on the edges of~$G$, we say a path or cycle is \EMPH{tight} if it has minimum total weight (counting edges with multiplicity) for its homotopy class.

Homology is a coarser equivalence relation than homotopy, with nicer
algebraic properties.  Like several earlier papers \cite{cf-qhc2-07,
cf-qhc-08, dls-chtl-07, dlsc-cgaht-08,e-sncds-11,f-sntcd-13}, we will consider only
one-dimensional cellular homology with coefficients in the finite
field $\Z_2$; this restriction allows us to radically simplify our
definitions.
Fix a cellular embedding of an undirected graph $G$ on a surface with genus $g$ and $b$ boundaries.  A \EMPH{boundary subgraph} is the boundary of the union of a subset of faces of $G$; for example, every separating cycle is a boundary subgraph.
Two even subgraphs are \EMPH{homologous}, or in the same \EMPH{homology class}, if their symmetric difference is a boundary subgraph.
Boundary subgraphs are also called \EMPH{null-homologous}.  Any two homotopic cycles are homologous, but homologous cycles are not necessarily homotopic; see Figure \ref{fig:homology}.  Moreover, the homology class of a cycle can contain even subgraphs that are not cycles; see Figure \ref{fig:homology2}.
Homology classes define a vector space $\Z_2^\beta$, called the first homology group, where $\beta = 2g$ if the surface has no boundary and $\beta = 2g+b-1$ otherwise.
The rank $\beta$ of the first homology group is called the first \emph{Betti number} of the surface.

\begin{figure}[htb]
\centering
\includegraphics[height=1in]{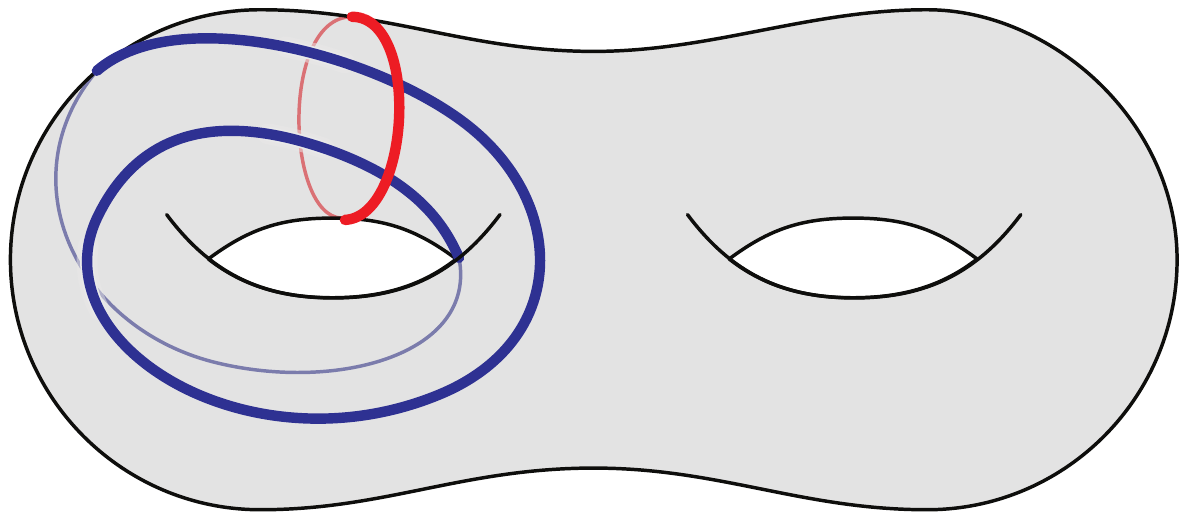}\qquad
\includegraphics[height=1in]{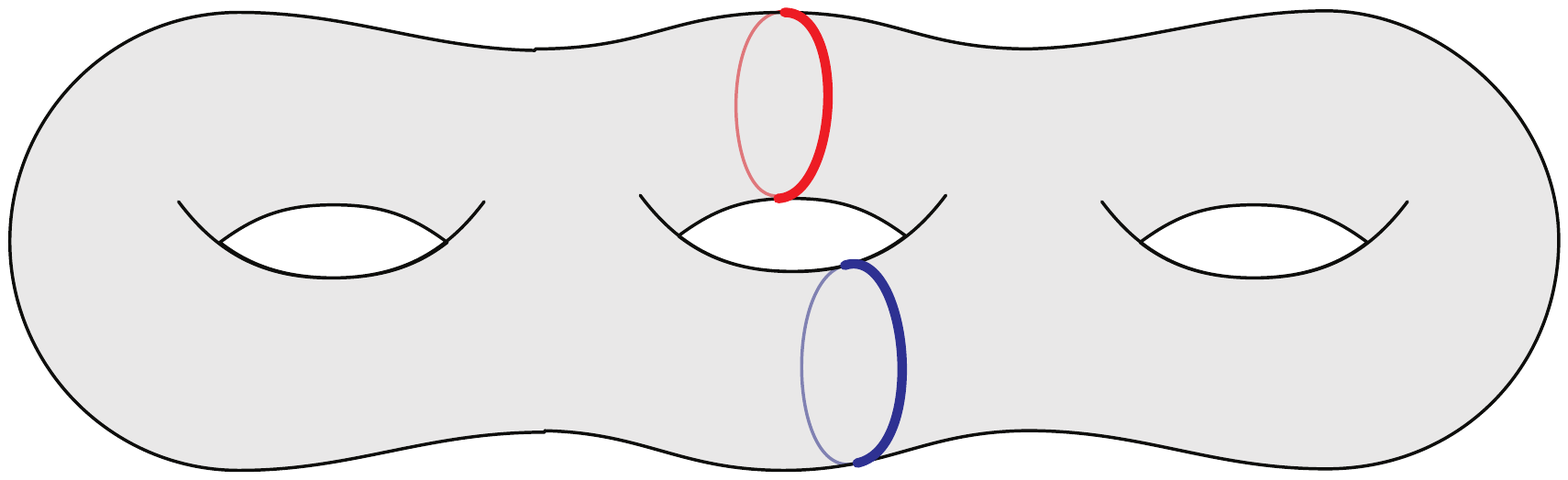}
\caption{Homologous pairs of cycles that are not homotopic.  (Lighter portions of the curves are on the back side of the surface.)}
\label{fig:homology}
\end{figure}

\begin{figure}[htb]
\centering
\includegraphics[height=1in]{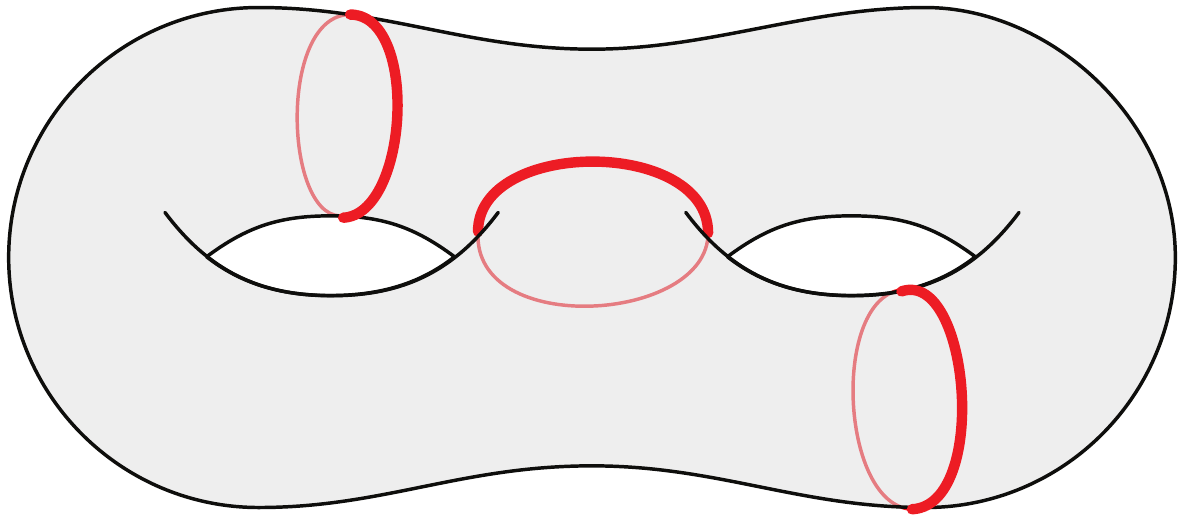}
\caption{Each cycle is homologous to the union of the other two.}
\label{fig:homology2}
\end{figure}

\subsection{Duality between cuts and even subgraphs}
\label{SS:duality}

A crucial component of our minimum $(s,t)$-cut algorithms is an equivalence between $(s,t)$-cuts and even subgraphs of the \emph{dual graph} contained in a particular homology class.  This equivalence was first observed in planar graphs by Whitney \cite{w-pg-33} and was later used to compute minimum cuts in planar graphs by Itai and Shiloach \cite{is-mfpn-79}.
We formalize the same equivalence on surface graphs in the following lemma:

\begin{lemma}
\label{lem:cut-duality}
Let $G$ be an edge-weighted graph embedded on a surface $\Sigma$ without boundary, and let $s$ and~$t$ be vertices of~$G$.  If $\Cut$ is an $(s,t)$-cut in~$G$, then $\Cut^*$ is an even subgraph of $G^*$ homologous with the boundary of $s^*$ in the surface $\Sigma\setminus(s^*\cup t^*)$.  In particular, if $X$ is a minimum-weight $(s,t)$-cut in~$G$, then~$\Cut^*$ is the minimum-weight even subgraph of $G^*$ homologous with the boundary of $s^*$ in $\Sigma\setminus(s^*\cup t^*)$.
\end{lemma}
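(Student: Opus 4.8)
The plan is to restate the combinatorial notion of an $(s,t)$-cut in terms of the $\Z_2$-boundary operator of the dual embedding, after which the homology statement falls out almost immediately. Throughout, write $\Sigma' = \Sigma \setminus (s^* \cup t^*)$ for the surface obtained by deleting the two open faces $s^*$ and $t^*$ of $G^*$. Since $G^*$ is cellularly embedded, $\Sigma'$ has exactly two boundary circles, carried by the edge cycles $\partial s^*$ and $\partial t^*$ bounding those faces, and the faces of $G^*$ that survive in $\Sigma'$ are exactly the vertices of $G$ other than $s$ and $t$. The key fact I would use is the Whitney-style duality between cuts and cocycles: for any vertex set $S$ of $G$, the edge cut of $S$ (the set of edges of $G$ with exactly one endpoint in $S$) is dual to the boundary subgraph, $\partial S$, of $S$ considered as a set of faces of $G^*$. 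This is a one-edge-at-a-time check --- an edge $e$ of $G$ crosses $S$ iff exactly one of the two faces of $G^*$ incident to $e^*$ lies in $S$ --- and it shows in particular that $\partial S$ is always a $\Z_2$-cycle, hence an even subgraph of $G^*$; note also that $\partial \{s\}$ is precisely the boundary of $s^*$.

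For the forward implication I would first reduce to inclusion-minimal cuts. Given an $(s,t)$-cut $X$, let $S$ be the set of vertices of $G$ reachable from $s$ in $G \setminus X$; then $t \notin S$, and every edge of $G$ leaving $S$ lies in $X$. If $X$ is inclusion-minimal, a short exchange argument gives the converse inclusion: re-adding any edge $e \in X$ must produce an $s$--$t$ path through $e$, and following that path from $s$ forces one endpoint of $e$ to lie in $S$ and the other outside $S$; hence $X$ is exactly the edge cut of $S$. Dualizing, $X^* = \partial S = \partial \{s\} + \partial(S \setminus \{s\})$ over $\Z_2$. Since $S \setminus \{s\}$ contains neither $s$ nor $t$, it is a legitimate set of faces of $G^*$ in $\Sigma'$, so $\partial(S \setminus \{s\})$ is a boundary subgraph there; thus $X^*$ differs from the boundary of $s^*$ by a boundary subgraph in $\Sigma'$, which is exactly what it means for the two to be homologous in $\Sigma'$. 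And $X^*$ is an even subgraph because it equals $\partial S$.

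For the ``minimum-weight'' refinement I would also prove the converse direction. If $Z$ is an even subgraph of $G^*$ homologous to the boundary of $s^*$ in $\Sigma'$, then $Z + \partial\{s\}$ is null-homologous in $\Sigma'$, hence equals $\partial F$ for some set $F$ of faces of $G^*$ in $\Sigma'$, i.e.\ $F \subseteq V(G) \setminus \{s,t\}$. Setting $S = \{s\} \cup F$ gives $s \in S$, $t \notin S$, and $Z = \partial \{s\} + \partial F = \partial S$, so $Z$ is dual to the edge cut of $S$, which is an $(s,t)$-cut of $G$ of the same total weight as $Z$. Hence $X \mapsto X^*$ is a weight-preserving correspondence between edge cuts of $G$ separating $s$ from $t$ and even subgraphs of $G^*$ homologous to the boundary of $s^*$ in $\Sigma'$. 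Now if $X$ is a minimum-weight $(s,t)$-cut --- which we may assume is inclusion-minimal, hence the edge cut of some such $S$ --- then $X^*$ lies in the stated homology class by the forward implication, and it is lightest there: any strictly lighter even subgraph in the class would, via the converse, yield a strictly lighter $(s,t)$-cut, contradicting the choice of $X$.

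I expect the only real subtlety to be organizational. The literal statement that $X^*$ is an even subgraph for \emph{every} $(s,t)$-cut $X$ is really a statement about inclusion-minimal cuts: a wasteful cut can contain extra edges whose duals break parity, so the reduction above is what makes the second sentence of the lemma clean, together with the observation that, for nonnegative edge weights, deleting a wasteful edge never increases the weight. The other point to keep an eye on is that restricting to $\Sigma'$ must not change which edges belong to $\partial(S \setminus \{s\})$ or to $\partial F$; this is immediate, since those face sets avoid $s^*$ and $t^*$ and the boundary operator only consults face incidences, which are identical in $\Sigma$ and in $\Sigma'$. Everything else --- the exchange argument and the edge-by-edge verification that the edge cut of $S$ dualizes to $\partial S$ --- is routine.
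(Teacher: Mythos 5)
Your proposal is correct and follows essentially the same route as the paper's proof: dualize the cut to the boundary of the set of faces on the $s$-side, observe that its symmetric difference with $\partial s^*$ bounds a set of faces avoiding $s^*$ and $t^*$ (hence is null-homologous in $\Sigma\setminus(s^*\cup t^*)$), and run the same reasoning in reverse for the converse and the weight-preserving correspondence that gives the minimality claim. The only real difference is that you make the reduction to inclusion-minimal cuts explicit, whereas the paper simply treats every $(s,t)$-cut as the edge cut of a vertex bipartition; that is a fair point of care, not a different approach.
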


\begin{proof}
Let $\partial s^*$ denote the boundary of $s^*$, and let $\Sigma'$ denote the surface $\Sigma\setminus {(s^*\cup t^*)}$.

Let $\Cut$ be an arbitrary $(s,t)$-cut in $G$.  This cut partitions the vertices of $G$ into two disjoint subsets~$S$ and $T$, respectively containing vertices $s$ and $t$.  Thus, the dual subgraph $\Cut^*$ partitions the faces of $G^*$ into two disjoint subsets, $S^*$ and $T^*$, respectively containing faces $s^*$ and $t^*$.  In particular, $\Cut^*$ is the boundary of the union of the faces in $S^*$, which implies that $\Cut^*$ is null-homologous in $\Sigma$.  The subgraph $\Cut^* \oplus \partial s^*$ is the boundary of the union of $S^* \setminus\set{s^*}$, which is a subset of the faces of $\Sigma'$.  Thus, $\Cut^*\oplus \partial s^*$ is null-homologous in $\Sigma'$.  We conclude that $\Cut^*$ and  $\partial s^*$ are homologous in $\Sigma'$.

Conversely, let $\Cut^*$ be an arbitrary even subgraph of $G^*$ homologous to $\partial s^*$ in $\Sigma'$.  The subgraph $\Cut^*\oplus \partial s^*$ is null-homologous in $\Sigma'$.  This immediately implies that $X^*$ is null-homologous in $\Sigma$; moreover, faces $s^*$ and $t^*$ are on opposite sides of $X^*$.  Any path from $s$ to $t$ in the original graph $G$ must traverse at least one edge of $\Cut$.  We conclude that $\Cut$ is an $(s,t)$-cut.
\end{proof}

\section{Characterizing Homology}
\label{S:tree-cotree}

Throughout the paper, we fix an \EMPH{undirected} graph $G=(V,E)$, a non-negative weight function $w\colon E\to \Real$, and a cellular embedding of $G$ on a surface $\Sigma$ of genus $g$ with $b$ boundary cycles.
Except where explicitly indicated otherwise, we assume without loss of generality  that the underlying surface $\Sigma$ has non-empty boundary; otherwise, we can remove an arbitrary face of $G$ from~$\Sigma$ without affecting its homology at all.  Let $\delta_1, \dots, \delta_b$ denote the boundary cycles of $\Sigma$, and let $\beta = 2g+b-1$ denote the the first Betti number of $\Sigma$.

In this section, we describe two standard methods for preprocessing a combinatorial surface with boundary in~$O(\beta n)$ time, so that the $\Z_2$-homology class of any even subgraph $H$ can be computed in $O(\beta)$ time per edge.  These are both straightforward generalizations of standard methods for measuring homology in surfaces \emph{without} boundary based on tree-cotree decompositions \cite{ew-gohhg-05, ccelw-scsih-08, e-dgteg-03}.
We give these full details here completeness, and because as far as we are aware, no detailed description appears
elsewhere in the literature for the first method.  We note that a preliminary version of the current
work~\cite{en-mcsnc-11} was the first detailed description of the second method;
see also Chambers~\etal~\cite{bcfn-mchbs-17} for an alternative description of the second method.
All results in this section extend without modification to nonorientable surfaces.

Both methods characterize the homology class of any even subgraph $H$ using a vector of~$\beta$ bits.
The vectors are computed using a one of two natural generalizations of tree-cotree decompositions~\cite{e-dgteg-03} to surfaces with boundary.
In the first method, the vector is based on the crossings between a cycle decomposition of~$H$ and a set of~$\beta$ primal arcs.
By carefully selecting these arcs, we can bound the number of times any~$\Z_2$-minimal even subgraph can cross any of these arcs; this bound is necessary for the algorithm given in Section~\ref{sec:crossing}.
In the second method, the vector is based on the crossings between~$H$ and a set of~$\beta$ \emph{dual} arcs.
The second method is somewhat easier to describe and implement than the first, so we use the second method in the algorithm given in Section~\ref{sec:homcover}.


\subsection{Crossing parity vectors via forest-cotree decompositions}
\label{sec:characterizing_crossings}

The first method begins by computing a set~$A$ of~$\beta$ arcs, each of which is the concatenation of two shortest paths (possibly meeting in the interior of an edge), such that the surface $\Sigma\setminus A$ is a topological disk.
Following previous papers \cite{ccelw-scsih-08, ce-tnpcs-10, c-scgsp-10}, we construct a \EMPH{greedy system of arcs}, using a variant of Erickson and Whittlesey's algorithm to construct optimal systems of loops \cite{ew-gohhg-05}.  Our algorithm uses a natural generalization of tree-cotree decompositions~\cite{e-dgteg-03} to surfaces with boundary.

A \EMPH{forest-cotree decomposition} of $G$ is any partition $(\partial\! G, F, L, C)$ of the edges of $G$ into four edge-disjoint subgraphs with the following properties:
\begin{itemize}\itemsep0pt
\item $\partial\! G$ is the set of all boundary edges of $G$.
\item $F$ is a spanning forest of $G$, that is, an acyclic subgraph of $G$ that contains every vertex.
\item Each component of $F$ contains a single boundary vertex.
\item $C^*$ is a spanning tree of $G^*\setminus (\partial G)^*$, that is, a subtree of $G^*$ that contains every vertex \emph{except} the dual boundary vertices $\delta_i^*$.
\item Finally, $L$ is the set of leftover edges $E \setminus (\partial\!G \cup F \cup C^*)$
\end{itemize}

\noindent
Euler's formula implies that there are exactly $\beta$ leftover edges in $L$; arbitrarily label these edges $e_1, e_2, \dots, e_\beta$.  For each edge $e_i\in L$, the subgraph~$F\cup \set{e_i}$ contains a single nontrivial arc $a_i$, which is either a simple path between distinct boundary cycles, or a nontrivial loop from a boundary cycle back to itself; in the second case, $a_i$ may traverse some edges of $F$ twice.  Slicing along the arcs $a_1, \dots, a_\beta$ transforms~$\Sigma$ into a topological disk. See Figure \ref{fig:forest-cotree}.

\begin{figure}[htb]
\centering\footnotesize\sf
\begin{tabular}{c}
\includegraphics[scale=0.45]{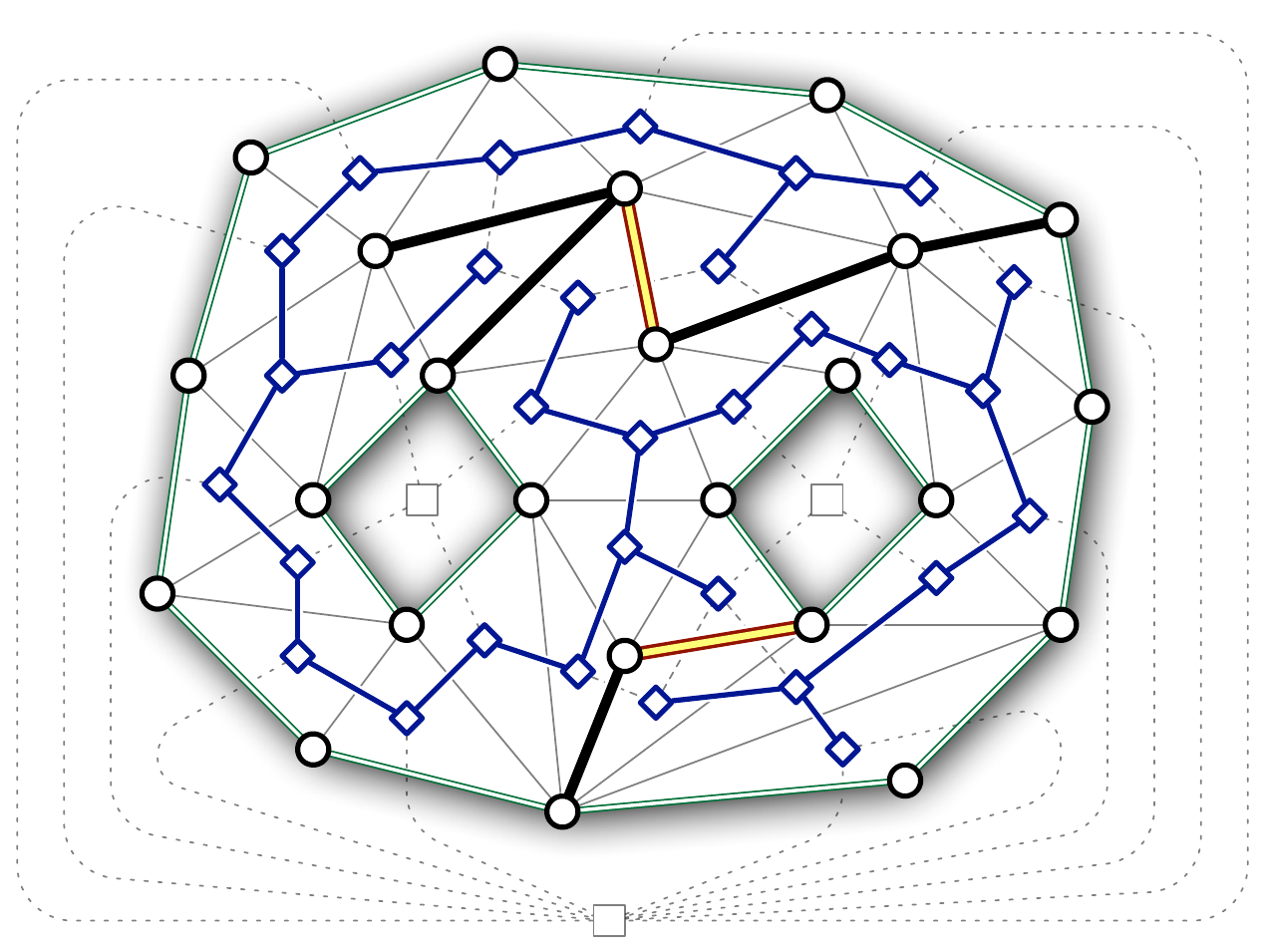} \qquad
\includegraphics[scale=0.45]{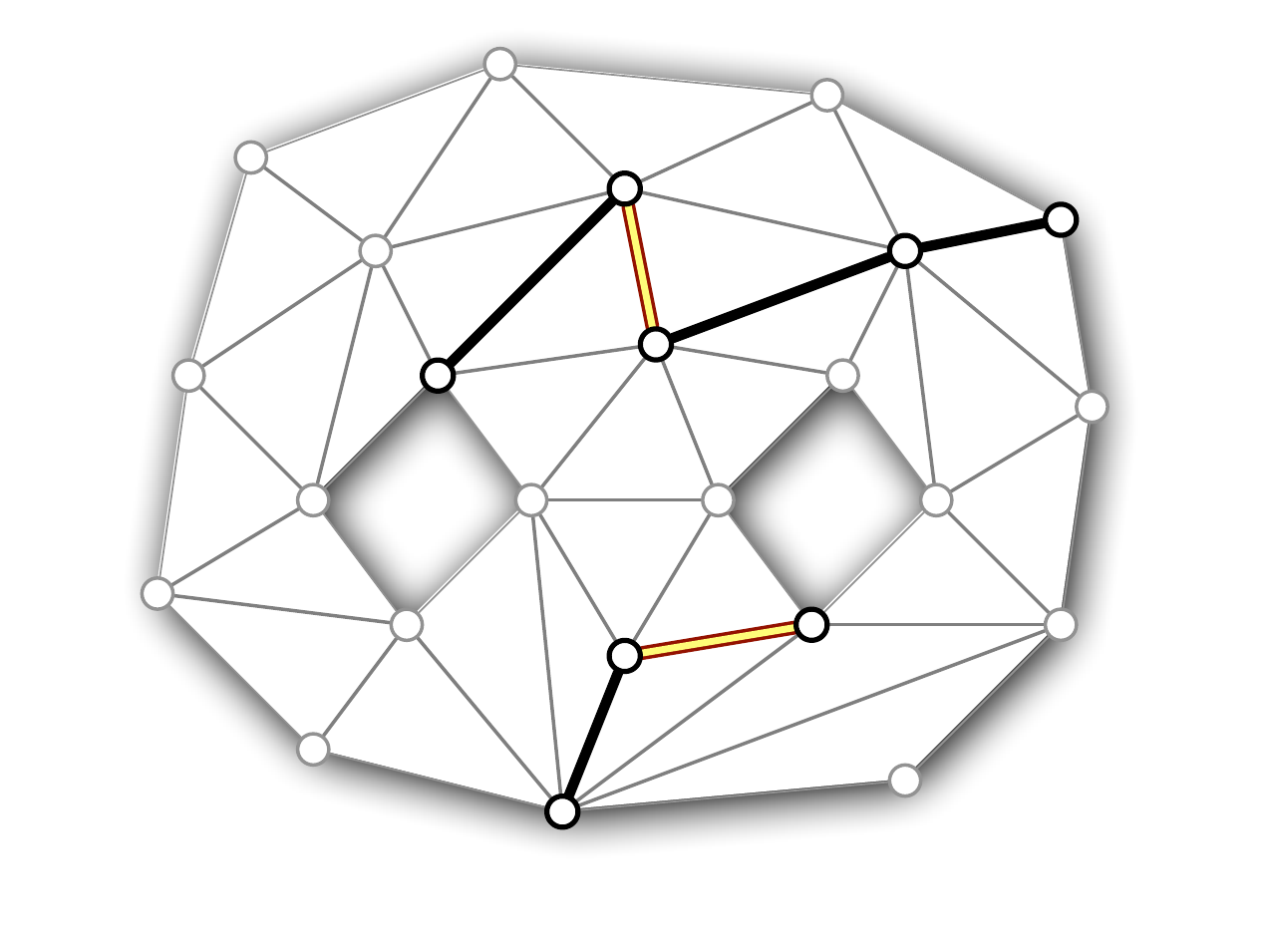}
\end{tabular}
\caption{Left: A forest-cotree decomposition of the graph in Figure \ref{fig:duality}; thick doubled lines indicate edges in $L$.
Right: The resulting system of arcs.  Compare with Figure \ref{fig:tree-coforest}.}
\label{fig:forest-cotree}
\end{figure}

For any cycle $\gamma$ and any index $i$, let $x_i(\gamma)$ denote the number of times $\gamma$ crosses the arc~$a_i$.  The \EMPH{crossing vector} $x(\gamma)$ of $\gamma$ is the vector $(x_1(\gamma), \dots, x_\beta(\gamma))$.  The crossing vector of a set of cycles is the sum of the crossing vectors of its elements.

Extending the notion of crossing vectors to even subgraphs is rather subtle, because we cannot consistently define when a path crosses an even subgraph $H$.  Instead, we consider crossings between a path and the cycles in an arbitrary cycle decomposition of the even subgraph $H$.  Different cycle decompositions may yield different numbers of crossings, so even subgraphs do not have well-defined crossing vectors; however, the \emph{parity} of the crossing number is independent of the cycle decomposition.  The \EMPH{crossing parity vector} of any even subgraph $H$ is the bit vector $\bar{x}(H) = (\bar{x}_1, \dots, \bar{x}_\beta)$, where $\bar{x}_i = 1$ if the arc $a_i$ crosses (any cycle decomposition of) $H$ an odd number of times, and $\bar{x}_i = 0$ otherwise.

\begin{lemma}
Two even subgraphs are $\Z_2$-homologous if and only if their crossing parity vectors (with respect to the same system of arcs) are equal.
\end{lemma}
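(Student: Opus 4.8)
The plan is to establish both directions by relating the crossing parity vector to the structure of boundary subgraphs. The key fact, which I would isolate as the engine of the argument, is that the crossing vector of any cycle (with respect to the fixed system of arcs $a_1,\dots,a_\beta$) is a homomorphism from the free homotopy (hence homology) structure: more precisely, since $\Sigma \setminus (a_1 \cup \dots \cup a_\beta)$ is a topological disk $D$, any cycle $\gamma$ decomposes, as it re-enters and exits $D$, into a word in the arcs, and the mod-2 count $\bar x_i(\gamma)$ of crossings with $a_i$ depends only on the homology class of $\gamma$. First I would verify this: if $\gamma$ and $\gamma'$ are freely homotopic, a homotopy between them sweeps out a region whose intersection with each $a_i$ is a $1$-manifold with boundary on $\gamma \cup \gamma' \cup \partial\Sigma$; since arcs are internally disjoint from $\partial\Sigma$ and end on the boundary, a parity/Euler-characteristic count on this $1$-manifold shows $\bar x_i(\gamma) = \bar x_i(\gamma')$. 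Extending from homotopy to $\Z_2$-homology is the same kind of argument applied to a boundary subgraph: if $\gamma$ is null-homologous, it bounds a subset $U$ of faces, and $a_i \cap U$ is again a collection of subarcs and loops whose endpoints on $\partial U = \gamma$ come in pairs matched by how $a_i$ enters and leaves $U$, forcing $\bar x_i(\gamma)$ to be even.

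With that in hand, the forward direction is essentially immediate. Suppose $H$ and $H'$ are $\Z_2$-homologous even subgraphs, so $H \oplus H'$ is a boundary subgraph $\partial U$. Pick cycle decompositions of $H$ and of $H'$ (Lemma~\ref{lem:decomposition}); their union, together with the edges of $\partial U$, can be re-decomposed, and additivity of crossing parity under symmetric difference of even subgraphs gives $\bar x_i(H) \oplus \bar x_i(H') = \bar x_i(\partial U)$. By the paragraph above applied to each cycle in a decomposition of $\partial U$, this last quantity is $0$ for every $i$, so $\bar x(H) = \bar x(H')$. I would be careful here to note that $\bar x_i$ is well-defined on even subgraphs (independent of the choice of cycle decomposition) precisely because any two cycle decompositions of the same even subgraph differ by a re-pairing of edges at vertices, which changes the crossing count with each $a_i$ by an even number --- this is exactly the observation already made in the text just before the lemma, so I may cite it rather than reprove it.

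For the converse I would use a dimension-counting / rank argument rather than constructing an explicit bounding region. The crossing parity vector gives a linear map $\bar x \colon H_1(\Sigma;\Z_2) \to \Z_2^\beta$ (linearity over $\Z_2$ follows from additivity under symmetric difference, which I have already used). The forward direction shows this map is well-defined on homology classes. It suffices to show it is injective; since both source and target are $\Z_2^\beta$ with $\beta = 2g + b - 1$, injectivity is equivalent to surjectivity, and for surjectivity I would exhibit, for each leftover edge $e_i \in L$, the arc $a_i$ itself (as a cycle, closing up through the boundary if necessary) or rather the dual object: the natural "companion" cycle that crosses $a_i$ exactly once and crosses $a_j$ zero times for $j \neq i$. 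Concretely, $a_i$ together with $F$ is dual to the fact that $e_j^* \in C^*$ for $j \neq i$, so a small cycle around $e_i$ (equivalently the cycle in $G$ through $e_i$ and tree edges of $C$, pushed to cross arcs) realizes the $i$-th standard basis vector. Producing these $\beta$ cycles with crossing vectors $\mathbf{e}_1,\dots,\mathbf{e}_\beta$ shows the map is onto, hence injective, hence equal crossing parity vectors force equal homology classes.

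The main obstacle I anticipate is the careful bookkeeping in the topological parity argument of the first paragraph --- specifically, making rigorous the claim that a homotopy (or a bounding surface $U$) meets each arc $a_i$ in a $1$-manifold whose boundary behavior forces the mod-2 count, and handling the degenerate cases where arcs pass through interiors of edges or where a cycle decomposition cycle runs along an arc rather than crossing it transversally. The cleanest route is probably to work in the sliced surface $\Sigma \setminus A \cong D$: there $H$ (or a homotopy/bounding region) becomes a collection of curves in a disk whose endpoints on $\partial D$ are glued in a prescribed pattern, and the parity statement reduces to the elementary fact that a simple closed curve in a disk crosses any fixed diameter an even number of times. Everything else --- linearity, well-definedness on even subgraphs, and the rank count --- is routine once this geometric core is nailed down.
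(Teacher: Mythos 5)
Your forward direction is sound and is essentially the paper's own argument: boundary subgraphs decompose into face boundaries, each arc crosses a face boundary evenly, and additivity of crossing parity under symmetric difference does the rest. (In fact the paper's proof stops there; the converse is left implicit.) The genuine gap in your proposal is in the converse, specifically in the surjectivity step of your rank argument. The linearity and the dimension count ($H_1(\Sigma;\Z_2)\cong\Z_2^\beta$ with $\beta=2g+b-1$, as stated in Section~2.4) are fine, but the ``companion cycles'' you offer to realize the standard basis vectors are not actually constructed. ``A small cycle around $e_i$'' is not a cycle of $G$; ``the cycle in $G$ through $e_i$ and tree edges of $C$'' confuses primal and dual structure --- in the forest-cotree decomposition it is $C^*$ that is a tree (in $G^*\setminus(\partial G)^*$), and the primal edge set $C$ need not contain a path of $G$ joining the endpoints of $e_i$, so this object may not exist, and even if it did you give no reason its crossing parities with the arcs $a_1,\dots,a_\beta$ are $\delta_{ij}$ (crossings of a primal cycle with a primal arc are geometric crossings, not shared edges, unlike the dual-arc signatures of Section~3.2). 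Likewise, closing $a_i$ up along $\partial\Sigma$ is not shown to cross $a_j$ evenly for $j\ne i$: distinct arcs overlap along the shared forest $F$, and the boundary segment used to close up can sweep past endpoints of other arcs, creating crossings you have not accounted for.

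The gap is fixable in two ways, and you should commit to one. Either prove injectivity directly, mirroring the paper's proof of the corresponding statement for dual-arc signatures: if every cycle decomposition of $H$ crosses each $a_i$ evenly, repeatedly reroute $H$ along a subpath of $a_i$ between two consecutive crossing points (an exchange that changes $H$ by a boundary subgraph and reduces the number of crossings), until the resulting homologous even subgraph avoids crossing all arcs and hence lies in the disk $\Sigma\setminus(a_1\cup\dots\cup a_\beta)$, where it is null-homologous; this needs some care because $H$ and the arcs may share edges, but it avoids constructing basis cycles altogether. Or, if you prefer surjectivity, build the $i$-th basis cycle through the edge $e_i$ itself, which is used by $a_i$ and by no other arc: join the two copies of $e_i$ in the cut-open disk $G\snip(a_1\cup\dots\cup a_\beta)$ by a path (the sliced graph is connected since the sliced surface is a disk) and project; the resulting closed walk crosses $a_i$ once, across $e_i$, and crosses no other arc, since its remaining portion stays inside the disk. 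Either repair makes your dimension-count strategy go through; as written, the key step is asserted rather than proved.
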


\begin{proof}
Every boundary subgraph is the symmetric difference of facial cycles.  Any non-contractible loop or arc crosses any facial cycle an even number of times; thus, the crossing parity vector of any facial cycle is the zero vector.  Every pair of even subgraphs $H$ and $H'$ satisfies the identity $x(H\oplus H') = x(H) \oplus x(H')$.  Thus, the crossing parity vector of any boundary subgraph is the zero vector.
\end{proof}

\begin{lemma}
We can compute the crossing parity vector of any even subgraph, with respect to any fixed system of
arcs, in $O(\beta)$ time per edge of the subgraph.
\end{lemma}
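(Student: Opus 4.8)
The plan is to give a concrete algorithm: first preprocess the combinatorial surface so that the system of arcs $a_1, \dots, a_\beta$ is represented explicitly, then show that for an arbitrarily supplied even subgraph $H$ we can compute each coordinate $\bar{x}_i$ with a constant amount of work per edge of $H$ (after the preprocessing, whose cost is charged elsewhere). The first ingredient I would invoke is the forest-cotree decomposition $(\partial G, F, L, C)$ already constructed above together with the resulting arcs $a_i = F \cup \{e_i\}$; by Euler's formula there are exactly $\beta$ leftover edges, and slicing along the $a_i$ yields a disk. From this decomposition I would record, for each edge $e$ of $G$, a bit vector $\chi(e) \in \Z_2^\beta$ whose $i$-th entry indicates whether $e$ ``belongs to'' arc $a_i$ in the sense needed below. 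The nontrivial point is identifying what this per-edge vector should be.

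The key idea is the standard duality-flavored observation used in tree-cotree homology: the crossing parity $\bar{x}_i(H)$ equals, mod $2$, the number of edges of $H$ whose dual edge lies on the fundamental cycle of $e_i$ in the cotree $C^*$. Concretely, for each leftover edge $e_i$, the dual edge $e_i^*$ together with the spanning tree $C^*$ of $G^* \setminus (\partial G)^*$ determines a unique cycle $\sigma_i^*$ in $C^* \cup \{e_i^*\}$; dually, this corresponds to the arc $a_i$. A cycle $\gamma$ in $G$ crosses $a_i$ exactly when it traverses an edge whose dual lies on $\sigma_i^*$, i.e., an edge $e$ with $i$ in the support of a precomputed vector $\chi(e)$. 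Since the parity of total crossings of any cycle decomposition of $H$ with $a_i$ is well-defined (as established in the discussion preceding the lemma), we get
\[
\bar{x}_i(H) \;=\; \bigoplus_{e \in H} \chi_i(e),
\]
and hence $\bar{x}(H) = \bigoplus_{e \in H} \chi(e)$ in $\Z_2^\beta$. For the running time: I would argue that the vectors $\chi(e)$ can all be computed in $O(\beta n)$ total time by rooting the cotree $C^*$ and, for each leftover edge $e_i$, walking its fundamental cycle and toggling bit $i$ (a standard $O(\beta n)$ computation, or $O(n \cdot \beta)$ naively, which suffices here); this is the preprocessing cost, not counted against the per-edge bound. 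Then, given $H$, we simply XOR together the $|H|$ vectors $\chi(e)$, each of which is $\beta$ bits, giving $O(\beta)$ time per edge of $H$, as claimed.

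The main obstacle I anticipate is not the running-time bookkeeping but making the correctness argument fully rigorous in the presence of boundary and of even subgraphs that are not simple cycles. Two subtleties must be handled: (i) an edge of $F$ may be traversed twice by an arc $a_i$ when $a_i$ is a loop returning to its own boundary cycle, so the notion of ``crossing'' must be set up via an arbitrarily small perturbation (as in the paper's definition of crossing) rather than naive edge incidence; and (ii) $H$ is a general even subgraph, so one must pass through a cycle decomposition (Lemma~\ref{lem:decomposition}) and then invoke the already-proved fact that the crossing \emph{parity} is decomposition-independent, which is exactly what makes $\bar{x}$ well-defined. Once those points are in place, the additivity $x(H \oplus H') = x(H) \oplus x(H')$ over $\Z_2$ and the identification of $\bar{x}_i$ with a sum of per-edge bits finish the proof, and the time bound is immediate.
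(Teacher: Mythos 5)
There is a genuine gap: the central identity your argument rests on is false. You claim that the crossing parity $\bar{x}_i(H)$ equals, mod 2, the number of edges of $H$ whose duals lie on the fundamental cycle of $e_i^*$ in the cotree $C^*$, i.e., that a cycle crosses the primal arc $a_i$ exactly when it traverses such an edge. But $a_i$ is a walk in the \emph{primal} graph built from $F\cup\{e_i\}$: a cycle crosses it at \emph{vertices} (as dictated by the rotation system), possibly after running along shared edges of $F$, so ``edge $e$ crosses $a_i$'' is not even a well-defined per-edge notion, and the cotree fundamental cycle is the wrong edge set in any case. Concrete counterexample: triangulate an annulus with outer triangle $u_1u_2u_3$, inner triangle $v_1v_2v_3$, and spokes $u_iv_i$, so $\beta=1$. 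A valid forest--cotree decomposition has $F=\varnothing$, $C^*=\{(u_1v_1)^*,(u_2v_2)^*\}$, and $L=\{u_3v_3\}$, so $a_1$ is the spoke $u_3v_3$. The even subgraph $H=v_1v_2v_3$ is not null-homologous, and any curve isotopic to the core of the annulus must cross an arc joining the two boundary circles an odd number of times, so $\bar{x}_1(H)=1$; yet $H$ contains no spoke, so your formula returns $0$. What you have written is essentially the \emph{second} method of this section (homology signatures obtained by XOR-ing per-edge vectors defined by a system of \emph{dual} arcs from a tree--coforest decomposition); that method is correct, but it computes a different vector with respect to a different system of arcs, not the crossing parity vector with respect to the given primal arcs, which is what this lemma (and its later use with the greedy system and the crossing bound of Lemma~\ref{lem:crossing}) requires.

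The intended proof is far more direct and avoids per-edge indicators entirely: compute a cycle decomposition of $H$ in $O(1)$ time per edge by pairing consecutive edge-ends at each vertex as in Lemma~\ref{lem:decomposition}, and then, for each of the $\beta$ arcs $a_j$, count its crossings with each cycle $\gamma_i$ by walking along $\gamma_i$ once, in time proportional to the number of edges of $\gamma_i$; summing over the $\beta$ arcs gives $O(\beta)$ time per edge of $H$. (A per-edge cocycle representing the ``crossings with $a_i$'' functional does exist, since that functional is linear on the cycle space---take the set of edges crossed by a small push-off of $a_i$---but it is not the cotree fundamental cycle, and constructing it correctly is essentially the dual-arc signature construction of Section~\ref{sec:characterizing_signatures}, not a shortcut within this one.)
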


\begin{proof}
We can compute a cycle decomposition $\gamma_1, \dots, \gamma_r$ of $H$ in $O(1)$ time per edge, by following the proof of Lemma~\ref{lem:decomposition}.
We can compute the number of crossings between any cycle $\gamma_i$ and any arc $a_j$ in time proportional to the number of edges in $\gamma_i$.
\end{proof}

We can easily construct an \emph{arbitrary} forest-cotree decomposition, and thus an arbitrary system of arcs, in $O(n)$ time using whatever-first search, but our algorithms require a decomposition with a particular forest $F$ and a particular dual spanning tree $C^*$.  Let $G/\partial G$ denote the graph obtained from $G$ by \emph{contracting} the entire subgraph $\partial G$---both vertices and edges---to a single vertex~$x$.  Using the algorithm of Henzinger \etal~\cite{hkrs-fspap-97}, we compute the single-source shortest-path tree $T$ in $G/\partial G$ rooted at~$x$ in $O(n)$ time.\footnote{This running time requires that $g = O(n^{1-\e})$ for some constant $\e>0$.  However, we can safely assume $g = o(\log n)$, since otherwise our minimum-cut algorithms are slower than textbook algorithms for arbitrary graphs.}  Let $F$ be the subgraph of $G$ corresponding to~$T$.  Each component of $F$ is a tree of shortest paths from a boundary vertex to a subset of the non-boundary vertices of $G$.  

Now for each edge $e$ that is \emph{not} in the forest $F$ or the boundary subgraph $\partial G$, let $\ell(e)$ denote the length of the unique arc in the subgraph $F \cup\set{e}$.  We can easily compute $\ell(e)$ for each non-forest edge $e$ in $O(n)$ time.  Finally, let $C^*$ denote the maximum spanning tree of $G^* \setminus (F\cup \partial G)^*$ with respect to the arc lengths $\ell(e)$.

Finally, for each edge $e_i\in L$, let $\sigma_i$ and $\tau_i$ denote the unique directed paths in $F$ from the boundary of~$G$ to the endpoints of $e_i$, and let $S := \set{\sigma\!_1, \dots, \sigma\!_\beta, \allowbreak \tau_1, \dots, \tau_\beta}$.  By construction of $F$, every element of~$S$ is a (possibly empty) shortest directed path.  Moreover, because $a_i = \sigma_i \cdot e_i \cdot \rev(\tau_i)$ for each index~$i$, every non-null-homologous cycle in $G$ must intersect at least one path in $S$.  We can easily compute each path in $S$ in $O(n)$ time.  The final greedy system of arcs is the set $A := \set{a_1, a_2, \dots, a_\beta}$.

Exchange arguments by Erickson and Whittlesey \cite{ew-gohhg-05} and Colin de Verdière \cite{c-scgsp-10} both imply that every arc in the greedy system is tight, and moreover that the greedy system of arcs has minimum total length among all systems of arcs.\footnote{Specifically, Colin de Verdière’s argument implies that the greedy system of arcs is a minimum-length basis in $G$ for the first relative homology group $H_1(\Sigma, \partial\Sigma)$ \cite[Section 3]{c-scgsp-10}.  Thus, each arc in the greedy system is as short as possible in its \emph{relative homology} class.}

\subsection{Homology signatures via tree-coforest decompositions}
\label{sec:characterizing_signatures}

Our second method associates a vector of $\beta$ bits with each edge $e$, called the \EMPH{signature} of~$e$; the homology class of any even subgraph is characterized by the bit-wise exclusive-or of the signatures of its edges.

Again, our construction is based on one of two natural generalizations of tree-cotree
decompositions~\cite{e-dgteg-03} to surfaces with boundary; the other generalization is used for
computing crossing parity vectors as described above.
We define a \EMPH{tree-coforest decomposition} of $G$ to be any partition $(T, L, F)$ of the edges of $G$ into three edge-disjoint subgraphs with the following properties:
\begin{itemize}\itemsep0pt
\item $T$ is a spanning tree of $G$.
\item $F^*$ is a spanning \emph{forest} of $G^*$, that is, an acyclic subgraph that contains every vertex.
\item Each component of $F^*$ contains a single dual boundary vertex $\delta_i^*$.
\item Finally, $L$ is the set of leftover edges $E \setminus (T\cup F)$.
\end{itemize}
Euler's formula implies that there are exactly~$\beta$ edges in $L$; arbitrarily index these edges $e_1, \dots, e_\beta$.  For each edge $e_i\in L$, adding the corresponding dual edge $e_i^*$ to $F^*$ creates a new dual path $\dualarc_i$, which is either a simple path between distinct boundary vertices, or a nontrivial loop from a boundary vertex back to itself; in the second case, $\dualarc_i$ may traverse some edges of $F^*$ twice.  We can treat each path $\dualarc_i$ as a simple arc in the \emph{abstract} surface $\Sigma$; slicing along these $\beta$ arcs transforms $\Sigma$ into a topological disk.
See Figure~\ref{fig:tree-coforest}.  We call the set $\set{\dualarc_1, \dualarc_2, \dots, \dualarc_\beta}$ a \EMPH{system of dual arcs}.

\begin{figure}[htb]
\centering\footnotesize\sf
\includegraphics[scale=0.45]{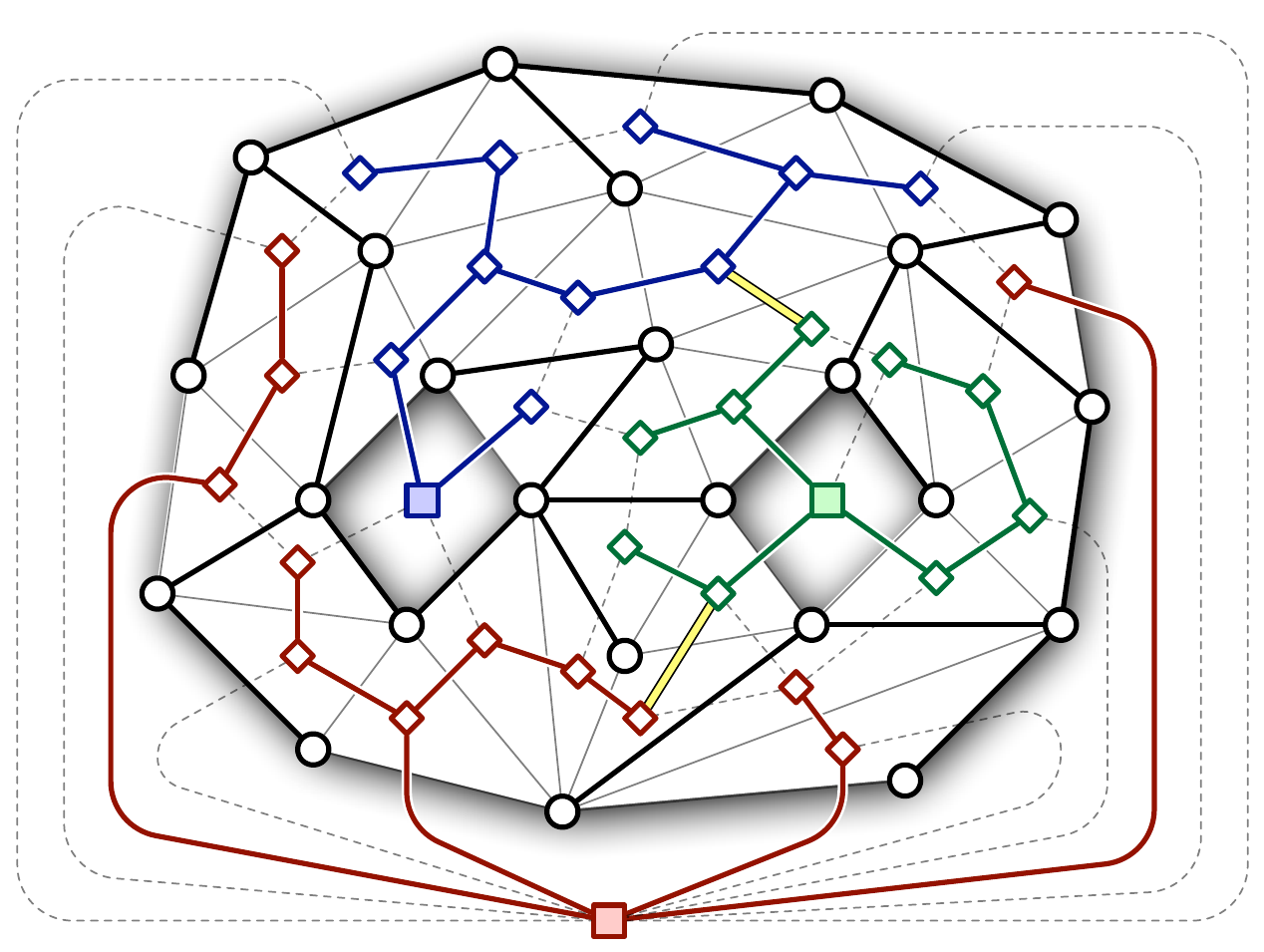} \qquad
\includegraphics[scale=0.45]{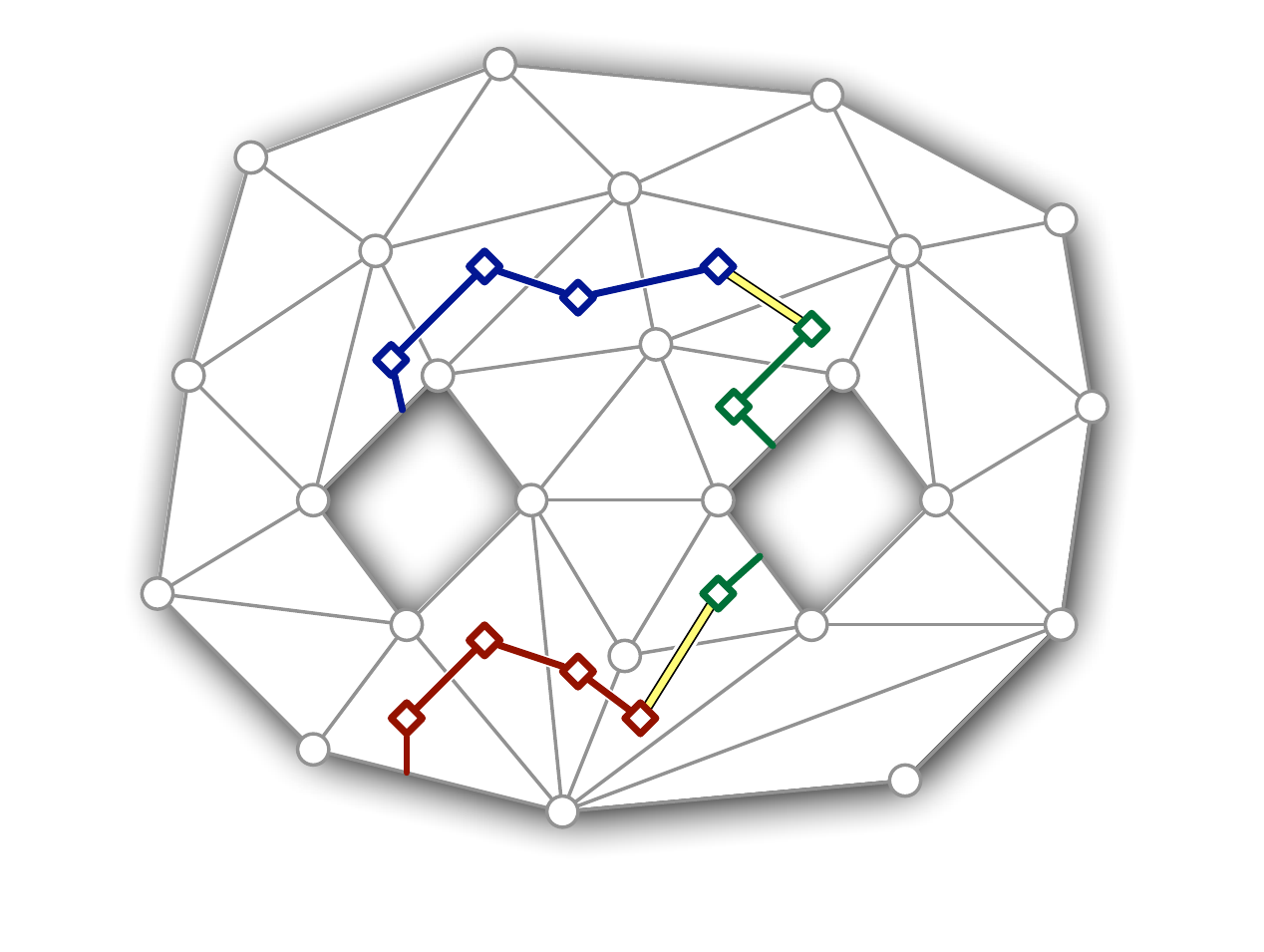}
\caption{Left: A tree-coforest decomposition of the graph in Figure~\ref{fig:duality}; doubled lines indicate edges in $L$.
Right: The resulting system of dual arcs.  Compare with Figure \ref{fig:forest-cotree}.}
\label{fig:tree-coforest}
\end{figure}

Finally, for each edge $e$ in $G$, we define its signature \EMPH{$[e]$} to be the $\beta$-bit vector whose $i$th bit is equal to~$1$ if and only if $e$ crosses $\dualarc_i$ (that is, if $\dualarc_i$ traverses the dual edge $e^*$) an odd number of times. The signature $[H]$ of an even subgraph $H$ is the bitwise exclusive-or of the signatures of its edges.  Similarly, the signature~$[\cycle]$ of a cycle $\cycle$ is the bitwise exclusive-or of the signatures of the edges that $\cycle$ traverses an odd number of times.

Let \EMPH{$h \oplus h'$} denote the bitwise exclusive-or of two homology signatures $h$ and $h'$, or equivalently, their sum as elements of the homology group~$(\Z_2)^\beta$.  The identities $[H \oplus H'] = [H] \oplus [H']$ and $[\cycle\cdot\cycle'] = [\cycle] \oplus [\cycle']$ follow directly from the definitions.

\begin{lemma}
\label{lem:sign}
We can preprocess $G$ in $O(\beta n)$ time, so that the signature $[\cycle]$ of any cycle can be computed in $O(\beta)$ time per edge.
\end{lemma}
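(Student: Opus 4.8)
The plan is to build an explicit tree-coforest decomposition and then use it to compute all edge signatures up front, after which each cycle signature is just a running exclusive-or. First I would construct a tree-coforest decomposition $(T, L, F)$ in $O(n)$ time by whatever-first search: grow a spanning forest $F^*$ in the dual graph $G^*$ so that each component contains exactly one dual boundary vertex $\delta_i^*$ (start the search simultaneously from all $b$ dual boundary vertices, or equivalently contract them and grow a spanning tree), then let $T$ be a spanning tree of the primal edges not dual to $F^*$—Euler's formula guarantees this is possible and that the leftover set $L = E \setminus (T \cup F)$ has exactly $\beta = 2g+b-1$ edges. Index the leftover edges $e_1, \dots, e_\beta$.

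Next I would compute, for each leftover edge $e_i$, the dual arc $\dualarc_i$ obtained by adding $e_i^*$ to $F^*$; this arc is a path (possibly a loop) in $F^* \cup \{e_i^*\}$ and can be traced in $O(n)$ time by walking in $F^*$ from the two endpoints of $e_i^*$ up to their boundary vertices (or to their meeting point). For each such arc I mark, for every primal edge $e$, whether $e^*$ lies on $\dualarc_i$; accumulating over all $i$ this assigns to each edge $e$ its $\beta$-bit signature $[e]$, and the whole pass costs $O(\beta n)$ time since each of the $\beta$ arcs has length $O(n)$. (Edges in $T$ get signature $0$ automatically; edges in $F$ may receive several $1$ bits; each $e_i$ gets at least the $i$th bit.) This completes the $O(\beta n)$ preprocessing. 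Given any cycle $\cycle$, its signature $[\cycle]$ is the bitwise exclusive-or of $[e]$ over the edges $e$ traversed an odd number of times by $\cycle$; walking along $\cycle$ and XOR-ing in the precomputed $\beta$-bit signature of each traversed edge (tracking parity of traversals, or simply XOR-ing once per traversal since XOR-ing a vector twice cancels) takes $O(\beta)$ time per edge, as claimed.

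The main obstacle—really the only substantive point—is verifying that the signatures so defined genuinely characterize $\Z_2$-homology classes, i.e.\ that $[\cycle]$ depends only on the homology class of $\cycle$ and that two even subgraphs are homologous iff their signatures agree. This is handled exactly as in the crossing-parity-vector case of the previous subsection: the identities $[H \oplus H'] = [H] \oplus [H']$ and $[\cycle \cdot \cycle'] = [\cycle] \oplus [\cycle']$ are immediate from the definition, so it suffices to check that every facial cycle has signature $0$. Each dual arc $\dualarc_i$ crosses any facial cycle an even number of times (equivalently, $\dualarc_i$ enters and leaves each face of $G$ the same number of times, since its endpoints lie on the boundary, inside disk faces glued to $\Sigma$), hence each boundary subgraph—being a symmetric difference of facial cycles—has signature $0$; conversely, since the $\beta$ dual arcs cut $\Sigma$ into a disk, a cycle with signature $0$ crosses the cut locus an even number of times along each arc and can therefore be pushed off all the arcs, landing inside the disk where it is null-homologous. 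Thus the $O(\beta)$-time-per-edge computation is correct, and the dominant cost is the one-time $O(\beta n)$ preprocessing.
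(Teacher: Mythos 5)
Your proposal is correct and follows essentially the same route as the paper: build the tree-coforest decomposition in $O(n)$ time by whatever-first search in the dual with the boundary vertices merged (or equivalently searched simultaneously), trace each dual arc $\dualarc_i$ in $O(n)$ time to record edge signatures for a total of $O(\beta n)$ preprocessing, and then XOR precomputed $\beta$-bit edge signatures along the cycle at $O(\beta)$ per edge. The additional verification that signatures characterize homology classes is not needed for this lemma (it is the content of the subsequent lemma in the paper), but it does no harm.
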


\begin{proof}
A tree-coforest decomposition can be computed in $O(n)$ time as follows.  First construct a graph~$G'$ by identifying all the dual boundary vertices in $G^*$ to a single vertex.  Compute a spanning tree of $G'$ by whatever-first search; the edges of this spanning tree define an appropriate dual spanning forest~$F^*$.  Construct the subgraph $G\setminus F$ and compute a spanning tree $T$ via whatever-first search.  Finally, let $L = G\setminus (T\cup F)$.  With the decomposition in hand, it is straightforward to compute each path $\dualarc_i$ in $O(n)$ time, and then compute each edge signature in $O(\beta)$ time.
\end{proof}

\begin{lemma}
An even subgraph $H$ of $G$ is null-homologous in $\Sigma$ if and only if $[H] = 0$.
\end{lemma}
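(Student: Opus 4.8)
The plan is to establish both directions of the equivalence between null-homology of an even subgraph $H$ and the vanishing of its signature $[H]$. The forward direction should be essentially a computation with the additive identities already in hand: if $H$ is null-homologous, then $H = \partial U$ for some set of faces $U$, so $H$ is the symmetric difference of the facial cycles bounding the faces in $U$. Since $[H \oplus H'] = [H] \oplus [H']$, it suffices to show that every facial cycle has zero signature. For a facial cycle $\partial f$, its signature is determined by how many times each dual arc $\dualarc_i$ crosses $\partial f$, equivalently how many times $\dualarc_i$ passes through the dual vertex $f^*$. But $\dualarc_i$ is a path (or loop) in $F^* \cup \{e_i^*\}$, and any walk in a graph enters and leaves each interior vertex the same number of times; more carefully, a path through $f^*$ uses an even number of dual edges incident to $f^*$ (counting the endpoints correctly when $f^*$ is an endpoint of $\dualarc_i$, which cannot happen since $\dualarc_i$ ends at dual boundary vertices and $f$ is an interior face). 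Hence the $i$th bit of $[\partial f]$ is $0$ for every $i$, so $[\partial f] = 0$, and therefore $[H] = 0$.

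For the converse, I would argue that the signature map $H \mapsto [H]$ descends to a map on homology classes, and then show this induced map is injective. The first part follows from the forward direction together with the additivity identity: if $H$ and $H'$ are homologous, then $H \oplus H'$ is null-homologous, so $[H] \oplus [H'] = [H \oplus H'] = 0$, giving $[H] = [H']$. Thus $[\cdot]$ is a well-defined $\Z_2$-linear map from the first homology group $(\Z_2)^\beta$ to $(\Z_2)^\beta$. To finish, I would show this map is surjective, which for a linear map between vector spaces of equal finite dimension implies injectivity, and injectivity is exactly the statement that $[H] = 0$ forces $H$ null-homologous. Surjectivity is witnessed by the $\beta$ cycles naturally associated to the leftover edges: for each $e_i \in L$, the subgraph $T \cup \{e_i\}$ contains a unique cycle $\cycle_i$, and I claim $[\cycle_i]$ is the $i$th standard basis vector. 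Indeed, $\cycle_i$ crosses $\dualarc_j$ once if $i = j$ (the arcs $\dualarc_j$ are built precisely to be ``dual'' to the cycles $\cycle_i$ via the tree-coforest structure, so $\cycle_i$ traverses $e_i$ but no other leftover edge, and crosses $\dualarc_j$ iff $e_i$ and $e_j$ interact in the decomposition) and zero times otherwise; this is the standard intersection-pairing fact for tree-cotree-type decompositions. Since the $\beta$ vectors $[\cycle_1], \dots, [\cycle_\beta]$ span $(\Z_2)^\beta$, the map $[\cdot]$ is surjective, hence bijective, and the lemma follows.

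The main obstacle is the surjectivity (equivalently, the claim that $[\cycle_i] = e_i$), which requires unwinding why the system of dual arcs from a tree-coforest decomposition is ``dual'' to the system of cycles from the underlying spanning tree $T$. The key point is that $\cycle_i$ uses $e_i$ from $L$ and otherwise only tree edges of $T$, while $\dualarc_j$ uses $e_j^*$ and otherwise only edges of $F^*$; since $T$, $F$, and $L$ partition the edges, the only leftover edge $\cycle_i$ can traverse is $e_i$ and the only primal edge whose dual $\dualarc_j$ can traverse outside $F^*$ is $e_j$. One then checks that a tree-path cycle $\cycle_i$ crosses a coforest-path arc $\dualarc_j$ exactly when their defining leftover edges coincide — this is the analogue in the surface-with-boundary setting of the classical fact that the loops and co-loops of a tree-cotree decomposition form dual bases for homology and cohomology. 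I would cite Eppstein~\cite{e-dgteg-03} and the earlier discussion in Section~\ref{sec:characterizing_signatures} for this, keeping the proof of the lemma itself short by reducing everything to the additivity identities and the single combinatorial observation about facial cycles having trivial signature.
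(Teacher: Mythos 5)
Your proof is correct, but the converse direction takes a genuinely different route from the paper. The paper argues geometrically and self-containedly: if $[H]=0$, it repeatedly reroutes $H$ across subpaths $\dualarc_i[x,y]$ to cancel pairs of intersections with each dual arc, producing a homologous even subgraph that avoids every $\dualarc_i$ and hence lies in the disk $\Sigma\setminus\bigcup_i \dualarc_i$, so it is null-homologous. You instead argue by linear algebra: the signature map descends to a $\Z_2$-linear map on homology classes (via the forward direction plus additivity), it is surjective because the fundamental cycles $\cycle_i$ of $T\cup\{e_i\}$ satisfy $[\cycle_i]=b_i$, and a surjective linear map between $\beta$-dimensional spaces is injective, which is exactly the claim. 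Your key computation is right, and is in fact simpler than your crossing-count phrasing suggests: since $(T,L,F)$ partitions the edges, every tree edge has signature $0$ and $[e_i]=b_i$ (the arc $\dualarc_j$ traverses $e_j^*$ exactly once and otherwise only edges of $F^*$), so $[\cycle_i]=b_i$ immediately. The trade-off is that your argument leans on the background fact that the first homology group is $(\Z_2)^\beta$ with $\beta=2g+b-1$ (stated in the preliminaries but imported from topology), whereas the paper's rerouting argument needs no dimension count and effectively reproves that input; in exchange, your route yields the useful byproduct that the fundamental cycles $\cycle_1,\dots,\cycle_\beta$ form an explicit homology basis and that signatures give a bijection between homology classes and $(\Z_2)^\beta$. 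Your forward direction matches the paper's in substance, with an even-incidence count at $f^*$ in place of the paper's appeal to contractibility of facial boundaries; both are fine, though you should note (as you implicitly do) that loops of $G^*$ at $f^*$ correspond to edges absent from $\partial f$, so the parity argument counts only non-loop incidences.
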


\begin{proof}
Let $H$ be a boundary subgraph of $G$.  Then by definition, $H$ is the boundary of the union of a subset~$Y$ of faces of $G$.  The boundary of any face $f$ is contractible in $\Sigma$ and therefore has signature $0$.  It follows immediately that $[H] = [\bigoplus_{f\in Y} \partial f] = \bigoplus_{f\in Y} [\partial f] = 0$.

Conversely, suppose $H$ crosses each arc $\dualarc_i$ an even number of times, so $[H]=0$.  Let $x$ and $y$ be two intersection points between $H$ and some arc $\dualarc_i$, and let $\dualarc_i[x,y]$ be the subpath of $\dualarc_i$ between those two points.  Replacing tiny segments of $H$ through~$x$ and~$y$ with two copies of $\dualarc_i[x,y]$ does not change the homology class of $H$, but does reduce the number of intersection points between $H$ and $\dualarc_i$.  It follows by induction that~$H$ is homologous to another even graph $H'$ that does not intersect any path~$\dualarc_i$ at all.  This even graph lies entirely within the disk $\Sigma\setminus \bigcup_i\dualarc_i$, and is therefore null-homologous.
\end{proof}


The following corollaries are now immediate.

\begin{corollary}
Two even subgraphs $H$ and $H'$ of $G$ are $\Z_2$-homologous in $\Sigma$ if and only if $[H] = [H']$.
\end{corollary}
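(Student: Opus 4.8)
The plan is to derive this directly from the preceding lemma characterizing null-homologous even subgraphs, together with the additivity of the signature map. First I would recall that $H$ and $H'$ are $\Z_2$-homologous precisely when their symmetric difference $H \oplus H'$ is a boundary subgraph, i.e.\ null-homologous in $\Sigma$; this is exactly the definition of the homology relation on even subgraphs given in Section~\ref{SS:homology}. Note that $H \oplus H'$ is itself an even subgraph, since the symmetric difference of two even subgraphs is even: at each vertex, the degree in $H \oplus H'$ differs from the sum of the degrees in $H$ and $H'$ by twice the number of shared incident edges, hence stays even.

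Next I would invoke the lemma stating that an even subgraph is null-homologous in $\Sigma$ if and only if its signature is the zero vector. Applying this to $H \oplus H'$ gives that $H$ and $H'$ are homologous if and only if $[H \oplus H'] = 0$. I would then use the identity $[H \oplus H'] = [H] \oplus [H']$, which follows immediately from the definition of the signature as the bitwise exclusive-or of edge signatures. Since $(\Z_2)^\beta$ is a group under $\oplus$ in which every element is its own inverse, the equation $[H] \oplus [H'] = 0$ is equivalent to $[H] = [H']$, which finishes the argument.

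There is essentially no obstacle here: every ingredient---the definition of homology on even subgraphs, the additivity identity $[H \oplus H'] = [H] \oplus [H']$, and the null-homology characterization---has already been established, so the corollary is a one-line consequence. The only point that deserves an explicit sentence is the observation that $H \oplus H'$ is again an even subgraph, so that the earlier lemma genuinely applies to it; this is immediate from the degree count above.
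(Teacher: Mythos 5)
Your argument is correct and is exactly the one the paper leaves implicit: apply the lemma that an even subgraph is null-homologous iff its signature vanishes to $H \oplus H'$, then use the identity $[H \oplus H'] = [H] \oplus [H']$ and the fact that every element of $(\Z_2)^\beta$ is its own inverse. Nothing is missing; the paper treats the corollary as immediate for the same reasons.
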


\begin{corollary}
Two cycles $\cycle$ and $\cycle'$ in $G$ are $\Z_2$-homologous in $\Sigma$ if and only if $[\cycle] = [\cycle']$.
\end{corollary}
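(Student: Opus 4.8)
The plan is to obtain this corollary immediately from the preceding one by regarding each cycle as the even subgraph it represents. First I would observe that any cycle $\cycle$ in $G$ — that is, any closed walk — determines an even subgraph, namely the set $H_\cycle$ of edges that $\cycle$ traverses an odd number of times: along the walk every vertex is entered and left the same number of times, so counting multiplicities modulo~$2$, each vertex has even degree in $H_\cycle$. By the definition of the signature of a cycle, $[\cycle] = [H_\cycle]$, and by definition two cycles are $\Z_2$-homologous in $\Sigma$ exactly when the even subgraphs $H_\cycle$ and $H_{\cycle'}$ are $\Z_2$-homologous in $\Sigma$.

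With this translation in place, the statement is just the previous corollary applied to $H_\cycle$ and $H_{\cycle'}$: these even subgraphs are $\Z_2$-homologous if and only if $[H_\cycle] = [H_{\cycle'}]$, that is, if and only if $[\cycle] = [\cycle']$. If one prefers not to invoke the even-subgraph corollary directly, the same conclusion follows from the identity $[\cycle] \oplus [\cycle'] = [H_\cycle \oplus H_{\cycle'}]$ together with the null-homology lemma (an even subgraph $H$ is null-homologous iff $[H] = 0$): the cycles $\cycle$ and $\cycle'$ are homologous iff $H_\cycle \oplus H_{\cycle'}$ is a boundary subgraph, iff $[H_\cycle \oplus H_{\cycle'}] = 0$, iff $[\cycle] = [\cycle']$.

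The only point needing a little care — and the sole obstacle, though a mild one — is the bookkeeping in passing from the combinatorial notion of a cycle as a closed walk (which may repeat vertices and edges) to the even subgraph it represents over $\Z_2$, and in confirming that homology of cycles is defined through exactly that representation. Once this is settled there is no further topological content; the corollary is a pure restatement of the even-subgraph case.
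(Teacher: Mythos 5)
Your proposal is correct and matches the paper's intent: the paper declares this corollary "immediate" from the null-homology lemma and the identity $[H\oplus H'] = [H]\oplus[H']$, after identifying a cycle with the even subgraph of edges it traverses an odd number of times, exactly as you do. Your write-up simply makes that identification and the reduction to the even-subgraph corollary explicit.
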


\section{Crossing Bounds and Triangulations}
\label{sec:crossing}

In this section, we describe an
algorithm to compute the minimum-weight even subgraph homologous with any specified even subgraph 
$H$ in $(g+b)^{O(g+b)}n\log \log n$ time.  In fact, our algorithm can be
modified easily to compute a minimum-weight representative in
\emph{every} homology class in the same asymptotic running time;
there are exactly $2^{2g+b-1}$ such classes.
Lemma~\ref{lem:cut-duality} implies our algorithm can be used to find a minimum $(s,t)$-cut in~$G^*$ in the same amount of time.

Our algorithm closely resembles the algorithm of Chambers \etal~\cite{ccelw-scsih-08} for computing a shortest splitting cycle; in fact, our algorithm is somewhat simpler.  Our algorithm is based on the key observation (Lemma \ref{lem:crossing})  that the shortest even subgraph in any homology class crosses any shortest path at most $O(g+b)$ times.  The first stage of our algorithm cuts the underlying combinatorial surface into a topological disk by a greedy system of arcs, as described in Section~\ref{sec:characterizing_crossings}.  Next, we enumerate all possible ways for an even subgraph to intersect each of the greedy arcs at most $O(g+b)$ times; we quickly discard any crossing pattern that does not correspond to an even subgraph in the desired homology class.  Each crossing pattern is realized by several (free) \emph{homotopy} classes of sets of non-crossing cycles; we show how to enumerate these homotopy classes in Section~\ref{SS:homotopy-triangulation}.  Then within each homotopy class, we find a minimum-length set of non-crossing cycles with each crossing pattern, essentially by reducing to a \emph{planar} instance of the minimum-cut problem.  The union of those cycles is an even subgraph in the desired homology class; we return the lightest such subgraph as our output.

\subsection{Crossing bound}
\label{SS:homotopy-crossing}

Our main technical lemma for this section establishes an upper bound on the number of crossings
between members of a greedy system of arcs and some minimum-weight even subgraph in any homology class.  Crossing-number arguments were first used by Cabello and Mohar \cite{cm-fsnsn-07} to develop the first subquadratic algorithms for shortest non-contractible and non-separating cycles in undirected surface embedded graphs; their arguments are the foundation of all later improvements of their algorithm \cite{c-mdpg-06, k-csnco-06, cce-msspe-13}.  Our proof is quite similar to the argument of Chambers \etal~\cite{ccelw-scsih-08} that the shortest \emph{splitting} cycle crosses any shortest path $O(g+b)$ times.  However, our new proof is simpler, because the structure we seek is a true subgraph, which need not be connected, rather than a single (weakly) simple closed walk.

As mentioned in Section~\ref{sec:characterizing_crossings}, we cannot consistently define when a shortest path crosses an even subgraph.  Instead, we consider the total number of crossings between a shortest path and the cycles in an arbitrary cycle decomposition.

\begin{lemma}
\label{lem:crossing}
Let $G$ be an edge-weighted graph embedded on a surface with genus $g$ and $b$ boundary components.
Let $A = \Set{a_1, a_2, \dots, a_\beta}$ be a greedy system of arcs.
Let $H$ be a subgraph of $G$.
There is a $\Z_2$-minimal even subgraph $H'$ homologous to $H$ such that for any cycle decomposition
$\gamma_1, \gamma_2, \dots, \gamma_r$ of $H'$, the total number of crossings between any arc $a_i$
and the cycles $\gamma_1, \gamma_2, \dots, \gamma_r$ is at most $12g+4b-5$.
\end{lemma}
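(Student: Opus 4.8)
The plan is to argue by an exchange argument, closely following Chambers et al.~\cite{ccelw-scsih-08} but exploiting the fact that we need only preserve a homology class, not connectivity, simplicity, or non-contractibility. I would first perturb the edge weights to be generic, so that each homology class contains a \emph{unique} minimum-weight even subgraph; since the conclusion is purely combinatorial, this costs nothing. Let $H'$ be the unique $\Z_2$-minimal even subgraph homologous to $H$. I will show directly that $H'$ crosses each $a_i$ at most $12g+4b-5$ times, interpreting the crossing number as the maximum over all cycle decompositions of $H'$; since $H'$ and $a_i$ are subgraphs of $G$, every crossing occurs at a vertex, and this maximum equals $\sum_{v\in a_i}\min(p_v,q_v)$, where $p_v,q_v$ count the edges of $H'$ at $v$ on the two sides of $a_i$'s passage through $v$. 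Bounding this quantity therefore bounds every cycle decomposition at once.

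Suppose some $a_i$ is crossed more than $12g+4b-5$ times. Recall from Section~\ref{sec:characterizing_crossings} that the greedy arc $a_i$ is \emph{tight}; consequently every subpath of $a_i$ is tight as well. Cut $\Sigma$ along the entire system of arcs $A$ to obtain a topological disk $D$. Each arc appears as two ``sides'' along $\partial D$; fixing a non-crossing cycle decomposition of $H'$ and cutting it along $A$ turns $H'$ into a family of pairwise non-crossing chords of $D$ whose endpoints lie at the crossings of $H'$ with the arcs. The crux of the argument is to show that, when $a_i$ is crossed this many times, there are two such crossings $x$ and $y$, joined by a subwalk $\alpha$ of $H'$, such that $\alpha$ and the intervening subwalk $\beta_0 := a_i[x,y]$ cobound an embedded disk in $\Sigma$. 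I would extract $x$ and $y$ by a pigeonhole argument on $\partial D$: the boundary of $D$ is a cyclic word in the $2\beta = 4g+2b-2$ arc-sides together with the boundary segments of $\Sigma$, and so it is subdivided into $O(g+b)$ maximal ``corner-free'' pieces; once $a_i$ carries more than $12g+4b-5$ crossing endpoints, one forces an \emph{innermost} chord both of whose endpoints lie on a single corner-free piece of a side of $a_i$, and the region this innermost chord cuts off—free of other chords and of boundary corners—maps to a genuine disk in $\Sigma$ bounded by $\alpha$ and $\beta_0$.

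Granting $x,y,\alpha,\beta_0$, I reroute $H'$ by setting $H'' := H'\oplus(\alpha\cdot\rev(\beta_0))$, the symmetric difference of $H'$ with the even subgraph of edges traversed an odd number of times by the closed walk $\alpha\cdot\rev(\beta_0)$. Then $H''$ is an even subgraph; it is homologous to $H'$ because $\alpha\cdot\rev(\beta_0)=\partial\Delta$ is a boundary subgraph, hence null-homologous; and, since $\alpha\subseteq H'$, since $\beta_0$ is tight, and since $\alpha\simeq\beta_0$ rel endpoints through $\Delta$, one obtains $w(H'')\le w(H')-w(\alpha)+w(\beta_0)<w(H')$ (strict by genericity, as $\alpha\neq\beta_0$ whenever $x,y$ are genuine crossings). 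This contradicts the minimality of $H'$. One also checks that the reroute creates no new crossing of $H''$ with any arc of $A$ at any vertex under any decomposition: $\alpha$ crosses no arc of $A$, and $\beta_0$ runs along $a_i$ and therefore crosses no arc of $A$ either, because greedy arcs are pairwise non-crossing. So the bookkeeping stays consistent.

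The step I expect to be the main obstacle is the existential claim of the second paragraph—turning ``many crossings with $a_i$'' into ``a null-homologous bigon between two of them that can be pushed across without producing crossings elsewhere.'' The delicate point is that an unconstrained shortcut along $a_i$ both changes the homology class (by the possibly nonzero class of the excursion it removes) and may sweep across other arcs of $A$; both pathologies are avoided exactly when the two chosen crossings are separated on $\partial D$ by a single corner-free segment of a side of $a_i$ containing no other chord endpoint, which makes the bigon a true disk. Pinning down that more than $12g+4b-5 = 2\cdot 2\beta + 4g - 1$ crossings force such a configuration—a pigeonhole over the $2\beta$ arc-sides of $\partial D$, with the extra summand accounting for how far an excursion can wind around the at most $2g$ handles of $\Sigma$ before it is necessarily null-homologous—is where the precise constant is nailed down.
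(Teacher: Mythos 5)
Your overall strategy (perturb to make the minimizer unique, then derive a contradiction via an exchange across a bigon bounded by an excursion of $H'$ and a tight subpath of $a_i$) only covers half of what is needed, and the half you flag as the ``main obstacle'' is in fact a genuine gap: a large number of crossings with $a_i$ does \emph{not} force two crossings $x,y$ whose intervening excursion $\alpha\subseteq H'$ cobounds an embedded disk with $a_i[x,y]$. Concretely, after slicing $\Sigma$ along $A$, nothing prevents \emph{all} chords of the cut-open polygon from joining distinct sides --- for instance, arbitrarily many pairwise non-crossing excursions that each leave $a_i^+$ and return on $a_i^-$ after wrapping a handle. In such a configuration there is no innermost chord with both endpoints on a single corner-free piece of a side of $a_i$, and indeed no bigon at all, no matter how large the crossing number; so the pigeonhole you propose cannot be repaired to always produce the configuration your exchange needs. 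What many crossings \emph{do} force is weaker: either an excursion homotopic rel endpoints to a subpath of the arc (your bigon case), or \emph{two mutually homotopic excursions}, and your proposal has no exchange for the second case.

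The paper's proof handles exactly this dichotomy. It perturbs weights only off the shortest-path forest $F$ (so subpaths of the shortest paths $\sigma$ building the arcs are uniquely shortest), contracts $\sigma$ in $G/\sigma$, and views the excursions of a cycle decomposition of $H'$ as loops at the contracted vertex. A contractible loop yields your kind of exchange ($H\oplus\delta$ with $\delta$ a contractible cycle formed by an excursion and a subpath of $\sigma$). For two \emph{homotopic} loops, the exchange subgraph $\delta$ is the set of edges used exactly once by $\pi_i\cdot\sigma(v_i,v_j)\cdot\rev(\pi_j)\cdot\sigma(u_j,u_i)$; this $\delta$ is null-homologous (a contractible cycle or two non-crossing homotopic cycles, i.e.\ an annulus boundary), not a disk boundary, yet $H\oplus\delta$ is homologous to $H$ and strictly lighter. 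Having ruled out both cases, the loops lie in distinct nontrivial homotopy classes, and Euler's formula for a one-vertex embedded graph bounds their number by $6g+2b-3$ per shortest path, hence $12g+4b-5$ per greedy arc $a_i=\sigma_i\cdot e_i\cdot\rev(\tau_i)$. Your weight comparison via tightness of subpaths of $a_i$ is fine \emph{when} a bigon exists, but without the homotopic-pair exchange (and with the false claim that bigons are forced), the argument as written does not establish Lemma~\ref{lem:crossing}.
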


Our proof begins by conceptually modifying the edge weights in $G$ in a manner reminiscent to the
way many earlier papers \cite{cce-msspe-13, benw-amcnt-16, mnnw-mdpgo-18, bsw-msopg-15, e-pspmf-10,
eh-ocsd-04} enforce edge weight \emph{genericity}, and in particular, the \emph{uniqueness} shortest
paths between any two vertices of a graph.
This assumption can be enforced \emph{with high probability} by adding random infinitesimal weights
to each edge \cite{mvv-memi-87}.
Cabello \etal~\cite{cce-msspe-13} describe an efficient
implementation of lexicographic perturbation \cite{c-odlp-52,dow-gsmml-55,hm-apmcb-94} that
increases the worst-case running times of algorithms by a factor of $O(\log n)$.
More recently, Erickson \etal \cite{efl-hmcpf-18} described a deterministic perturbation scheme for
directed graphs based on integer homology that increases worst-case running times by a factor of
$O(g)$.
Unfortunately, this latter scheme cannot be used as a black box in algorithms such as ours that rely
on edges being undirected.


\begin{proof}
For the sake of argument, we slightly modify the edge weights of $G$ so that $Z_2$-minimal even
subgraphs under the new weights are also $Z_2$-minimal under the original edge weights but the
crossing bound is guaranteed to hold.
Recall, every arc in $A$ consists of two shortest paths from a forest $F$ of shortest paths.
Our modification simply adds identical infinitesimal weights $\eps$ to every edge \emph{outside} of
$F$.
We claim that for any shortest path $\sigma$ contributing to $F$, for any pair of vertices $u$ and $v$
in $\sigma$, that $\sigma(u, v)$ is the unique shortest path from $u$ to $v$.
Indeed, any path from $u$ to $v$ that uses an edge outside $F$ must weigh at least $\eps$ more than
$\sigma(u, v)$, and $F$ is a forest so there are no other paths from $u$ to $v$ using only edges of
$F$.

Let $H'$ be is an arbitrary $\Z_2$-minimal even subgraph homologous to $H$ under the new edge weights.
Let $\gamma_1, \gamma_2, \dots, \gamma_r$ be an arbitrary cycle decomposition of $H'$.
We now argue that for any shortest path $\sigma = \sigma(u, v)$ contributing to $F$, the total
number of crossings between $\sigma$ and the cycles $\gamma_1, \gamma_2, \dots, \gamma_r$ is at most
$6g+2b-3$.
The lemma immediately follows from the construction of $A$.
Without loss of generality, we can assume that $\sigma$ crosses each cycle $\gamma_i$ at least once.  For each $i$, let~$x_i$ denote the number of times $\sigma$ and $\gamma_i$ cross, and let $x = x_1 + x_2 + \cdots + x_r$.  We need to prove that $x\le 6g+2b-3$.


Consider the graph $G/\sigma$ obtained from $G$ by contracting the path $\sigma$ to a single vertex $uv$.  This graph inherits a cellular embedding on $\Sigma$ from the cellular embedding of $G$.  Each cycle $\gamma_i$ is contracted to the union of~$x_i$ weakly simple non-crossing loops in $G/\sigma$ with basepoint $uv$.  Altogether, we obtain $x$ loops, which we denote $\ell_1, \ell_2, \dots, \ell_x$.  We claim that these $x$ loops lie in distinct nontrivial homotopy classes.

Suppose some loop $\ell_i$ is contractible.  This loop is the contraction of a path $\pi_i$ in $G$
whose endpoints~$u_i$ and $v_i$ lie in $\sigma$.  The cycle $\delta = \pi_i \cdot \sigma(v_i,u_i)$
is also contractible.  Thus, the even subgraph $H\oplus\delta$ is homologous with $H$.  Moreover,
the previously discussed uniqueness of shortest paths implies that the weight of $H\oplus\delta = H \cup \sigma(v_i,u_i) \setminus \pi_i$ is smaller than the weight of~$H$.  But this contradicts our assumption that $H $ has minimum weight in its homology class.

Now suppose some pair of loops $\ell_i$ and $\ell_j$ are homotopic; by definition, the cycle $\ell_i\cdot\reverse{\ell_j}$ is contractible.  These two loops are contractions of paths $\pi_i$ and $\pi_j$ in $G$ with endpoints in $\sigma$.  Let $u_i$ and~$v_i$ denote the endpoints of $\pi_i$, and let $u_j$ and~$v_j$ denote the endpoints of $\pi_j$.  The cycle $\pi_i \cdot \sigma(v_i,v_j) \cdot \overline{\pi_j} \cdot \sigma(u_j, u_i)$ in $G$ is also contractible.  Let $\delta$ denote the set of edges of $G$ that appear in this cycle exactly once.  If the sub-paths $\sigma(v_i,v_j)$ and $\sigma(u_j, u_i)$ are edge-disjoint, then $\delta$ is a contractible cycle; otherwise, $\delta$ is the union of two non-crossing homotopic cycles.  In either case, $\delta$ is a boundary subgraph, so the symmetric difference $H\oplus\delta$ is homologous with $H$.  Moreover, $H\oplus\delta$ has smaller weight than~$H$, and we obtain another contradiction.

We conclude that the loops $\ell_1, \ell_2, \dots, \ell_x$ lie in distinct nontrivial homotopy classes.  Thus, these loops define an embedding of a single-vertex graph with $x$ edges onto $\Sigma$, where every face of the embedding is bounded by at least three edges.  Euler's formula now implies that $x\le 6g+2b-3$~\cite[Lemma~2.1]{ccelw-scsih-08}.
\end{proof}

We emphasize that different cycle decompositions of the same even subgraph $H'$ may lead to
different numbers of crossings.  Our crossing bound applies to \emph{every} cycle decomposition of
$H'$.

\subsection{Triangulations and crossing sequences}
\label{SS:homotopy-triangulation}

We can now describe our algorithm.  First, recall that we have a combinatorial surface $\Sigma$ along with an associated greedy system of arcs on that surface; if  $\Sigma$ had no boundary, we can delete a face without loss of generality and proceed, as described in Section~\ref{S:tree-cotree}.
We will cut the combinatorial surface $\Sigma$ along our greedy system of arcs into a $2\beta$-gon, or \emph{abstract polygonal schema}.  This construction cuts along each path in the greedy system; we then have two copies of each path,  and replace each copy  with a single edge.  Thus, each path in our greedy system of arcs will correspond to two edges in the polygon.

We next dualize the abstract polygonal schema by replacing each edge with a vertex, and connecting vertices which correspond to adjacent edges in the primal schema.  Any collection of non-crossing, non-self-crossing cycles corresponds to a \emph{weighted triangulation} \cite{ccelw-scsih-08}, where we draw an edge between two vertices of the dual abstract polygonal schema if and only if some cycle consecutively crosses the corresponding pair of paths in the greedy system of loops.  Each edge is weighted by the number of times such a crossing occurs in our collection.  Conversely, a weighted triangulation corresponds to a collection of non-crossing, non-self-crossing cycles as long as corresponding vertices are incident to edges of equal total weight.  See Figure~\ref{fig:weightedtriangulation} for an illustration of this correspondence.  Lemma~\ref{lem:crossing} implies that we only need to consider weights between 0 and $O(g+b)$.  Thus, there are $(g+b)^{O(g+b)}$ different weighted triangulations for each valid crossing vector.

\begin{figure}[htb]
\centering\includegraphics[height=1.45in]{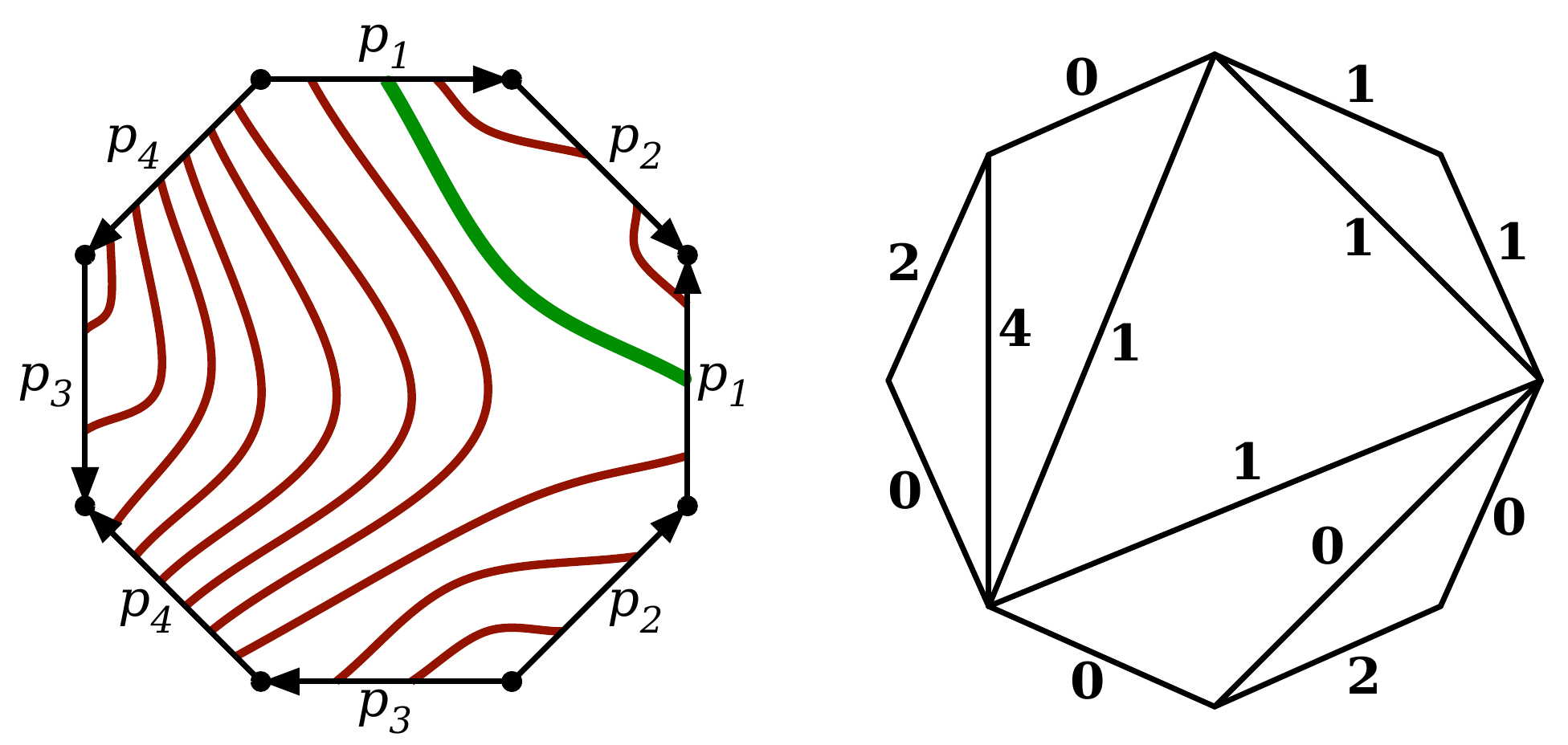}
\caption{Two disjoint simple cycles on a surface of genus 2, and the corresponding weighted triangulation.}
\label{fig:weightedtriangulation}
\end{figure}

For each valid weighted triangulation, we can compute a corresponding collection of abstract cycles in $O((g+b)^2)$ time by brute force.  In the same time, we can also compute the \emph{sequence} of crossings of each abstract cycle with the paths from our greedy system of arcs.  An algorithm of Kutz~\cite{k-csnco-06} computes the shortest cycle in $G$ with a given crossing sequence of length $x$ in $O(x n \log n)$ time; this proceeds by slicing $\Sigma$ along the greedy system of arcs, and then gluing together $x$ copies of the resulting planar surface into an annulus and calling Frederickson's planar minimum-cut algorithm \cite{f-faspp-87}.
Italiano \etal~\cite{insw-iamcmf-11} point out that their recent $O(n \log \log n)$-time improvement in computing minimum $(s,t)$-cuts in planar graphs can be used instead of Frederickson's algorithm. 
Thus, for each weighted triangulation, we obtain the shortest corresponding set of cycles in $O((g+b)^2 n \log \log n)$ time.

\begin{theorem}
\label{Th:Z2-minimal-crossing}
Let $G$ be an undirected graph with positively weighted edges embedded on a surface with genus $g$ and $b$ boundary components, and let $H$ be an even subgraph of $G$. We can compute the minimum-weight even subgraph homologous with $H$ in $(g+b)^{O(g+b)} n\log \log n$ time.
\end{theorem}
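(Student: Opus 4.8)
The plan is to assemble the ingredients already in place: the greedy system of arcs and crossing parity vectors of Section~\ref{sec:characterizing_crossings}, the crossing bound of Lemma~\ref{lem:crossing}, and the weighted-triangulation machinery of Section~\ref{SS:homotopy-triangulation}. First I would preprocess: build a greedy system of arcs $A=\{a_1,\dots,a_\beta\}$ in $O(\beta n)$ time, where $\beta=2g+b-1$ (if $\Sigma$ is closed we first delete an arbitrary face, which does not change the homology, as in Section~\ref{S:tree-cotree}), and compute the crossing parity vector $\bar{x}(H)$ of the input even subgraph in $O(\beta)$ time per edge. Cutting $\Sigma$ along $A$ turns it into an abstract polygonal schema with $2\beta$ edges (two per arc), and dualizing that schema gives the $2\beta$-vertex cycle on which weighted triangulations are defined.

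By Lemma~\ref{lem:crossing} there is a $\Z_2$-minimal even subgraph $H'$ homologous to $H$ such that any cycle decomposition of $H'$ into non-crossing weakly simple cycles — for instance the one produced by Lemma~\ref{lem:decomposition} — crosses each $a_i$ at most $12g+4b-5$ times in total. Since equality of crossing parity vectors characterizes $\Z_2$-homology (Section~\ref{sec:characterizing_crossings}), the vector $(x'_1,\dots,x'_\beta)$ of crossing counts of this decomposition satisfies $0\le x'_i\le 12g+4b-5$ and $x'_i\equiv\bar{x}_i(H)\pmod 2$. I would therefore enumerate every \emph{candidate crossing vector} obeying these two constraints; there are $(g+b)^{O(g+b)}$ of them, and $(x'_1,\dots,x'_\beta)$ is one of them.

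For each candidate vector I would enumerate the weighted triangulations of the dual $2\beta$-gon whose vertex weights realize it; since a $2\beta$-gon has $2^{O(\beta)}$ triangulations and each of its $O(\beta)$ edges carries a weight in $\{0,\dots,O(g+b)\}$, there are $(g+b)^{O(g+b)}$ such weighted triangulations. By the correspondence of Chambers~\etal~\cite{ccelw-scsih-08} (Section~\ref{SS:homotopy-triangulation}), each weighted triangulation determines, in $O((g+b)^2)$ time, a collection of non-crossing, non-self-crossing abstract cycles together with the cyclic sequence in which each crosses the arcs of $A$; the total length of all these crossing sequences is $O((g+b)^2)$. I would then run Kutz's algorithm~\cite{k-csnco-06} — slicing $\Sigma$ along $A$, gluing the required number of copies of the resulting planar piece into an annulus, and substituting the $O(n\log\log n)$-time planar minimum-cut routine of Italiano~\etal~\cite{insw-iamcmf-11} for Frederickson's~\cite{f-faspp-87} — to compute a shortest cycle in $G$ with each prescribed crossing sequence; over the whole collection this costs $O((g+b)^2 n\log\log n)$. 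The symmetric difference of the returned cycles is an even subgraph, and since its crossings with each $a_i$ have the parity prescribed by the candidate vector, namely $\bar{x}_i(H)$, this even subgraph is homologous to $H$. Over all candidate vectors and all weighted triangulations, I would output the lightest even subgraph so obtained.

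Correctness has two directions. Every subgraph the algorithm outputs is homologous to $H$, so it weighs at least $w(H')$. Conversely, take the candidate vector $(x'_1,\dots,x'_\beta)$ and the weighted triangulation that encodes the chosen (edge-disjoint) cycle decomposition of $H'$: each cycle $\gamma_i$ of that decomposition is itself a cycle in $G$ with the prescribed crossing sequence, so the shortest realization weighs at most $w(\gamma_i)$, and hence the symmetric difference of the shortest realizations weighs at most $\sum_i w(\gamma_i)=w(H')$; thus the algorithm's output has weight exactly $w(H')$, the minimum weight in the homology class of $H$. Combined with Lemma~\ref{lem:cut-duality}, this also recovers minimum $(s,t)$-cuts. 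The running time is $(g+b)^{O(g+b)}$ candidate vectors times $(g+b)^{O(g+b)}$ weighted triangulations times $O((g+b)^2 n\log\log n)$ per triangulation, plus the $O(\beta n)$ preprocessing, which is $(g+b)^{O(g+b)} n\log\log n$. The crux — and the reason Lemma~\ref{lem:crossing} does the real work — is precisely that the entire search is driven by combinatorial data of size only $O(g+b)$: without the guarantee that some $\Z_2$-minimal representative decomposes into non-crossing weakly simple cycles each crossing every greedy arc $O(g+b)$ times, neither the enumeration of crossing vectors and triangulations nor the reduction of each abstract cycle to a planar minimum-cut instance would be finite.
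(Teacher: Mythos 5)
Your proposal is correct and follows essentially the same route as the paper: the greedy system of arcs, the crossing bound of Lemma~\ref{lem:crossing}, enumeration of crossing-parity-consistent weighted triangulations of the dualized polygonal schema, and Kutz's crossing-sequence algorithm accelerated by the planar minimum-cut routine of Italiano \etal. The only cosmetic difference is that you enumerate candidate crossing vectors explicitly before the triangulations realizing them, which the paper folds into a single enumeration with a parity filter.
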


\begin{corollary}
\label{C:min-cut-crossing}
Let $G$ be an edge-weighted undirected graph embedded on a surface with genus $g$ and $b$ boundary components, and let $s$ and $t$ be vertices of $G$.  We can compute the minimum-weight $(s,t)$-cut in $G$ in $g^{O(g)} n\log \log n$ time.
\end{corollary}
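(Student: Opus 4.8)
The plan is to derive the corollary directly from the cut/even-subgraph duality of Lemma~\ref{lem:cut-duality} and the homology-minimization algorithm of Theorem~\ref{Th:Z2-minimal-crossing}: essentially all of the real work has already been done, and what remains is to set up the reduction and track the parameters. First I would reduce to the case of a closed surface. An $(s,t)$-cut is a purely combinatorial object---a set of edges of $G$ whose deletion leaves $s$ and $t$ in different components---so it is insensitive to the topology of the surface away from $G$. Capping off each of the $b$ boundary cycles of $\Sigma$ with a disk turns each boundary cycle into an ordinary face, producing a cellular embedding of the \emph{same} graph $G$ with the \emph{same} weights on a closed orientable surface of the same genus $g$, with exactly the same $(s,t)$-cuts; the numbers of vertices, edges, and faces change by only $O(b)$. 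Since we may assume $g = o(\log n)$---otherwise the claimed bound already exceeds that of a general-purpose minimum-cut algorithm, and this assumption also licenses the linear-time planar shortest-path subroutines invoked inside Theorem~\ref{Th:Z2-minimal-crossing}---we have $b = O(g) = o(\log n)$, so this change is absorbed into the $O(n)$ size bound. Hence we may assume $\Sigma$ is closed of genus $g$.

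Next I would invoke Lemma~\ref{lem:cut-duality}, with the dual graph $G^*$ carrying the inherited weights $w(e^*) := w(e)$. The lemma says that the minimum-weight $(s,t)$-cut in $G$ is exactly the primal subgraph dual to the minimum-weight even subgraph of $G^*$ that is $\Z_2$-homologous to $\partial s^*$ in the surface $\Sigma' := \Sigma \setminus (s^* \cup t^*)$. Deleting the two disjoint open disks $s^*$ and $t^*$ from the closed genus-$g$ surface $\Sigma$ produces a surface $\Sigma'$ of genus $g$ with exactly two boundary components, bounded by $\partial s^*$ and $\partial t^*$, on which $G^*$ is cellularly embedded with $|V(G^*)| = |F(G)| = O(n + g) = O(n)$ vertices. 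The target even subgraph $\partial s^*$ is one of the two boundary cycles of $\Sigma'$, and it is trivially computable in $O(n)$ time as the set of duals of the edges of $G$ incident to $s$.

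Finally I would run the algorithm of Theorem~\ref{Th:Z2-minimal-crossing} on the instance given by $G^*$ with weights $w(\cdot^*)$, embedded on $\Sigma'$, with input even subgraph $H := \partial s^*$. Since $\Sigma'$ has genus $g$ and $b' = 2$ boundary components and $G^*$ has $O(n)$ vertices, the algorithm returns the minimum-weight even subgraph $Y \subseteq G^*$ homologous to $\partial s^*$ in time $(g+2)^{O(g+2)}\, n \log\log n = g^{O(g)}\, n \log\log n$. By Lemma~\ref{lem:cut-duality} (in particular its converse direction), dualizing $Y$ back to $G$ yields an $(s,t)$-cut whose weight equals the optimum, so it is a minimum-weight $(s,t)$-cut; we return it.

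The main point requiring care here is not so much an obstacle as bookkeeping: one must verify that passing first to a closed surface and then to the dual preserves the $(s,t)$-cut structure and keeps the vertex count $O(n)$, and that the boundary count of $\Sigma'$ is exactly $2$, so that the $(g+b)^{O(g+b)}$ factor of Theorem~\ref{Th:Z2-minimal-crossing} collapses to $g^{O(g)}$. The one genuine technicality is that Theorem~\ref{Th:Z2-minimal-crossing} is stated for strictly positive edge weights; if $G$ has zero-weight edges one should first dispose of them by the standard device of contracting them in $G$ (equivalently, deleting their duals in $G^*$) and adding them back into the reported cut, or by the infinitesimal-perturbation scheme already employed in the proof of Lemma~\ref{lem:crossing}.
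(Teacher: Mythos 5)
Your proposal is correct and follows essentially the same route as the paper: cap off any boundaries, use Lemma~\ref{lem:cut-duality} to reduce the minimum $(s,t)$-cut to finding a minimum-weight even subgraph of $G^*$ homologous to $\partial s^*$ on $\Sigma\setminus(s^*\cup t^*)$, which has genus $g$ and two boundary components, and then apply Theorem~\ref{Th:Z2-minimal-crossing} to get the $(g+2)^{O(g+2)}n\log\log n = g^{O(g)}n\log\log n$ bound. Your extra bookkeeping about zero-weight edges and the vertex count of $G^*$ is fine but not a departure from the paper's argument.
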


\section{The $\Z_2$-Homology Cover}
\label{sec:homcover}

At a very high level, our algorithms in Section 4 find the minimum-weight subgraph in a given homology class by enumerating possible \emph{homotopy} classes of the cycles in a cycle decomposition, and then finding the shortest cycle in each possible homotopy class by searching a finite portion of the universal cover of the surface~$\Sigma$.  In this section, we describe a more direct algorithm, which finds the shortest cycle in each \emph{homology} class, by constructing and searching a space which we call the \emph{$\Z_2$-homology cover}.
Specifically, given a homology signature $h\in (\Z_2)^\beta$, our algorithm computes the shortest
cycle with signature $h$ in $\beta^{O(\beta)} n \log n$ time, using a generalization of Klein's
multiple-source shortest path algorithm~\cite{k-msspp-05} for planar graphs to higher-genus embedded
graphs \cite{cce-msspe-13,efl-hmcpf-18}.
In fact, because there are only $2^\beta$ homology classes, we can compute the shortest cycle in \emph{every} homology class in the same running time.
We then assemble the minimum-weight \emph{even subgraph} in any given homology class from these $\Z_2$-minimal cycles using dynamic programming.

In the preliminary version of this section~\cite{en-mcsnc-11}, we described an algorithm to compute
the shortest non-separating cycle in a directed surface graph in $g^{O(g)} n \log n$ time, improving
(for fixed $g$) an earlier algorithm of Cabello \etal~\cite{ccl-fsncd-10} that runs in
$O(g^{1/2}n^{3/2}\log n)$ time.  Using similar techniques but with different covering spaces,
Erickson~\cite{e-sncds-11} and Fox~\cite{f-sntcd-13} described even faster algorithms that find
shortest non-separating cycles in~$O(g^2 n \log n)$ time and shortest non-contractible cycles
in~$O(g^3 n \log n)$ time.  In light of these improvements, we omit discussion of our non-separating
cycle algorithm from this paper.

%
%
%
%
%
%
%
%

\subsection{Definition and construction}
\label{sec:homcover_cover}

We begin by computing homology signatures for the edges of $G$ in $O(\beta n)$ time, as described in Section~\ref{sec:characterizing_signatures}.
After computing homology signatures for each edge, the $\Z_2$-homology cover of a combinatorial surface can be defined using a standard~\emph{voltage construction}~\cite[Chapter 4]{gt-tgt-01}, as follows.

We first define the covering graph $\Gbar$.  For simplicity, we regard every edge $uv$ of $G$ as a pair of oppositely oriented \emph{darts} $\arc{u}{v}$ and $\arc{v}{u}$.  The vertices of $\Gbar$ are all ordered pairs $(v, h)$ where $v$ is a vertex of $G$ and $h$ is an element of $(\Z_2)^\beta$.  The darts of $\Gbar$ are the ordered pairs $(\arc{u}{v}, h) := (u, h)\arcto(v, h\oplus [uv])$ for all edges $\arc{u}{v}$ of $G$ and all homology classes $h \in (\Z_2)^\beta$, and the reversal of any dart $(\arc{u}{v}, h)$ is the dart $(\arc{v}{u}, h\oplus[uv])$.

Now let $\pi\colon \Gbar\to G$ denote the covering map $\pi(v, h) = v$; this map projects any cycle in $\Gbar$ to a cycle in $G$.  To define a cellular embedding of~$\Gbar$, we declare a cycle in $\Gbar$ to be a face if and only if its projection is a face of $G$.  The combinatorial surface defined by this embedding is the $\Z_2$-homology cover $\Sigmabar$.

Our construction can be interpreted more topologically as follows.  Let $\dualarc_1, \dots, \dualarc_\beta$ denote the system of dual arcs used to define the homology signatures $[e]$.  The surface $D := \Sigma\setminus(\dualarc_1\cup\cdots\cup \dualarc_\beta)$ is a topological disk.  Each arc $\dualarc_i$ appears on the boundary of $D$ as two segments $\dualarc^+_i$ and~$\dualarc^-_i$.  For each signature $h\in (\Z_2)^\beta$, we create a disjoint copy $(D,h)$ of $D$; for each index~$i$, let $(\dualarc^+_i, h)$ and $(\dualarc^-_i, h)$ denote the copies of $\dualarc^+_i$ and $\dualarc^-_i$ in the disk $(D, h)$.  For each index~$i$, let~$b_i$ denote the $\beta$-bit vector whose $i$th bit is equal $1$ and whose other $\beta-1$ bits are all equal to~$0$.  The $\Z_2$-homology cover $\Sigmabar$ is constructed by gluing the~$2^\beta$ copies of $D$ together by identifying boundary paths $(\dualarc^+_i,h)$ and $(\dualarc^-_i, h\oplus b_i)$, for every index $i$ and homology class $h$.  See Figure \ref{fig:cover-ex} for an example.

\begin{figure}
\centering
\includegraphics[height=1.5in]{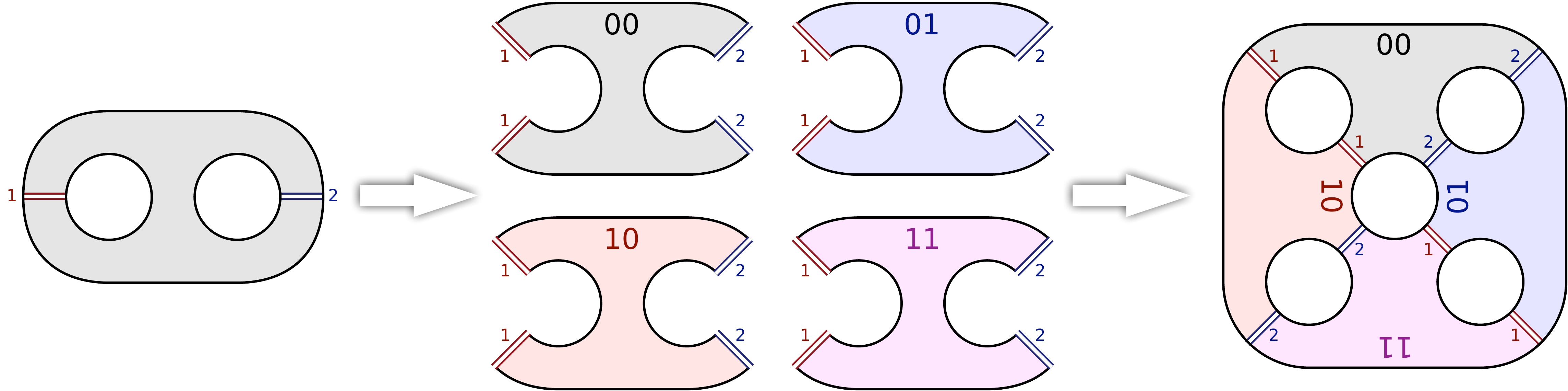}
\caption{Constructing the $\Z_2$-homology cover of a pair of pants (a genus zero surface with three boundaries).}
\label{fig:cover-ex}
\end{figure}

\begin{lemma}
\label{lem:cover-cxy}
The combinatorial surface $\Sigmabar$ has $\nbar = 2^\beta n$ vertices, genus $\gbar = O(2^\beta \beta)$, and $\bbar = O(2^\beta b)$ boundaries, and it can be constructed in $O(2^\beta n)$ time.
\end{lemma}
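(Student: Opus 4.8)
The plan is to establish each of the four claims---vertex count, genus bound, boundary count, and construction time---directly from the topological gluing description of $\Sigmabar$ given just before the statement. First I would count vertices: the combinatorial surface $\Sigmabar$ is assembled from $2^\beta$ copies of the disk $D = \Sigma \setminus (\dualarc_1 \cup \cdots \cup \dualarc_\beta)$, and from the voltage-construction definition of $\Gbar$, each copy of $D$ contains one copy of each vertex of $G$; after the boundary identifications, the vertices $(v,h)$ of $\Gbar$ are in bijection with pairs $(v,h) \in V \times (\Z_2)^\beta$, giving $\nbar = 2^\beta n$ immediately.

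Next I would count faces and edges to get the genus via Euler's formula. Each face of $\Sigmabar$ projects to a face of $G$ and the fibers have size $2^\beta$, so $\Sigmabar$ has $2^\beta f$ faces where $f$ is the number of faces of $G$; similarly $\Sigmabar$ has $2^\beta m$ edges where $m = |E|$. Since $G$ is simple and cellularly embedded with the standing assumption $g = O(n^{1-\eps})$, we have $m, f = O(n)$, hence $\mbar, \fbar = O(2^\beta n)$. Plugging into $\chibar = \nbar - \mbar + \fbar$, the Euler characteristic of $\Sigmabar$ is $2^\beta \chi$ where $\chi = n - m + f = 2 - 2g - b$; I would then read off $\gbar$ from the relation $\chibar = 2 - 2\gbar - \bbar$ once $\bbar$ is known, obtaining $\gbar = O(2^\beta \beta)$ after absorbing $\bbar$. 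For the boundary count: the boundary of $\Sigmabar$ consists of the lifts of the boundary cycles $\delta_1, \dots, \delta_b$ of $\Sigma$; each $\delta_j$ lifts to some number of cycles in $\Sigmabar$, and the total number of boundary components is $O(2^\beta b)$ since each of the $b$ boundary cycles has at most $2^\beta$ lifts. (The careful version tracks that a boundary cycle $\delta_j$ lifts to exactly $2^\beta / |\langle [\delta_j]_\ast \rangle|$ cycles, but for the asymptotic bound it suffices that there are at most $2^\beta$ lifts of each.) Finally, the construction time: building $\Gbar$ means creating $2^\beta n$ vertices and $O(2^\beta n)$ darts, each determined in $O(\beta)$ time from the precomputed edge signatures (or $O(1)$ amortized with bitwise operations on $O(\beta/\log n)$-word signatures); the face structure is inherited by projection, so the whole combinatorial surface is assembled in $O(2^\beta n)$ time, after the $O(\beta n)$ preprocessing from Lemma~\ref{lem:sign}.

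The one step requiring genuine care---the main obstacle---is the genus bound $\gbar = O(2^\beta\beta)$, since a naive application of Euler's formula with $\chibar = 2^\beta\chi = 2^\beta(2 - 2g - b)$ would give $\gbar = 2^\beta g + (2^\beta - \bbar/?)\cdots$, and one must confirm the $+1$'s and the boundary term do not blow up the bound. The resolution is that $\bbar = O(2^\beta b)$ and $2^\beta(2-b) = O(2^\beta)$ and $2^\beta g = O(2^\beta \beta)$ since $g \le \beta$, so solving $2 - 2\gbar - \bbar = 2^\beta(2-2g-b)$ for $\gbar$ gives $\gbar = \tfrac12(2 - \bbar - 2^\beta(2-2g-b)) = O(2^\beta g) + O(2^\beta b) + O(2^\beta) = O(2^\beta \beta)$, using $\beta = 2g + b - 1$. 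I would also remark that connectedness of $\Sigmabar$ is not required for the lemma statement (and in fact may fail if the signature map is not surjective); if connectedness is wanted later, it follows from surjectivity of the map $e \mapsto [e]$ onto $(\Z_2)^\beta$, which holds because the arcs $\dualarc_i$ form a system whose complement is a disk.
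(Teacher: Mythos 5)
Your proposal is correct and follows essentially the same route as the paper: count vertices, edges, and faces of the cover, compute $\chibar = 2^\beta(1-\beta)$, bound the number of boundary lifts, and read off the genus from Euler's formula, with construction time immediate after the signature preprocessing. The only difference is that the paper determines $\bbar$ exactly (distinguishing $b=1$, where $\delta_1$ is null-homologous and lifts $2^\beta$ times, from $b>1$, where each boundary cycle has nonzero signature and lifts $2^{\beta-1}$ times) and hence gives closed-form genus formulas, whereas you settle for the asymptotic bound of at most $2^\beta$ lifts per boundary cycle, which indeed suffices for the stated lemma.
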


\begin{proof}
Let $m$ and $f$ denote the number of edges and faces of $\Sigma$, respectively.  Recall that the Euler characteristic of $\Sigma$ is $\chi = n - m + f = 2 - 2g - b = 1-\beta$.  The combinatorial surface~$\Sigmabar$ has exactly $\nbar = 2^\beta n$ vertices, $2^\beta m$ edges, and $2^\beta f$ faces, so its Euler characteristic is $\chibar = 2^\beta (1-\beta)$.

If $b>1$, then each boundary cycle $\delta_i$ has a non-zero homology signature; at least one arc $\dualarc_j$ has exactly one endpoint on~$\delta_i$.  Thus, $\Sigmabar$ has exactly $\bbar = 2^{\beta-1} b$ boundary cycles, each of which is a double-cover (in fact, the $\Z_2$-homology cover) of some boundary cycle~$\delta_i$.  It follows that~$\Sigmabar$ has genus $\gbar = 1-(\chibar+\bbar)/2 = {2^{\beta-2} ({4g+b-4}) + 1}$.  (Somewhat surprisingly, $\Sigmabar$ may have positive genus even when $\Sigma$ does not!)  On the other hand, when $b=1$, the boundary cycle~$\delta_1$ is null-homologous, so $\Sigmabar$ has $\bbar = 2^\beta b$ boundary cycles, and thus~$\Sigmabar$ has genus $\gbar = {1-(\chibar+\bbar)/2} =  {2^\beta (g-1) + 1}$.

After computing the homology signatures for $\Sigma$ in $O(\beta n)$ time, following Lemma \ref{lem:sign}, it is straightforward to construct $\Sigmabar$ in $O(\nbar) = O(2^\beta n)$ time.
\end{proof}

Each edge in $\Gbar$  inherits the weight of its projection in  $G$.
Now consider an arbitrary path $\path$ in $G$, with (possibly equal) endpoints $u$~and $v$.  A straightforward induction argument implies that for any homology class $h \in (\Z_2)^\beta$, the path~$\path$ is the projection of a unique path from $(u,h)$ to $(v,h\oplus[\path])$, which we denote \EMPH{$(\path,h)$}.  Moreover, this lifted path has the same length as its projection.  The following lemmas are now immediate.

\begin{lemma}
\label{lem:lift-shortest}
Every lift of a shortest path in $G$ is a shortest path in $\Gbar$.
\end{lemma}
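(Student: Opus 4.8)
The statement is that every lift of a shortest path in $G$ is a shortest path in $\Gbar$. The plan is to argue by contradiction using the covering-map structure. Suppose $\path$ is a shortest path in $G$ from $u$ to $v$, and fix a homology class $h \in (\Z_2)^\beta$; by the induction argument already noted in the text, the lift $(\path, h)$ is the unique lift of $\path$ starting at $(u, h)$, and it ends at $(v, h \oplus [\path])$ and has the same total weight as $\path$. Suppose for contradiction that $(\path, h)$ is \emph{not} a shortest path in $\Gbar$ from $(u,h)$ to $(v, h \oplus [\path])$. Then there is some strictly shorter path $\bar{q}$ in $\Gbar$ with those same endpoints.

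The key step is to project $\bar{q}$ down to $G$ via the covering map $\pi$. Since $\pi$ maps each dart of $\Gbar$ to a dart of $G$ with the \emph{same weight}, the projection $q := \pi(\bar q)$ is a walk in $G$ from $u = \pi(u,h)$ to $v = \pi(v, h\oplus[\path])$ of the same total weight as $\bar q$, hence strictly less than the weight of $\path$. A walk in $G$ from $u$ to $v$ always contains a simple path (hence a path no heavier than the walk), so $G$ has a $u$-to-$v$ path strictly shorter than $\path$. This contradicts the assumption that $\path$ is a shortest path in $G$, completing the argument.

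I expect the only subtlety — and it is minor — to be bookkeeping about what "path" means: the text uses "path" in the graph sense (a walk), and one must be slightly careful that $\bar q$ might repeat vertices/darts, but this causes no trouble since weights are nonnegative and the projection argument above goes through for walks. One could also phrase the whole thing without contradiction: $\pi$ sends any $(u,h)$-to-$(v, h')$ path in $\Gbar$ to a $u$-to-$v$ walk in $G$ of equal weight, so $\mathrm{dist}_{\Gbar}((u,h),(v,h\oplus[\path])) \ge \mathrm{dist}_G(u,v) = w(\path) = w((\path,h))$, and since $(\path,h)$ is itself a path realizing the left-hand side, it is shortest. The main obstacle is essentially nonexistent here; the lemma is an immediate consequence of the weight-preserving property of $\pi$ stated just before it, and the proof is a two-line application of that observation.
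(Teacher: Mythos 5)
Your argument is correct and matches the paper's intent: the paper declares the lemma ``immediate'' from the facts that the covering map preserves edge weights and that every path in $\Gbar$ projects to a walk of equal weight in $G$, which is exactly the projection argument you spell out. The only addition you make is the (harmless and correct) bookkeeping that the projection may be a walk rather than a simple path, which does not affect the distance comparison since edge weights are non-negative.
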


\begin{lemma}
\label{lem:lift-minimal}
A loop $\ell$ in $G$ with basepoint $v$ is $\Z_2$-minimal if and only if, for every homology class $h\in (\Z_2)^\beta$, the lifted path $(\ell,h)$ is a shortest path in $\Gbar$ from $(v,h)$ to $(v,h\oplus[\ell])$.
\end{lemma}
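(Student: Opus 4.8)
The plan is to reduce the lemma to a tautology by exploiting the length-preserving correspondence between loops at $v$ in a fixed $\Z_2$-homology class and paths between a fixed ordered pair of sheets of $\Gbar$. Two ingredients are already available. First, as recalled just before the statement, every loop $\ell'$ in $G$ with basepoint $v$ lifts to a unique path $(\ell',h)$ in $\Gbar$ from $(v,h)$ to $(v,h\oplus[\ell'])$, and this lift has the same weight as $\ell'$. Second, by the corollary of the previous section, two cycles are $\Z_2$-homologous if and only if they have the same signature, so the loops with basepoint $v$ that are homologous to $\ell$ are exactly those with signature $[\ell]$. Recall also that, by definition, $\ell$ is $\Z_2$-minimal precisely when no loop with basepoint $v$ homologous to $\ell$ has strictly smaller weight.

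The only half of the correspondence that needs an argument is the converse of lifting: that the projection $\pi(q)$ of any walk $q$ in $\Gbar$ from $(v,h)$ to $(v,h\oplus[\ell])$ is a loop with basepoint $v$ satisfying $[\pi(q)]=[\ell]$ and weight $|q|$. The weight and basepoint claims are immediate, since $\pi$ merely forgets the second coordinate. For the signature, I would track the second coordinate along $q$: traversing a dart $(\arc{a}{b},h')$ changes it by $[ab]$, so after $q$ the second coordinate has shifted by the bitwise sum of $[e]$ over all darts of $q$ counted with multiplicity; the contributions of edges used an even number of times cancel over $\Z_2$, leaving exactly $[\pi(q)]$. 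Since this shift equals $[\ell]$, we conclude $[\pi(q)]=[\ell]$, i.e.\ $\pi(q)$ is homologous to $\ell$. This bookkeeping is the one spot requiring care.

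With both halves of the correspondence in hand the lemma follows directly. For the forward direction, suppose $\ell$ is $\Z_2$-minimal, fix $h$, and assume for contradiction that some path $q$ in $\Gbar$ from $(v,h)$ to $(v,h\oplus[\ell])$ is strictly shorter than $(\ell,h)$; then $\pi(q)$ is a loop with basepoint $v$, homologous to $\ell$, of weight $|q|<|\ell|$, contradicting $\Z_2$-minimality. Since $h$ was arbitrary, $(\ell,h)$ is a shortest path for every $h$. For the converse, assume $(\ell,h)$ is a shortest $(v,h)$-to-$(v,h\oplus[\ell])$ path in $\Gbar$ (one value of $h$ suffices), and let $\ell'$ be any loop with basepoint $v$ homologous to $\ell$; then $[\ell']=[\ell]$, so $(\ell',h)$ is a path from $(v,h)$ to $(v,h\oplus[\ell])$ of weight $|\ell'|$, whence $|\ell|\le|\ell'|$, and $\ell$ is $\Z_2$-minimal.

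The main (and essentially only) obstacle is making the projection argument precise — verifying that the homology shift between the two endpoint sheets of a walk in $\Gbar$ is exactly the signature of its projected loop, including the $\Z_2$-cancellation of repeated edges. Everything else is a mechanical combination of the lifting statement and the homology-signature corollary, so I would keep that part brief.
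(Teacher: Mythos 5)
Your proof is correct and takes essentially the same route as the paper: the paper declares this lemma ``immediate'' from the preceding observation that every walk lifts uniquely and length-preservingly to a path between the appropriate sheets, and your projection bookkeeping (the sheet shift equals the XOR of edge signatures, with even-multiplicity edges cancelling, hence equals the signature of the projected loop) is precisely the verification that makes that correspondence, and hence the two-way equivalence, go through. The only point worth noting is that you (sensibly) read ``$\Z_2$-minimal'' as minimal among loops based at $v$ in the homology class, which is the reading under which the ``if'' direction holds and which suffices for the paper's later use of the ``only if'' direction.
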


\subsection{Computing $\Z_2$-minimal cycles}
\label{sec:homcover_cycles}

The results in the previous section immediately suggest an algorithm to compute the shortest cycle in a given $\Z_2$-homology class $h$ in time $2^{O(\beta)}n^2$: construct the $\Z_2$-homology cover, and then compute the shortest path from $(v,0)$ to $(v,h)$, for every vertex $v$ in the original graph.  In this section, we describe a more complex algorithm that runs in time $2^{O(\beta)}n\log n$.
Recall that any path $\sigma$ from $u$ to $v$ in $G$ is the projection of a unique path $(\sigma,0)$ from $(u,0)$ to $(v,[\sigma])$ in $\Gbar$.

\begin{lemma}
\label{lem:nocross}
Let $\cycle$ be a $\Z_2$-minimal cycle in $G$, and let~$\sigma$ be any shortest path in $G$ that intersects $\cycle$.  There is a $\Z_2$-minimal cycle $\cycle'$ homologous to $\cycle$, which is the projection of a shortest path $(\cycle', h)$ in $\Gbar$ that starts with a subpath of $(\sigma, 0)$ but does not otherwise intersect $(\sigma, 0)$.
\end{lemma}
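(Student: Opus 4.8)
The plan is to use an exchange argument on the lifted path $(\cycle,h)$ in $\Gbar$, reminiscent of the crossing-number arguments in Section~\ref{sec:crossing}, but now applied in the $\Z_2$-homology cover rather than to the surface itself. First I would fix a cycle decomposition idea: since $\cycle$ is $\Z_2$-minimal, by Lemma~\ref{lem:lift-minimal} every lift $(\cycle,h)$ is a shortest path in $\Gbar$ between its endpoints $(v,h)$ and $(v,h\oplus[\cycle])$. Consider the lift $(\sigma,0)$ of the shortest path $\sigma$; by Lemma~\ref{lem:lift-shortest} this is itself a shortest path in $\Gbar$. The goal is to find a homology class $h$ and a $\Z_2$-minimal cycle $\cycle'$ homologous to $\cycle$ whose lift $(\cycle',h)$ meets $(\sigma,0)$ only along an initial segment.

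The key step is a ``rerouting'' or ``uncrossing'' lemma in $\Gbar$: whenever a shortest path $(\cycle,h)$ and the shortest path $(\sigma,0)$ cross at two consecutive points $x$ and $y$ (consecutive along $(\cycle,h)$), we can swap the subpath of $(\cycle,h)$ between $x$ and $y$ with the corresponding subpath of $(\sigma,0)$ without increasing length (this uses genericity/uniqueness of shortest paths, exactly as in the proof of Lemma~\ref{lem:crossing}, so the swap strictly decreases length unless the two subpaths coincide, giving a contradiction or an actual improvement). The projection of the rerouted path is still a cycle homologous to $\cycle$ in $G$ — because we have only substituted one shortest subpath for another, and the projection of a subpath of $(\sigma,0)$ is a subpath of $\sigma$, which is null-homotopic relative to its endpoints when concatenated appropriately — so the rerouted object is again $\Z_2$-minimal. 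Repeating, I eliminate crossings between $(\cycle,h)$ and $(\sigma,0)$ one at a time until the only intersection is the unavoidable initial overlap; the class $h$ may change as we reroute, but it is always the homology class of the current cycle, which equals $[\cycle]$ throughout since homology is preserved.

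More carefully, I would set up the argument so that after fixing that $\cycle'$ and $\sigma$ share at least one point in $G$, we choose a lift $(\sigma,0)$ and track a lift of $\cycle'$ that begins at a shared vertex; a subpath of $(\sigma,0)$ is then a legitimate prefix of a lift of $\cycle'$. The uncrossing operation is applied to the remainder. The main obstacle I expect is bookkeeping the interaction between ``crossing'' (a topological notion on the surface, defined via infinitesimal perturbation as in Section~\ref{sec:prelims}) and the combinatorics of shared vertices/edges of the lifted walks in $\Gbar$: one must be careful that the swapped segments are genuinely interchangeable (same endpoints, same homology label so that the lift still closes up correctly) and that each swap genuinely reduces the number of crossings without introducing new ones elsewhere — this requires that $(\sigma,0)$ is simple (true, as a lift of a shortest path under generic weights) and that we always operate on the first crossing along $(\cycle',h)$ so that the initial overlap is never disturbed. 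Once the uncrossing terminates, the resulting $(\cycle',h)$ is a shortest path in $\Gbar$ (being obtained from a shortest path by length-nonincreasing swaps, hence still shortest by minimality of $\cycle$), it starts with a subpath of $(\sigma,0)$, and meets $(\sigma,0)$ nowhere else, which is exactly the claim.
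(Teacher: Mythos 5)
Your overall framing (work in $\Gbar$, use Lemmas~\ref{lem:lift-shortest} and~\ref{lem:lift-minimal}, exchange subpaths of the lifted cycle with subpaths of $(\sigma,0)$, and note that keeping the endpoints in the cover fixed preserves both length and homology class) is the right one, but the mechanism you propose does not establish the conclusion. The lemma requires a lift that \emph{does not intersect} $(\sigma,0)$ outside an initial prefix, whereas your iterative step only removes \emph{crossings}: swapping the segment of $(\cycle,h)$ between two consecutive crossing points onto $(\sigma,0)$ makes the new walk \emph{overlap} $(\sigma,0)$ along that middle segment and leaves every later intersection point untouched, so the number of shared vertices with $(\sigma,0)$ can only grow, the resulting prefix need not even be a subpath of $(\sigma,0)$ (the two swap points need not be consecutive, or in order, along $\sigma$), and there is no progress measure forcing the process to terminate in the canonical form. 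Your appeal to genericity is also off: both exchanged segments are subpaths of shortest paths between the same two vertices of $\Gbar$, so they always have \emph{equal} length --- there is never a strict decrease, hence no contradiction to extract --- and, importantly, the lemma must hold without any uniqueness assumption, since the deterministic algorithm of Section~\ref{sec:homcover_cycles} relies on it with unperturbed weights.

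The missing idea is that a \emph{single} exchange suffices, performed at the \emph{last} intersection rather than crossing-by-crossing. Let $v$ be the first vertex of $\sigma\cap\cycle$ along $\sigma$, view $\cycle$ as a loop based at $v$, and let $(v,h)$ be the corresponding vertex of $(\sigma,0)$; by Lemma~\ref{lem:lift-minimal}, $(\cycle,h)$ is a shortest path from $(v,h)$ to $(v,h\oplus[\cycle])$. Let $(w,h')$ be the \emph{last} vertex along $(\cycle,h)$ that lies on $(\sigma,0)$, and replace the portion of $(\cycle,h)$ from $(v,h)$ to $(w,h')$ by the subpath of $(\sigma,0)$ between these two vertices. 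The new path automatically starts with a subpath of $(\sigma,0)$ and, by the choice of $(w,h')$, never meets $(\sigma,0)$ afterwards; it has the same length because the replaced and replacing segments are both shortest paths between $(v,h)$ and $(w,h')$; and since its endpoints in $\Gbar$ are unchanged, its projection $\cycle'$ has the same length and the same homology class as $\cycle$, hence is $\Z_2$-minimal. No iteration, no uniqueness of shortest paths, and no crossing bookkeeping is needed.
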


\begin{proof}
Let $v$ be the vertex of $\sigma\cap\cycle$ closest to the starting vertex of $\sigma$, and let $(v,h)$ be the corresponding vertex of the lifted path $(\sigma,0)$.  Think of~$\cycle$ as a loop based at $v$.  Lemma \ref{lem:lift-minimal} implies that the lifted path $(\cycle, h)$ is a shortest path from $(v,h)$ to $(v,h\oplus [\cycle])$.

Now let $(w,h')$ be the last vertex along $(\cycle,h)$ that is also a vertex of $(\sigma,0)$.  Let $(\cycle', h)$ be the path obtained from $(\cycle, h)$ by replacing the subpath from from $(v,h)$ to $(w,h')$ with the corresponding subpath of $(\sigma,0)$.  By construction, $(\cycle', h)$ starts with a subpath of $(\sigma,0)$ but does not otherwise intersect $(\sigma,0)$.   Because both $(\cycle, h)$ and $(\sigma,0)$ are shortest paths in $\Sigmabar$, the new path $(\cycle', h)$ has the same length as $(\cycle, h)$.  Thus, the projected cycle $\cycle'$ has the same length and homology class as $\cycle$, which implies that~$\cycle'$ is $\Z_2$-minimal.
\end{proof}

We emphasize that the modified cycle $\cycle'$ may intersect~$\sigma$ arbitrarily many times; however, all such intersections lift to intersections between $(\cycle', h)$ and lifts of $\sigma$ other than $(\sigma, 0)$.

Our algorithm uses a generalization of Klein's  multiple-source shortest path algorithm~\cite{k-msspp-05} to higher-genus embedded graphs:

\begin{lemma}[Erickson \etal~\cite{efl-hmcpf-18}]
\label{lem:multishort}
Let $G$ be a graph with non-negative edge weights, cellularly embedded on a surface $\Sigma$ of
genus $g$ with $b>0$ boundaries, and let $f$ be an arbitrary face of~$G$.
We can preprocess $G$ in $O(g^2n\log n)$ time and $O(g n \log n)$ space so that the length of the
shortest path from any vertex incident to $f$ to any other vertex can be retrieved in $O(\log n)$
time.
\end{lemma}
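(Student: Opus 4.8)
This is a higher-genus generalization of Klein's multiple-source shortest path (MSSP) data structure for planar graphs~\cite{k-msspp-05}, and the plan is to follow Klein's framework, substituting surface-topological arguments for each use of planarity. Write $v_0, v_1, \dots, v_{k-1}$ for the vertices incident to $f$, listed in the cyclic order given by the boundary walk of $f$; the hypothesis $b>0$ ensures that the surface-with-boundary MSSP scaffolding already developed for embedded graphs~\cite{cce-msspe-13} applies. The target is a fully persistent encoding of the shortest-path trees $T_0, T_1, \dots, T_{k-1}$ rooted at $v_0, \dots, v_{k-1}$, from which the length of the root-to-$w$ path in any $T_i$ can be retrieved in $O(\log n)$ time.

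First I would build $T_0$ by Dijkstra's algorithm in $O(n\log n)$ time and store it in a dynamic-tree structure (a link-cut tree) augmented to support parent changes (``pivots'') and root-to-vertex distance queries, made fully persistent so that every intermediate tree survives for later querying. I would then process the $k$ ``slides'' of the source, one per boundary edge $v_iv_{i+1}$ of $f$: as the source point moves continuously from $v_i$ to $v_{i+1}$, the shortest-path tree changes only at discrete instants, each of which replaces a single tree dart $\arc{p}{w}$ by a single non-tree dart $\arc{u}{w}$ that has just become tense. Each such change is one pivot in the persistent dynamic tree, costing $O(\log n)$ to apply and persist; identifying which dart pivots next requires, on a genus-$g$ surface, some additional bookkeeping beyond the planar case (e.g.\ consulting an $O(g)$-bit homological label to break ties consistently across slides), and it is this overhead, multiplied over all pivots, that accounts for the extra factor of $g$ in the running time.

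The hard part is bounding the total number of pivots over one full revolution of the source around $f$. In the planar case Klein proves, using the Jordan curve theorem, that each dart enters the tree at most once, giving $O(n)$ pivots; on a surface this fails, and I would instead show that each dart enters the tree only $O(g)$ times, for a total of $O(gn)$ pivots. The argument is an exchange/crossing argument in the spirit of Lemma~\ref{lem:crossing}: if a dart $d=\arc{u}{w}$ enters the current tree at two source positions separated by an interval in which $d$ was a non-tree dart, then splicing the tree path to $u$, the dart $d$, and the reverse of the other tree path to $w$ produces a closed walk; a uniqueness-of-shortest-paths swap (enforced if necessary by an infinitesimal lexicographic perturbation as in the proof of Lemma~\ref{lem:crossing}) forces the closed walks witnessing distinct entries of $d$ to be pairwise non-homotopic once based at a common point, and an Euler-formula count exactly like the one in the proof of Lemma~\ref{lem:crossing} (see also Lemma~2.1 of~\cite{ccelw-scsih-08}) bounds their number by $O(g)$. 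Summing over the $O(n)$ darts yields $O(gn)$ pivots, hence $O(g^2 n\log n)$ preprocessing time once the $O(g)$ per-pivot overhead is included, and a persistent dynamic tree subjected to $O(gn)$ updates occupies $O(gn\log n)$ space.

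Finally, a query for the distance from a vertex $v_i$ incident to $f$ to an arbitrary vertex $w$ is answered by navigating to the persistent version of the dynamic tree that represents source $v_i$ and performing a root-to-$w$ distance query there, in $O(\log n)$ time. I expect the pivot-count lemma of the third paragraph to be the main obstacle: one must set up the continuous slide of the source so that each intermediate event is genuinely a single pivot---this needs care at the vertices of $f$, where the slide passes from one boundary edge to the next---and then run the exchange argument carefully enough that every re-entry of a dart is certified by a genuinely new homotopy class. Everything else (Dijkstra, persistent link-cut trees, and the surface MSSP scaffolding) is either standard or already available from~\cite{cce-msspe-13}.
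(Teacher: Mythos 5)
A point of order first: the paper does not actually prove this lemma --- it imports it wholesale from Erickson, Fox, and Lkhamsuren~\cite{efl-hmcpf-18}, which in turn is a deterministic refinement of the multiple-source shortest-path algorithm of Cabello, Chambers, and Erickson~\cite{cce-msspe-13}; the surrounding text only remarks that the extra factor of $g$ (relative to the randomized $O(gn\log n)$ variant) is the price of determinism. Measured against those cited works, your sketch follows the same route they do: slide the source around $f$ as in Klein~\cite{k-msspp-05}, maintain the shortest-path tree in a persistent dynamic tree, and bound the total number of pivots by $O(gn)$ via an exchange argument showing that repeated entries of a dart yield witness loops in distinct nontrivial homotopy classes, counted by Euler's formula. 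That pivot bound is precisely the main lemma of~\cite{cce-msspe-13}, and you correctly identify it as the crux; your witness-loop construction is the right idea in outline, though the real work there (incorporating the portion of the boundary of $f$ swept between the two source positions, choosing a common basepoint, and handling the transition of the source across vertices of $f$) is exactly what you defer.

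The genuine gap is in how you account for determinism and hence for the stated $O(g^2 n\log n)$ bound. Enforcing unique shortest paths by random infinitesimal weights~\cite{mvv-memi-87} yields only a with-high-probability $O(gn\log n)$ algorithm (the paper says this explicitly just after the lemma), and the lexicographic perturbation of Cabello \etal\ costs an extra factor of $O(\log n)$, not $O(g)$ --- so neither of the mechanisms you invoke produces the bound in the statement. The actual content of~\cite{efl-hmcpf-18} is a \emph{deterministic} perturbation in which each dart's weight is augmented by its integer homology class with respect to a system of $O(g+b)$ arcs; path lengths then live in $\Real\times\Z^{O(g)}$ under lexicographic order, shortest paths are automatically unique, and every comparison or addition performed by the algorithm of~\cite{cce-msspe-13} becomes an $O(g)$-time vector operation, giving $O(g)\cdot O(gn\log n)=O(g^2 n\log n)$ time (the $O(gn\log n)$ space bound is unaffected, as you say). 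Your passing mention of ``consulting an $O(g)$-bit homological label to break ties'' gestures toward this but, as stated, neither guarantees globally consistent tie-breaking (a per-pivot local rule does not obviously yield a genuine shortest-path tree for a single well-defined perturbed metric) nor places the factor-$g$ overhead where it actually arises, namely in every length comparison rather than in ``identifying which dart pivots next.'' Replacing that step with the homology-valued perturbation of~\cite{efl-hmcpf-18} is what closes the argument.
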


The above result is based on a procedure of Chambers \etal~\cite{cce-msspe-13} which requires only
$O(g n \log n)$ preprocessing time but relies on the shortest path between any pair of vertices
being unique.
We can enforce uniqueness of shortest paths and achieve the $O(g n \log n)$ preprocessing time
\emph{with high probability} by adding random infinitesimal weights to each edge \cite{mvv-memi-87}.
Because our minimum-weight $(s,t)$-cut algorithm is otherwise deterministic, we will use the
deterministic $O(g^2 n \log n)$ preprocessing time procedure referenced in
Lemma~\ref{lem:multishort} throughout the rest of this section.

\begin{theorem}
\label{thm:min-cycle}
Let $G$ be a  graph with non-negative edge weights, cellularly embedded on a surface~$\Sigma$ with first Betti number $\beta$, and let~$\cycle$ be a cycle in~$G$ with $k$ edges.  A shortest cycle in~$\Sigma$ that is $\Z_2$-homologous with~$\cycle$ can be computed in $O(\beta k + 8^\beta \beta^3\, n\log n)$ time.
\end{theorem}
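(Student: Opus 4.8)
The plan is to reduce the problem to $O(\beta)$ runs of the multiple-source shortest path machinery of Lemma~\ref{lem:multishort}, each on a once-sliced copy of the $\Z_2$-homology cover $\Gbar$, with Lemma~\ref{lem:nocross} supplying the structural reduction. First I would compute the target signature $h := [\cycle]$ in $O(\beta k)$ time (Lemma~\ref{lem:sign}; its $O(\beta n)$ preprocessing is dominated by the second term of the claimed bound). If $h = 0$ the minimum homologous cycle is trivial, so assume $h \neq 0$. Then I would build $\Gbar$ in $O(2^\beta n)$ time (Lemma~\ref{lem:cover-cxy}) and a greedy system of arcs as in Section~\ref{sec:characterizing_crossings}: each greedy arc is a concatenation $a_i = \sigma_i\cdot e_i\cdot\rev(\tau_i)$ of shortest paths $\sigma_i,\tau_i$ from a shortest-path forest, and every non-null-homologous cycle in $G$ crosses some $a_i$ and hence shares a vertex with some path in $S := \{\sigma_1,\dots,\sigma_\beta,\tau_1,\dots,\tau_\beta\}$. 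Since a minimum-weight cycle $\cycle^*$ homologous to $\cycle$ is $\Z_2$-minimal and (as $h\neq 0$) non-null-homologous, it meets some $\sigma\in S$.

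The core of the argument is the identity
\[
  w(\cycle^*) \;=\; \min_{\sigma\in S}\ \min_{u\in\sigma}\ \min_{\varepsilon\in\{+,-\}}\ \mathrm{dist}_{\Gbar\snip(\sigma,0)}\!\big((u,h_u)^{\varepsilon},\ (u,h_u\oplus h)\big),
\]
where $(\sigma,0)$ is the simple lift of $\sigma$ starting at its basepoint, $h_u$ is the signature of the prefix of $\sigma$ ending at $u$, and $(u,h_u)^{+},(u,h_u)^{-}$ are the two copies of $(u,h_u)$ on the boundary face created by slicing $\Gbar$ along $(\sigma,0)$. The easy ($\geq$) direction: every walk in $\Gbar\snip(\sigma,0)$ projects to a walk of equal length in $\Gbar$, and a walk from $(u,h_u)$ to $(u,h_u\oplus h)$ in $\Gbar$ projects to a closed walk at $u$ with signature $h$, so each right-hand term is the weight of some cycle homologous to $\cycle$. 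For the hard ($\leq$) direction I would take the $\sigma\in S$ that $\cycle^*$ meets and apply Lemma~\ref{lem:nocross}, obtaining a $\Z_2$-minimal cycle homologous to $\cycle^*$ (hence of weight $w(\cycle^*)$) whose lift, based at the first vertex $v$ of $\sigma\cap\cycle^*$, runs along $(\sigma,0)$ for an initial stretch and is otherwise disjoint from it; after slicing, that lift becomes a walk from $(v,h_v)^{\varepsilon}$ (for the correct side $\varepsilon$) to $(v,h_v\oplus h)$, and it attains $\mathrm{dist}_{\Gbar\snip(\sigma,0)}$ between those vertices because, by Lemma~\ref{lem:lift-minimal} and the minimality of $\cycle^*$, $\mathrm{dist}_{\Gbar}\big((v,h_v),(v,h_v\oplus h)\big) = w(\cycle^*)$.

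Turning the identity into an algorithm: for each of the $O(\beta)$ paths $\sigma\in S$ I would slice $\Gbar$ along $(\sigma,0)$ — this changes neither the asymptotic vertex count nor the genus, so by Lemma~\ref{lem:cover-cxy} the sliced surface has $O(2^\beta n)$ vertices and genus $O(2^\beta\beta)$ — and run Lemma~\ref{lem:multishort} from the new boundary face. Each such run costs $O\big((2^\beta\beta)^2\cdot 2^\beta n\log(2^\beta n)\big) = O(8^\beta\beta^2 n(\beta+\log n))$ to build and answers each of the $O(|\sigma|)$ queries appearing in the identity in $O(\log n)$ time; at the overall minimum I would retrace the stored shortest-path tree and project its path down to $G$ to recover an explicit minimum-weight homologous cycle. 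Summing over $S$ and adding the signature step gives $O(\beta)\cdot O(8^\beta\beta^2 n(\beta+\log n)) + O(\beta n\log n) + O(\beta k) = O(\beta k + 8^\beta\beta^3 n\log n)$, the $8^\beta\beta^4 n$ term being absorbed since we may assume $\beta = o(\log n)$.

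The hardest part will be the $\leq$ direction of the identity — getting Lemma~\ref{lem:nocross} to yield that the modified lifted cycle becomes a genuine shortest path of the sliced cover between one boundary copy $(v,h_v)^{\varepsilon}$ and the interior vertex $(v,h_v\oplus h)$: tracking which side $\varepsilon$ the initial boundary stretch lies on, handling cycles that revisit vertices of $\sigma$, and disposing of the corner cases $h = 0$ and $\cycle^*$ meeting a leftover greedy edge $e_i$ rather than $\sigma_i$ or $\tau_i$.
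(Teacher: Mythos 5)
Your proposal matches the paper's proof essentially step for step: compute $[\cycle]$ and handle the null-homologous case, build the $\Z_2$-homology cover together with the forest of shortest paths $S$ from the greedy system of arcs, and for each $\sigma\in S$ slice the cover along $(\sigma,0)$ and invoke Lemma~\ref{lem:multishort} from the new face, with Lemma~\ref{lem:nocross} guaranteeing that some $\Z_2$-minimal homologous cycle is realized as a boundary-to-interior shortest path in one of the sliced covers. Your running-time accounting (including absorbing the $8^\beta\beta^4 n$ term under $\beta = O(\log n)$) is, if anything, slightly more explicit than the paper's, so the argument is correct and takes the same route.
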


\begin{proof}
We begin by computing homology signatures for the edges of $G$ in $O(\beta n)$ time, as described in Section~\ref{sec:characterizing_signatures}.  In $O(\beta k)$ time, we then compute the homology signature~$[\cycle]$.  If $[\cycle] = 0$, we return the empty walk and halt.

Next, we construct the $\Z_2$-homology cover $\Gbar$ in $O(2^\beta n\log n)$ time, as described in Section~\ref{sec:homcover_cover}, as well as the set $S$ of directed shortest paths described in Section~\ref{sec:characterizing_crossings}.
Any cycle homologous with~$\cycle$ must intersect at least one member of~$S$ as~$\cycle$ is not null-homologous.
We look for the shortest path in $\Gbar$ of the canonical form described in Lemma~\ref{lem:nocross}, by considering each shortest path $\sigma\in S$ in turn as follows.

Let us write $(\sigma,0) = (v_0,0) \arcto (v_1,h_1) \arcto\allowbreak \cdots\arcto (v_t, h_t)$.  We construct the combinatorial surface $\Sigmabar\snip(\sigma,0)$ by splitting the path $(\sigma,0)$ into two parallel paths from $(v_0,0)$ to $(v_t,h_t)$, which we denote  $(\sigma,0)^+$ and $(\sigma,0)^-$.  For each index $1\le i\le t-1$, let $(v_i,h_i)^+$ and $(v_i,h_i)^-$ denote the copies of vertex $(v_i,h_i)$  on the paths $(\sigma,0)^+$ and $(\sigma,0)^-$, respectively.  The paths $(\sigma,0)^+$ and $(\sigma,0)^-$ bound a new common face $f_{(\sigma,0)}$ in $\Sigmabar\snip(\sigma,0)$.

Lemma \ref{lem:nocross} implies that if any $\Z_2$-minimal cycle homologous to $\cycle$ intersects $\sigma$, then some $\Z_2$-minimal cycle homologous to $\cycle$ is the projection of a shortest path in $\Sigmabar\snip(\sigma,0)$ from some vertex $(v_i,h_i)^\pm$ to the corresponding vertex $(v_i, h_i\oplus[\cycle])$.  To compute these shortest paths, we implicitly compute the shortest path in $\Sigmabar\snip(\sigma,0)$ from every vertex on the boundary of $f_{(\sigma,0)}$ to \emph{every} vertex of $\Sigmabar\snip(\sigma,0)$, using Lemma~\ref{lem:multishort}.
The resulting algorithm runs in $O(\gbar^2\,\nbar \log \nbar) = O(8^\beta \beta^3\, n\log n)$ time, by Lemma \ref{lem:cover-cxy}.
\end{proof}

By running this algorithm $2^\beta$ times, we can compute the shortest cycle in $\Sigma$ in
\emph{every} $\Z_2$-homology class, in $O(16^\beta \beta^3\, n\log n)$ time.

We note that this result holds in directed graphs as well, as the construction of the $\Z_2$-homology cover ignores weights and the multiple source shortest path algorithm works on directed graphs.

\subsection{Minimum cuts from the homology cover}
\label{sec:homcover_mincut}

We now apply our algorithm for computing $\Z_2$-minimal cycles to the problem of computing $\Z_2$-minimal even subgraphs in undirected surface embedded graphs.
Theorem~\ref{thm:min-cycle} immediately implies that we can compute a minimum-weight \emph{cycle} in
every $\Z_2$-homology class in $O(16^\beta \beta^3\, n\log n)$ time.  However, the minimum weight \emph{even subgraph} in a given homology class may not be (the carrier of) a $\Z_2$-minimal cycle.  In particular, if a $\Z_2$-minimal cycle $\gamma$ traverses any edge more than once, then every minimum-weight even subgraph with signature $[\gamma]$ \emph{must} be disconnected.

However, any \emph{connected} $\Z_2$-minimal even subgraph is the carrier of a $\Z_2$-minimal cycle, and the components of any $\Z_2$-minimal even subgraph are themselves $\Z_2$-minimal even subgraphs.  Thus, we can assemble a $\Z_2$-minimal even subgraph in any homology class from a subset of the $\Z_2$-minimal cycles we have already computed.  The following lemma puts an upper bound on the number of cycles we need.

\begin{lemma}
\label{lem:even-comps}
Every $\Z_2$-minimal even subgraph of $G$ has at most $g+b-1$ components.
\end{lemma}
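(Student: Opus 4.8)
The plan is to exploit the fact that each component of a $\Z_2$-minimal even subgraph is itself a $\Z_2$-minimal even subgraph, and that \emph{no two components can be homologous}. If two components $C_1$ and $C_2$ of a $\Z_2$-minimal even subgraph $H$ were homologous, then $C_1 \oplus C_2$ would be null-homologous, so $H \setminus C_2 = H \oplus C_1 \oplus C_2$ (if $C_1 \oplus C_2$ is null-homologous we can actually do better: replace the pair $C_1, C_2$ by a single $\Z_2$-minimal representative of their common class, or by the empty subgraph if that class is trivial) would be homologous to $H$ with strictly smaller weight, contradicting minimality. More carefully: if $C_1$ and $C_2$ are homologous, then deleting both and adding back a single minimum-weight representative of the class $[C_1]=[C_2]$ yields a subgraph homologous to $H$; since $C_1$ alone already has weight at least that of the minimum representative, and we have removed $C_2$ as well (which has positive weight, as all edge weights are positive and $C_2$ is a nonempty even subgraph), the total weight strictly decreases --- contradiction. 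Hence the components of $H$ lie in \emph{distinct} nonzero homology classes, and moreover their pairwise symmetric differences are never null-homologous, so in fact the set of component homology classes is linearly independent over $\Z_2$.

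First I would make precise the claim that the homology classes $[C_1], \dots, [C_k]$ of the components of a $\Z_2$-minimal even subgraph $H$ are linearly independent in $H_1(\Sigma; \Z_2)$. Suppose some nonempty subset $S \subseteq \{1,\dots,k\}$ satisfies $\bigoplus_{i \in S} [C_i] = 0$. Then $H' := H \oplus \bigoplus_{i \in S} C_i = \bigcup_{i \notin S} C_i$ is homologous to $H$ but has strictly smaller weight (we have removed the edges of the components indexed by $S$, each of which is nonempty and hence has positive weight, and added nothing). This contradicts $\Z_2$-minimality of $H$. Therefore the component classes are linearly independent.

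The second step is simply to bound the size of a linearly independent set: the first homology group $H_1(\Sigma; \Z_2)$ has dimension $\beta = 2g + b - 1$ (when $b \geq 1$; recall we assume nonempty boundary without loss of generality, and the boundaryless case follows by deleting a face, which only decreases $\beta$). So naively $k \leq 2g + b - 1$, which is not yet the claimed bound $g + b - 1$. To get the sharper bound I would use the additional structural constraint coming from the fact that the components are \emph{disjoint simple-ish even subgraphs embedded on $\Sigma$}: disjoint non-crossing cycles (or even subgraphs) on an orientable surface cannot realize an arbitrary linearly independent set of homology classes. Specifically, a collection of $k$ pairwise-disjoint even subgraphs whose classes are linearly independent must satisfy $k \leq g + b - 1$; this is the standard bound on the number of disjoint non-separating-and-non-homologous curves one can place on $\Sigma_{g,b}$. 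One clean way to see it: cut $\Sigma$ along all $k$ components of $H$ simultaneously; since the classes are independent, no subset of components separates the surface, so cutting along all of them keeps $\Sigma$ connected, and each cut either increases the number of boundary components by (at most) two or decreases the genus, and an Euler-characteristic / connectivity count gives $k \leq g + b - 1$. Alternatively, invoke that disjoint curves with independent $\Z_2$-homology classes form part of a ``cut system,'' and a cut system on $\Sigma_{g,b}$ has at most $g + b - 1$ curves.

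The main obstacle I anticipate is precisely the passage from the easy bound $2g + b - 1$ to the sharp bound $g + b - 1$: linear independence in $H_1(\Sigma;\Z_2)$ alone is not enough, and one must genuinely use that the components are \emph{disjoint} (and non-crossing) curves on an \emph{orientable} surface to halve the genus contribution. I would handle this by the cutting argument: let $\Sigma'$ be the result of slicing $\Sigma$ along all $k$ components; independence of the classes implies $\Sigma'$ is connected, each component of $H$ being a simple closed curve (or a disjoint union of such, via a cycle decomposition) on an orientable surface contributes a boundary-doubling, and then $\chi(\Sigma') = \chi(\Sigma)$ together with $\Sigma'$ connected forces the bound. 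The only subtlety is that the components need not be simple cycles --- a component may traverse vertices multiple times --- but each component has a cycle decomposition into non-crossing weakly simple cycles (Lemma~\ref{lem:decomposition}), and one can either argue with those or observe that slicing along a non-crossing even subgraph behaves the same way for Euler-characteristic purposes; I would phrase the count in terms of $\chi$ and the number of boundary components to sidestep the need for strict simplicity.
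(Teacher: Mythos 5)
Your overall outline---components of a $\Z_2$-minimal even subgraph lie in independent homology classes, and at most $g+b-1$ pairwise disjoint curves (or even subgraphs) can realize independent classes---is sound, and both intermediate statements are true; but the justification you give for the second, crucial step fails. On a surface with boundary, linear independence of $\Z_2$-homology classes does \emph{not} imply that the curves are non-separating: a cycle parallel to a boundary component $\delta_i$ separates $\Sigma$, yet its class is nonzero (only the sum of all $b$ boundary classes vanishes). Concretely, on a pair of pants ($g=0$, $b=3$), cycles parallel to $\delta_1$ and $\delta_2$ have independent classes, but cutting along either one already disconnects the surface, and cutting along both leaves three components. So the implication ``the classes are independent, hence no subset separates, hence cutting along all of them keeps $\Sigma$ connected'' is false; note also that if the complement really were connected you would conclude $k\le g$ (that is precisely the paper's definition of genus), not $k\le g+b-1$, which signals that the count as sketched is not the right one. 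The fallback appeal to ``cut systems'' is essentially a restatement of the fact to be proved rather than a quotable lemma.

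The correct way to finish the cutting argument---and essentially what the paper does---is this: cutting along $r$ pairwise disjoint (weakly) simple cycles leaves at least $r-g+1$ components, since each cut either disconnects a component or reduces total genus; and minimality forces every component of the cut surface to meet $\partial\Sigma$, because a component disjoint from $\partial\Sigma$ is bounded by a non-empty null-homologous sub-collection of the cycles, whose removal would contradict $\Z_2$-minimality of $H$. As there are only $b$ boundary cycles, the cut surface has at most $b$ components, whence $r\le g+b-1$; since the number of components of $H$ is at most the number of cycles in a cycle decomposition, the lemma follows. The paper runs this argument directly on a cycle decomposition of $H$ (any $g+b$ disjoint cycles contain a null-homologous non-empty sub-collection), so it never needs your first step about independence of the component classes; this is also safer within your framework, because the null-homologous sub-collection produced by cutting may mix cycles coming from different components of $H$, so it contradicts minimality of $H$ directly but need not contradict independence of the component classes. (If you do want your intermediate claim that $k$ disjoint even subgraphs with independent classes forces $k\le g+b-1$, a clean proof is via the $\Z_2$ intersection form: classes of pairwise non-crossing cycles span an isotropic subspace, whose dimension is at most $g+b-1$. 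Some such repair is needed; the connectivity claim as written does not hold.)
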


\begin{proof}
Let $\gamma_1, \dots, \gamma_{g+b}$ be disjoint simple cycles on an \emph{abstract} surface $\Sigma$ of genus $g$ with $b$ boundaries, and consider the surface $\Sigma' = \Sigma \setminus (\gamma_1 \cup \cdots \cup \gamma_{g+b})$.  The definition of genus implies that $\Sigma'$ cannot be connected; indeed, $\Sigma'$ must have at least $b+1$ components.  So the pigeonhole principle implies that some component $\Sigma''$ of~$\Sigma$ does not contain any of the boundary cycles of $\Sigma$.  The boundary of $\Sigma''$ is therefore null-homologous.

Now let $H$ be an even subgraph of $G$ with more than $g+b-1$ components.  Each component has a cycle decomposition, so $H$ must have a cycle decomposition with more than $g+b-1$ elements.  Thus, the argument in the first paragraph implies that some subgraph of $H$ must be null-homologous.  We conclude that $H$ is not $\Z_2$-minimal.
\end{proof}

\begin{theorem}
\label{thm:min-even}
Let $G$ be an undirected graph with non-negative edge weights, cellularly embedded on a
surface~$\Sigma$ with first Betti number $\beta$.  A minimum-weight even subgraph of $G$ in each
$\Z_2$-homology class can be computed in $O(16^\beta \beta^3\, n\log n)$ time.
\end{theorem}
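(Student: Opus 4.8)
The plan is to combine Theorem~\ref{thm:min-cycle} (computing a $\Z_2$-minimal cycle in every homology class) with Lemma~\ref{lem:even-comps} (every $\Z_2$-minimal even subgraph has at most $g+b-1$ components) via a dynamic program over homology classes. First I would observe that, by Theorem~\ref{thm:min-cycle} applied $2^\beta$ times, we can compute in $O(16^\beta\beta^3\,n\log n)$ time a table $\mathrm{cyc}[h]$ giving the weight of a shortest cycle whose signature is $h$, for every $h\in(\Z_2)^\beta$. The key structural fact to extract from the discussion preceding the theorem is that any \emph{connected} $\Z_2$-minimal even subgraph is the carrier of a $\Z_2$-minimal cycle, so its weight equals $\mathrm{cyc}[h]$ for its signature $h$; and every $\Z_2$-minimal even subgraph is an edge-disjoint (indeed, by Lemma~\ref{lem:decomposition}, non-crossing) union of its connected components, each of which is itself a $\Z_2$-minimal even subgraph. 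Since signatures add by $\oplus$, the signature of the whole subgraph is the $\oplus$-sum of the signatures of the components.

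Next I would set up the dynamic program. For each target signature $h^*$, we want the minimum total weight of a multiset of at most $g+b-1$ homology classes $h_1,\dots,h_j$ (each realized by a $\Z_2$-minimal cycle of weight $\mathrm{cyc}[h_i]$) with $h_1\oplus\cdots\oplus h_j=h^*$. Define $W_k[h]$ to be the minimum weight of such a combination using at most $k$ classes; then $W_0[\mathbf 0]=0$, $W_0[h]=\infty$ for $h\neq\mathbf 0$, and $W_k[h]=\min\bigl(W_{k-1}[h],\ \min_{h'}(W_{k-1}[h\oplus h']+\mathrm{cyc}[h'])\bigr)$. Each layer requires $O(2^\beta\cdot 2^\beta)=O(4^\beta)$ work, and we need $k$ up to $g+b-1=O(\beta)$ layers, for a total of $O(4^\beta\beta)$ time — dominated by the cost of building the $\mathrm{cyc}$ table. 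The answer for homology class $h^*$ is $W_{g+b-1}[h^*]$, and I would note that actual edge sets (not just weights) are recovered by standard backpointer bookkeeping in the DP together with the cycle-extraction in Theorem~\ref{thm:min-cycle}, within the same asymptotic bound.

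Finally I would verify correctness in both directions. Soundness: any combination counted by the DP corresponds to a union of $\Z_2$-minimal cycles with the correct total signature, hence an even subgraph homologous to the target (it may visit edges/vertices repeatedly, but as an element of the homology group only the symmetric-difference / signature matters, and a minimum-weight even subgraph is a genuine subgraph — one should double-check that taking the union of non-crossing minimal cycles yields a legitimate even subgraph, which follows since each cycle is even and $\oplus$ of even subgraphs is even, and we may always reduce to an actual subgraph of no greater weight since weights are non-negative). Completeness: take a genuine minimum-weight even subgraph $H$ with signature $h^*$; by Lemma~\ref{lem:even-comps} it has at most $g+b-1$ components $H_1,\dots,H_j$, each a connected $\Z_2$-minimal even subgraph, hence each the carrier of a $\Z_2$-minimal cycle of weight $\mathrm{cyc}[[H_i]]$, and $\bigoplus_i[H_i]=[H]=h^*$, so $W_{g+b-1}[h^*]\le\sum_i\mathrm{cyc}[[H_i]]=w(H)$. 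The main obstacle I anticipate is the completeness argument's claim that each component of a $\Z_2$-minimal even subgraph is the carrier of a $\Z_2$-minimal \emph{cycle} of exactly the right weight — one must argue that a connected even subgraph admits a closed walk traversing each edge exactly once (an Eulerian argument, valid since every vertex has even degree), and that no cheaper cycle can share its homology class (else swapping it in would improve $H$). Everything else is routine bookkeeping, and the running time is $O(16^\beta\beta^3\,n\log n)$, matching the stated bound.
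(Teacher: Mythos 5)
Your proposal is correct and follows essentially the same route as the paper: compute a $\Z_2$-minimal cycle in every homology class via Theorem~\ref{thm:min-cycle} ($O(16^\beta\beta^3\,n\log n)$ total), then use Lemma~\ref{lem:even-comps} to bound the number of components by $g+b-1$ and assemble the answer with exactly the dynamic program you describe (the paper's $C(h,k)$ is your $W_k[h]$), with the DP cost $O(4^\beta\beta)$ dominated by the cycle computations. The completeness and soundness details you flag (components of a $\Z_2$-minimal even subgraph are themselves $\Z_2$-minimal and are carriers of $\Z_2$-minimal cycles; overlapping cycles can only reduce weight under symmetric difference) are the same considerations the paper relies on.
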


\begin{proof}
Our algorithm computes a minimum-weight cycle~$\gamma_h$ in every $\Z_2$-homology class~$h$ in $O(16^\beta \beta^3\, n\log n)$ time, via Theorem \ref{thm:min-cycle}, and then assembles these $\Z_2$-minimal cycles into $\Z_2$-minimal even subgraphs using dynamic programming.

For each homology class $h\in (\Z_2)^\beta$ and each integer $1\le k\le g+b-1$, let \EMPH{$C(h,k)$} denote the minimum total weight of any set of at most $k$ cycles in $G$ whose homology classes sum to $h$.  Lemma~\ref{lem:even-comps} implies that the minimum weight of any even subgraph in homology class~$h$ is exactly $C(h, g+b-1)$.  This function obeys the following straightforward recurrence:
\[
	C(h,k) = \min\Setbar{C(h_1,k-1) + C(h_2, 1)\strut}{h_1\oplus h_2 = h}.
\]
This recurrence has two base cases: $C(0, k) = 0$ for any integer $k$, and for any homology class $h$, the value $C(h,1)$ is just the length of $\gamma_h$.  A standard dynamic programming algorithm computes $C(h, g+b-1)$ for all $2^\beta$ homology classes $h$ in $O(4^\beta \beta)$ time.  We can then assemble the actual minimum-weight even subgraphs in each homology class in $O(\beta n)$ time.  The total time for this phase of the algorithm is $O(4^\beta \beta + 2^\beta \beta n)$, which is dominated by the time to compute all the $\Z_2$-minimal cycles.
\end{proof}

\begin{corollary}
\label{cor:mincut}
Let $G$ be an edge-weighted undirected graph embedded on a surface with genus $g$ and $b$ boundary
components, and let $s$ and $t$ be vertices of $G$.  We can compute the minimum-weight $(s,t)$-cut
in $G$ in $O(256^g g^3 n \log n)$ time.
\end{corollary}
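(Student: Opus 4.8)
The plan is to derive Corollary~\ref{cor:mincut} directly from Theorem~\ref{thm:min-even} together with the cut–cycle duality established in Lemma~\ref{lem:cut-duality}, exactly parallel to how Corollary~\ref{C:min-cut-crossing} follows from Theorem~\ref{Th:Z2-minimal-crossing}. First I would reduce to the case where $\Sigma$ has no boundary: since $s$ and $t$ are genuine vertices of $G$, the input surface need not have boundary, and if it does we can fill in the boundary faces without changing the relevant homology (as discussed at the start of Section~\ref{S:tree-cotree}). So assume $G$ is embedded on a surface of genus $g$ without boundary.

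Next I would pass to the dual graph $G^*$, which is cellularly embedded on the same closed genus-$g$ surface $\Sigma$, and then delete the two faces $s^*$ and $t^*$ of $G^*$ (equivalently, the two vertices $s$ and $t$ of $G$ become two boundary components of the dual embedding). Call the resulting surface $\Sigma' = \Sigma \setminus (s^* \cup t^*)$; it has genus $g$ and $b = 2$ boundary components, so its first Betti number is $\beta = 2g + b - 1 = 2g + 1$. By Lemma~\ref{lem:cut-duality}, the minimum-weight $(s,t)$-cut $X$ in $G$ corresponds to the minimum-weight even subgraph $X^*$ of $G^*$ that is homologous to $\partial s^*$ in $\Sigma'$. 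The edge weights transfer directly: $X^*$ has the same total weight as $X$.

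The remaining step is to invoke Theorem~\ref{thm:min-even} on $G^*$ embedded in $\Sigma'$, which produces a minimum-weight even subgraph in \emph{every} $\Z_2$-homology class of $\Sigma'$ — in particular in the class of $\partial s^*$ — in $O(16^\beta \beta^3\, n \log n)$ time. One small point worth spelling out: we must be able to identify the homology signature of $\partial s^*$; this is immediate from Lemma~\ref{lem:sign}, since $\partial s^*$ is a cycle (the boundary of a single face of $G^*$) with $O(n)$ edges, so computing $[\partial s^*]$ costs $O(\beta \cdot n)$ time, well within the stated bound. Finally, substituting $\beta = 2g+1$ gives $16^\beta = 16 \cdot 256^g$ and $\beta^3 = O(g^3)$ (with the convention that $g \ge 1$; the $g = 0$ planar case reduces to a known near-linear planar min-cut algorithm, or one checks $\beta = 1$ directly), yielding the claimed running time $O(256^g g^3\, n \log n)$.

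There is no real obstacle here — this corollary is a routine specialization. The only thing requiring a moment's care is the bookkeeping of the parameters under dualization: one must confirm that deleting $s^*$ and $t^*$ produces exactly $b = 2$ boundary components (rather than, say, one boundary if $s$ and $t$ shared a face, which cannot happen here since they are distinct vertices whose dual faces are distinct) and hence $\beta = 2g+1$, so that $O(16^\beta \beta^3\, n\log n)$ simplifies correctly. The complexity of $G^*$ is $O(n)$ whenever $g = O(n^{1-\eps})$, matching the standing assumption of the paper, so the near-linear form of the bound is preserved.
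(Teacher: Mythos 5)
Your proposal is correct and follows essentially the same route the paper intends (and leaves implicit): dualize, remove $s^*$ and $t^*$ to obtain a genus-$g$ surface with $b=2$ boundaries and $\beta = 2g+1$, apply Lemma~\ref{lem:cut-duality} to identify the minimum $(s,t)$-cut with the minimum-weight even subgraph homologous to $\partial s^*$, and invoke Theorem~\ref{thm:min-even}, so that $O(16^\beta \beta^3 n \log n)$ becomes $O(256^g g^3 n \log n)$. The bookkeeping you spell out (distinct dual faces giving two boundary components, computing $[\partial s^*]$ via Lemma~\ref{lem:sign}) is exactly the right sanity check and matches the paper's setup.
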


As discussed above, we can instead randomly perturb edge weights and use the faster multiple-source
shortest paths algorithm of Chambers \etal~\cite{cce-msspe-13} to find a minimum-weight $(s,t)$-cut
with high probability in $O(64^g g^3 n \log n)$ time.

\section{{NP}-Hardness}
\label{S:NPhard}

In this section, we show that finding the minimum-cost even subgraph in a given homology class is {NP}-hard, even when the underlying surface has no boundary.  Our proof closely follows a reduction of McCormick \etal~\cite{mrr-edofm-03} from \textsc{Min2Sat} to a special case of \textsc{MaxCut}.

\begin{theorem}
Computing the minimum-weight even subgraph in a given homology class on a surface without boundary is equivalent to computing a minimum-weight cut in an embedded edge-weighted graph $G$ whose negative-weight edges are dual to an even subgraph in $G^*$.
\end{theorem}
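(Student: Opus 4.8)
The plan is to prove the equivalence by the classical duality between edge cuts and boundary subgraphs, converting the $\Z_2$-homology constraint into a signed minimization by a single sign change. Throughout I use that on a surface without boundary $G^{**}=G$ and $(e^*)^* = e$, and that a set $Y$ of faces of an embedded graph---equivalently, a set of vertices of its dual---has a boundary subgraph $\partial Y$ whose dual $(\partial Y)^*$ is exactly the edge cut of the dual graph induced by $Y$; conversely every cut of the dual graph arises this way, with $(\partial Y)^* = (\partial \bar Y)^*$.

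I would first reduce the even-subgraph problem to the cut problem. An instance of the former consists of a graph $\Gamma$ cellularly embedded on a boundaryless surface, edge weights $w$ (which we may assume positive), and a reference even subgraph $H_0$ naming the target homology class; by the definition of homology, the even subgraphs homologous to $H_0$ are exactly the subgraphs $H_0\oplus\partial Y$ as $Y$ ranges over face subsets of $\Gamma$. Put $G := \Gamma^*$, embedded on the same surface, and give it the weights $w'(e^*) := -w(e)$ for $e\in H_0$ and $w'(e^*) := w(e)$ for $e\notin H_0$. The negative-weight edges of $G$ are then precisely the duals of the edges of $H_0$, so they are dual to an even subgraph of $G^* = \Gamma$, and $G$ is a legal instance of the cut problem. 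Splitting $H_0\oplus\partial Y$ into the disjoint parts $H_0\setminus\partial Y$ and $\partial Y\setminus H_0$ and collecting weights yields, for every face subset $Y$ of $\Gamma$, the identity
\[
  w(H_0\oplus\partial Y) \;=\; w(H_0) \;+\; w'\bigl((\partial Y)^*\bigr).
\]
Since $w(H_0)$ is a constant computable from the input, and the cuts of $G$ are exactly the sets $(\partial Y)^*$, a minimum-weight cut of $G$ yields a minimum-weight even subgraph homologous to $H_0$ and conversely; the sole degenerate case is $Y=\emptyset$, which corresponds to $H_0$ itself and is covered by also evaluating $w(H_0)$ directly.

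The reverse reduction is this identity read backwards. Given a cut instance $(G,\hat w)$ as in the statement, with negative-weight edge set $N$ and an even subgraph $\tilde H$ of $G^*$ whose duals are exactly $N$, I would build the even-subgraph instance on $G^*$ with reference class $\tilde H$ and nonnegative weights $u(e^*) := |\hat w(e)|$ (equal to $-\hat w(e)>0$ on $N$ and to $\hat w(e)\ge 0$ elsewhere). The same identity, with $\partial Y$ now ranging over boundary subgraphs of $G^*$ and $(\partial Y)^*$ over cuts of $G$, reads $u(\tilde H\oplus\partial Y)=u(\tilde H)+\hat w((\partial Y)^*)$ with $u(\tilde H)=\sum_{e\in N}|\hat w(e)|$, so a minimum-weight even subgraph of $G^*$ homologous to $\tilde H$ determines a minimum-weight cut of $G$ after subtracting this known constant. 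Both reductions are clearly computable in linear time from the embeddings.

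I expect the main work to be careful bookkeeping rather than real difficulty: verifying that boundary subgraphs correspond to dual cuts (and that $Y$ and $\bar Y$ give the same one, matching the symmetry of cuts); checking via $G^{**}=G$ that the two reductions are literal inverses; and---the one point that genuinely carries the statement---confirming that the hypothesis ``negative-weight edges dual to an even subgraph of $G^*$'' is simultaneously the exact image of the first reduction and the exact input condition needed by the second. The weight identity itself is routine inclusion--exclusion over $H_0\cap\partial Y$; the only delicate points beyond it are the sign convention, which turns a $\Z_2$-homology equality into a signed optimum, and the two degeneracies---zero weights, removed by assuming positivity, and the empty cut, handled separately.
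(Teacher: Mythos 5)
Your reduction is correct and is essentially the paper's own argument: flipping signs on the edges of the reference even subgraph (the paper's residual weights $c_H$) turns homology localization into minimum cut in the dual via the boundary-subgraph/cut correspondence, and taking absolute values reverses it, with the same weight identity $c(H\oplus\partial Y)=c(H)+c_H\bigl((\partial Y)^*\bigr)$ at the core. Your explicit handling of the empty-cut degeneracy matches the paper's remark that the empty set is a zero-cost cut, so nothing is missing.
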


\begin{proof}
Fix a graph $G$ embedded on a surface $\Sigma$ without boundary, together with a weight function $c\colon E\to \Real$.  For any even subgraph $H$ of $G$, let $c(H) = \sum_{e\in H} c(e)$, and let $\textsc{MinHom}(H,c)$ denote the even subgraph of minimum weight in the homology class of $H$.

Consider the \emph{residual weight} function $c_H\colon E\to \Real$ defined by setting $c_H(e) = c(e)$ for each edge $e\not\in H$, and $c_H(e) = -c(e)$ for each edge $e\in H$.  For any subgraph $H'$ of $G$, we have $c(H') = c_H(H\oplus H') + c(H)$, which immediately implies that
\(
    \textsc{MinHom}(H,c) ~=~
    H \oplus \textsc{MinHom}(\varnothing, c_H).
\)

Every boundary subgraph of $G$ is dual to a cut in the dual graph $G^*$.  Thus, we have reduced our problem to computing the minimum cut in $G^*$ with respect to the weight function $c_H$.  Since the empty set is a valid cut with zero cost, the cost of the minimum cut is never positive.  In particular, $H$ is the minimum-cost even subgraph in its homology class if and only if the cut in $G^*$ with minimum residual cost is empty.

In fact, our reduction is reversible.  Suppose we want to find the minimum cut in an embedded graph $G = (V, E)$ with respect to the cost function $c\colon E\to \Real$, where every face of $G$ is incident to an even number of edges with negative cost.  Let $H = \set{{e\in E}\mid {c(e)<0}}$ be the subgraph of negative-cost edges, and let $X$ denote the (possibly empty) set of edges in the minimum cut of $G$.  Consider the \emph{absolute cost} function $\abs{c}\colon E^*\to \Real$ defined as $\abs{c}(e^*) = \abs{c(e)}$.  Then $(H\oplus X)^*$ is the even subgraph of $G^*$ of minimum absolute cost that is homologous to $H^*$.
\end{proof}

We now prove that this special case of the minimum cut problem is {NP}-hard, by  reduction from \textsc{MinCut} in graphs with negative edges.  This problem includes \textsc{MaxCut} as a special case (when every edge has negative cost), but many other special cases are also {NP}-hard~\cite{mrr-edofm-03}.  The output of our reduction is a simple triangulation; the reduction can be simplified if graphs with loops and parallel edges are allowed.

Suppose we are given an \emph{arbitrary} graph $G = (V,E)$ with $n$ vertices and an \emph{arbitrary} cost function $c\colon E\to \Real$.  We begin by computing a cellular embedding of $G$ on some orientable surface, by imposing an arbitrary cyclic order on the edges incident to each vertex.
(We can compute the maximum-genus orientable cellular embedding  in polynomial time~\cite{fgm-fmggi-88}.)  Alternately, we can add zero-length edges to make the graph complete and then use classical results of Ringel, Youngs, and others \cite{ry-shmcp-68,r-mct-74} to compute a minimum-genus orientable embedding of $K_n$ in polynomial time.  Once we have an embedding, we add vertices and zero-cost edges to obtain a triangulation.

Let $C$ be the sum of the absolute values of the edge costs: $C:= \sum_e \abs{c(e)}$.
A \EMPH{cocycle} of embedded graph $G$ is a subset of edges forming a cycle in the dual $G^*$.
We locally modify both the surface and the embedding to transform each negative-weight edge into a cocycle, as follows.  We transform the edges one at a time; after each iteration, the embedding is a simple triangulation.  (Our reduction can be simplified if a simple graph is not required.)  For each edge~$uv$ with $c(uv)<0$, remove $uv$ to create a quadrilateral face.  Triangulate this face as shown in Figure \ref{fig:addhandle}; we call the new faces $uu_1u_2$ and $vv_1v_2$ \emph{endpoint triangles}.  Assign cost $C$ to the edges of the endpoint triangles and cost zero to the other new edges. Glue a new handle to the endpoint triangles, and triangulate the handle with a cycle of six edges, each with cost $c(uv)/6$.  These six edges form a cocycle of cost $c(uv)$, which we call an \emph{edge cocycle}, in the new embedding.  Each iteration adds $5$ vertices and $21$ edges to the graph and increases the genus of the underlying surface by $1$.

\begin{figure}[hbt]
\centering\includegraphics[scale=0.4]{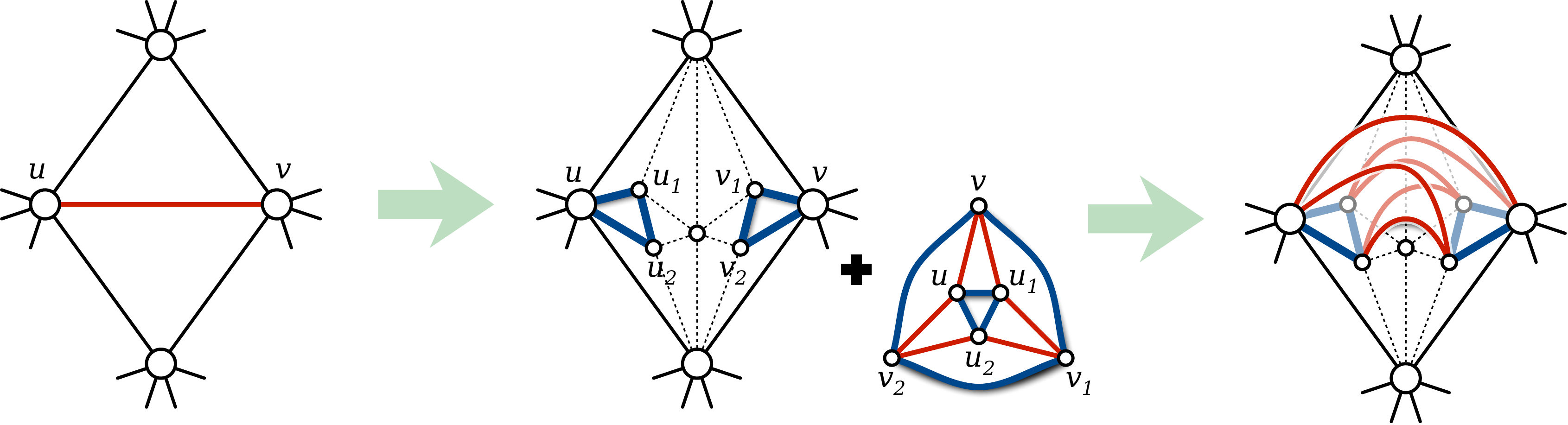}
\caption{Adding a handle to transform a negative edge into a negative cocycle.  Thick (blue) edges have cost $C$; dashed edges have cost zero.}
\label{fig:addhandle}
\end{figure}

Let $G'$ denote the transformed graph and $c'\colon E(G')\to \Real$ its associated cost function.  The minimum cut in $G'$ cannot contain any edge of an endpoint triangle.  Thus, for each edge cocycle, either all six edges cross the cut, or none of them cross the cut.  It follows that the minimum cut in $G'$ corresponds to a cut with equal cost in the original graph $G$.  Conversely, any cut in $G$ can be transformed into a cut in~$G'$ of equal cost.  Thus, computing the minimum cut in $G'$ is equivalent to computing the minimum cut in $G$.

\begin{theorem}
Given an even subgraph $H$ of an edge-weighted graph $G$ embedded on a surface without boundary, computing the minimum-weight even subgraph homologous to $H$ is strongly {NP}-hard.
\end{theorem}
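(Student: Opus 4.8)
The plan is to prove strong NP-hardness by reduction from \textsc{MaxCut}, which is strongly {NP}-hard, via the first theorem of this section: computing a minimum-weight even subgraph in a prescribed homology class is equivalent to computing a minimum-weight cut in an embedded edge-weighted graph whose negative-cost edges are dual to an even subgraph of the dual. It therefore suffices to show that this \emph{restricted} minimum-cut problem is strongly {NP}-hard, because the reverse direction established in that theorem's proof then transfers the hardness: given the restricted min-cut instance $(G,c)$, the corresponding homology instance is the dual graph $G^*$ with weights $\abs{c}$ and given even subgraph $H^*$, where $H = \set{e : c(e) < 0}$, and the minimum-weight even subgraph homologous to $H^*$ is precisely $(H\oplus X)^*$ for the minimum cut $X$.

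I would start from an arbitrary \textsc{MaxCut} instance, i.e.\ an arbitrary graph $G$ equipped with a cost function $c$ in which every edge has negative cost, after scaling all input costs by $6$ so that every weight produced below stays an integer of polynomial magnitude (keeping the reduction strongly polynomial). First embed $G$ cellularly on some orientable surface --- by imposing an arbitrary rotation system, or by padding with zero-cost edges to form a complete graph and invoking the classical minimum-genus embeddings of $K_n$ --- and then refine this to a simple triangulation by inserting vertices and zero-cost edges; only the triangulated structure matters, not the genus of the host surface.

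The core of the construction is the local surgery of Figure \ref{fig:addhandle}, performed once per negative-cost edge $uv$: delete $uv$, triangulate the resulting quadrilateral so as to create two \emph{endpoint triangles} at $u$ and $v$ whose three sides each cost $C := \sum_e \abs{c(e)}$ (all other new edges costing zero), attach a handle joining the two endpoint triangles, and triangulate that handle by a hexagonal cocycle whose six edges each cost $c(uv)/6$. Each surgery adds a bounded number of vertices and edges and raises the genus by one, so the transformed graph $G'$ with cost function $c'$ is computed in polynomial time and remains a simple triangulation. Its six handle edges form a cocycle of total cost $c(uv)$; since a cocycle meets every face an even number of times, the negative-cost edges of $G'$ are collectively dual to an even subgraph of $(G')^*$, so $(G',c')$ is a legitimate instance of the restricted minimum-cut problem.

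The step I expect to require the most care is verifying that the minimum-cut values of $(G',c')$ and $(G,c)$ coincide. The key point is that $C$ is at least the total absolute cost available among all remaining edges, so no minimum cut of $G'$ can include an edge of an endpoint triangle; consequently, for each edge cocycle either all six of its edges cross the minimum cut or none do. This gives a cost-preserving bijection between cuts of $G$ and cuts of $G'$ that avoid the endpoint triangles --- cutting the cocycle of $uv$ contributes $c(uv)$, exactly as cutting $uv$ would in $G$ --- so the two minimum-cut values are equal. Since \textsc{MaxCut}, equivalently minimum cut with all-negative costs, is strongly {NP}-hard, the restricted minimum-cut problem is strongly {NP}-hard, and hence so is computing a minimum-weight even subgraph homologous to a given even subgraph.
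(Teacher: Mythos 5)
Your proposal is correct and follows essentially the same route as the paper: reduce from \textsc{MaxCut} (the paper phrases it as \textsc{MinCut} with negative-cost edges, of which \textsc{MaxCut} is the all-negative special case) through the equivalence theorem, embed and triangulate, and perform the per-edge handle surgery with cost-$C$ endpoint triangles and a six-edge cocycle of cost $c(uv)/6$ so that minimum-cut values are preserved. The only additions beyond the paper's argument are minor bookkeeping details (scaling by $6$ for integrality), so no further comparison is needed.
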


Our reduction can be modified further to impose other desirable properties on the output instances, for example, that the graph is unweighted, every vertex has degree $3$, or the input subgraph $H$ is a simple cycle.

Finally, we emphasize that the {NP}-hardness of this problem relies crucially on the fact that we are using homology with coefficients taken from the finite field $\Z_2$.  The corresponding problem for homology with real or integer coefficients is a minimum-cost circulation problem, and thus can be solved in polynomial time.
Chambers, Erickson, and Nayyeri~\cite{cen-hfcc-12} show that this circulation problem can be solved in near-linear time for graphs of constant genus and polynomially bounded integer edge capacities using very different techniques.

\def\minSS{X}

\section{Global Minimum Cut}
\label{sec:global}

Finally, we describe our algorithm to compute \emph{global} minimum cuts in surface-embedded graphs, where no source and target vertices are specified in advance.  
Unlike previous sections, we begin our exposition assuming that the underlying surface of the input
graph does \emph{not} have boundary, because filling in any boundaries with disks does not change
the minimum cut.
We also assume without loss of generality that no edge of the input graph has the same face on both
sides; we can enforce this assumption by adding zero-weight edges if necessary.

As in previous sections, it is convenient to work in the dual graph.  We cannot apply Lemma \ref{lem:cut-duality} directly, but the following lemma similarly characterizes global minimum cuts in surface graphs in terms of homology in the dual graph.
Suppose we have a graph embedded in a surface with a single boundary component.
A \EMPH{separating subgraph} is any non-empty boundary subgraph, or equivalently, the boundary of the union of a \emph{non-empty} set of faces.

\begin{lemma}
\label{lem:mincut-z2}
Let~$G$ be an undirected edge-weighted graph embedded on a surface~$\Sigma$ without boundary, and
let $s$ be an arbtirary vertex of $G$.
If~$\Cut$ is a global minimum cut in~$G$, then~$\Cut^*$ is a minimum-weight separating subgraph
of~$G^*$ in $\Sigma \setminus s^*$.
\end{lemma}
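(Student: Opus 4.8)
The plan is to set up a weight-preserving correspondence between cuts of $G$ and separating subgraphs of $G^*$ in $\Sigma\setminus s^*$, in direct analogy with the first part of Lemma~\ref{lem:cut-duality}. Throughout I may assume $G$ is connected; otherwise the global minimum cut is the empty edge set and the statement is degenerate. Connectivity gives two things I will use: a minimum cut of $G$ is nonempty, and for a set $\mathcal F$ of faces of $G^*$ the boundary $\partial\mathcal F$ is empty only when $\mathcal F$ is empty or is all of the faces.

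First I would turn a global minimum cut $X$ into a separating subgraph. The cut $X$ partitions $V$ into two nonempty sets $A$ and $B$; relabel so that $s\in B$. Then $A$ is a nonempty subset of $V\setminus\{s\}$, so the corresponding set $A^*$ of faces of $G^*$ consists only of faces surviving in $\Sigma\setminus s^*$, and $X^* = \partial A^*$ is a boundary subgraph of $G^*$ there. It is nonempty since $X$ is (using connectivity), so $X^*$ is a separating subgraph of $G^*$ in $\Sigma\setminus s^*$ in the sense defined just above the lemma.

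Next I would prove minimality by showing that \emph{every} separating subgraph $Y$ of $G^*$ in $\Sigma\setminus s^*$ comes from a cut of $G$ of the same weight. By definition $Y = \partial\mathcal F$ for some nonempty set $\mathcal F$ of faces of $G^*$ in $\Sigma\setminus s^*$; writing $U$ for the corresponding set of vertices of $G$, we have $\varnothing\ne U\subseteq V\setminus\{s\}$. Since the two faces of $G^*$ incident to a dual edge $e^*$ are exactly the two endpoints of $e$ in $G$, the primal subgraph $\{e\in E : e^*\in Y\}$ is precisely the edge cut of $G$ between $U$ and $V\setminus U$, and this is a genuine cut because both sides are nonempty (the second contains $s$). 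As each dual edge inherits the weight of its primal, this cut has weight $w(Y)$, so $w(X^*)=w(X)\le w(Y)$ because $X$ is a \emph{global} minimum cut. Hence $X^*$ is a minimum-weight separating subgraph of $G^*$ in $\Sigma\setminus s^*$.

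I expect the only real obstacle to be bookkeeping: keeping the three dual incarnations of the same object straight (a vertex bipartition of $G$, a subset of faces of $G^*$, and an edge set), and in particular seeing that working in $\Sigma\setminus s^*$ rather than the closed surface $\Sigma$ loses nothing—every bipartition of $V$ has exactly one side avoiding $s$, so deleting the face $s^*$ still lets us realize every cut while fixing a canonical side. I would also take a moment on the boundary cases: disconnected $G$ (handled at the start), and the cut isolating $s$, which corresponds to taking $\mathcal F$ to be all faces of $G^*$ in $\Sigma\setminus s^*$ and $Y=\partial s^*$.
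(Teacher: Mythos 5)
Your proposal is correct and follows essentially the same route as the paper's proof: both directions of the weight-preserving duality between cuts of $G$ and nonempty boundary (separating) subgraphs of $G^*$ in $\Sigma\setminus s^*$, with global minimality of $X$ transferring to minimality of $X^*$. The only cosmetic difference is that you identify the primal edge set of a separating subgraph directly as the edge cut between $U$ and $V\setminus U$, while the paper argues that any path from $s$ to a vertex $t$ dual to a face inside the separating subgraph must cross it; these are the same argument.
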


\begin{proof}
  Let~$\Cut$ be an arbitrary cut in~$G$.  The cut partitions the vertices of $G$
  into two disjoint subsets~$S$ and~$T$ with $s \in S$.
  Therefore, the dual subgraph~$\Cut^*$
  partitions the faces of~$G^*$ into two disjoint subsets~$S^*$ and~$T^*$ with $s^* \in S^*$.
  Further,~$\Cut^*$ is the boundary of the union of faces in~$T^*$, implying
  that~$\Cut^*$ is a boundary subgraph of~$\Sigma$ and therefore separating.

  Conversely, let~$\Cut^*$ be any separating subgraph of~$G^*$.
  Subgraph~$\Cut^*$ is the boundary of a nonempty subset of the faces~$T^*$
  of~$G^*$.
  Let $t^*$ be a face in~$T^*$.
  Any path from~$s$ to~$t$ in the primal
  graph~$G$ must traverse at least one edge of~$\Cut$.  We conclude that~$\Cut$ is
  a cut (in particular, an $(s,t)$-cut).
\end{proof}

In light of this lemma, the remainder of this section describes an algorithm to find a minimum-weight separating subgraph in a given surface-embedded graph $G$ with non-negative edge weights.
Graph $G$ is embedded in a surface $\Sigma$ with \emph{exactly one boundary component $s^*$}.
%


Let $\minSS$ be a minimum-weight separating subgraph.
Surface $\Sigma \setminus \minSS$ has exactly one component not incident to $s^*$; otherwise, the
boundary of any one of these components is a smaller separating subgraph.
Abusing terminology slightly,
call the separating subgraph $\minSS$ \EMPH{contractible} if this component of $\Sigma \setminus
\minSS$ is a disk, and \EMPH{non-contractible} otherwise.  If $\minSS$ is contractible, then
$\minSS$ is actually a shortest (weakly) simple contractible cycle of $G$ in the surface $\Sigma$; otherwise, $\minSS$ can be decomposed into one or more simple cycles, each of which is non-contractible.  See Figure~\ref{fig:global_cases}.

%
%
%

\begin{figure}[ht]
\centering
\includegraphics[height=1in]{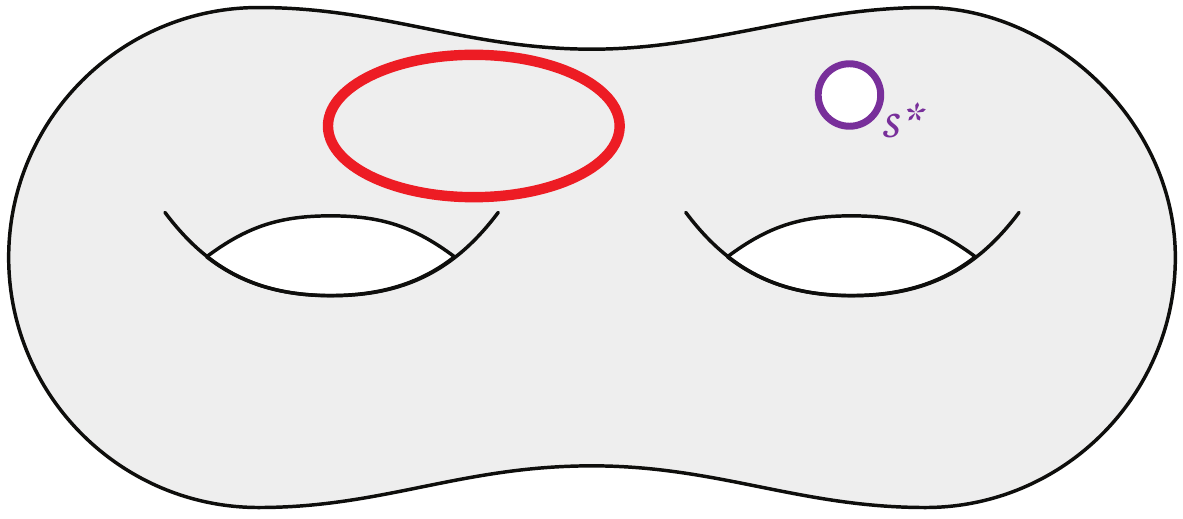}\qquad
\includegraphics[height=1in]{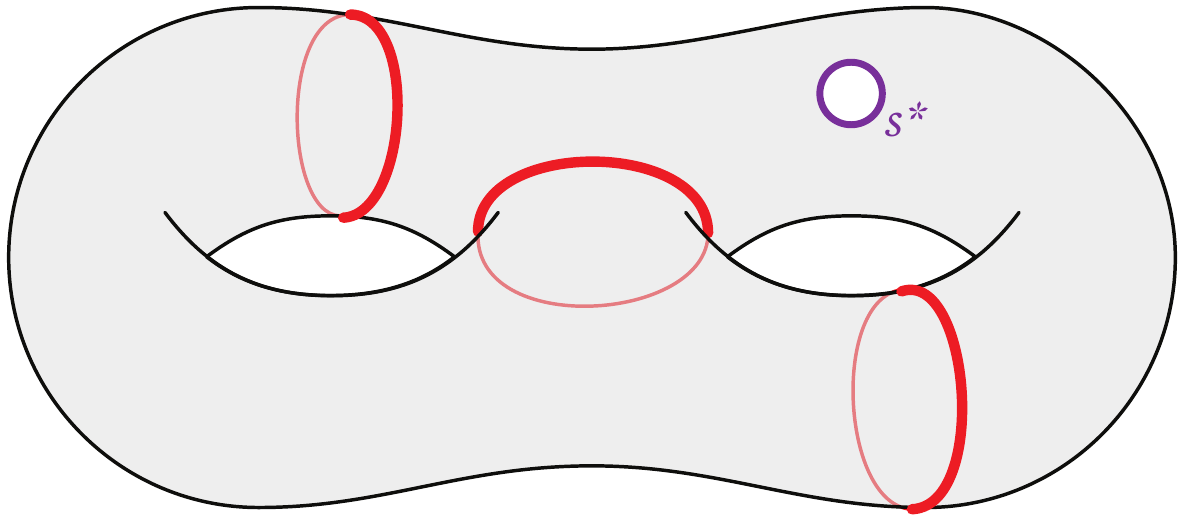}
\caption{Two types of minimum-weight separating subgraphs: a simple contractible cycle and otherwise.}
\label{fig:global_cases}
\end{figure}

Thus, in principle, we can find a minimum-weight separating subgraph by first computing a shortest contractible cycle, then computing a minimum-weight separating collection of non-contractible cycles, and finally returning the lighter of these two subgraphs.  Unfortunately, we do not know how to solve either of these subproblems in our stated time bounds, so our algorithm takes a more subtle approach. 


In Section \ref{sec:global_contractible}, we describe an algorithm that computes a minimum-weight separating subgraph \emph{if} any minimum-weight separating subgraph is contractible.  Similarly, in Section \ref{sec:global_non-contractible}, we describe an algorithm that computes a minimum-weight separating subgraph \emph{if} any minimum-weight separating subgraph is non-contractible.  In both cases, if no minimum-weight separating subgraph satisfies the corresponding condition, the algorithm still returns a boundary subgraph, but this subgraph could be empty or have large weight.  By running both subroutines and returning the best result, we are guaranteed to find a minimum-weight separating subgraph in $G$, no matter which category it falls into.


\subsection{Contractible}
\label{sec:global_contractible}

First we consider the case where some minimum-weight separating subgraph $\minSS$ is contractible.  
We begin by borrowing a result of Cabello~\cite[Lemma 4.1]{c-fscss-10}.  Recall that an arc or cycle is \emph{tight} if it has minimum-weight among all arcs or cycles in its homotopy class.

\begin{lemma}[Cabello~\cite{c-fscss-10}]
\label{lem:disjoint-tight-arc}
Let $\alpha$ be a tight arc or tight cycle on $G$.  There exists a shortest simple contractible cycle that does not cross $\alpha$.
\end{lemma}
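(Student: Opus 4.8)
The plan is to prove this by the standard ``uncrossing'' exchange argument. I would fix, among all shortest simple contractible cycles of $G$ in $\Sigma$, one cycle $\gamma$ that crosses $\alpha$ as few times as possible, and then show this minimum is $0$. By an infinitesimal perturbation of $\alpha$ (needed only if $\alpha$ is not already simple or not transverse to $\gamma$) we may assume every intersection of $\gamma$ with $\alpha$ is a transverse crossing while $\alpha$ stays tight. Since $\gamma$ is simple and contractible it bounds a disk $D \subseteq \Sigma$, which we take to be the side not meeting the boundary $s^*$; in particular the endpoints of $\alpha$, if $\alpha$ is an arc, lie on $s^*$ and hence outside the interior of $D$. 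Consequently $\alpha \cap D$ is a finite, pairwise disjoint union of simple sub-arcs of $\alpha$, each of which is a chord of $D$ with both endpoints on $\gamma$. Assuming for contradiction that $\gamma$ crosses $\alpha$, this family of chords is nonempty.

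The core step is a surgery on $\gamma$ across an innermost chord. Because the chords in $\alpha\cap D$ are pairwise disjoint, the regions into which they cut $D$ form a tree under the ``share a chord'' adjacency; a leaf region supplies a chord $\alpha'$ and a sub-path $\gamma'$ of $\gamma$ with $\alpha'\cup\gamma' = \partial D'$ for a sub-disk $D' \subseteq D$ containing no other chord, which forces the interior of $\gamma'$ to be disjoint from $\alpha$. Write $\gamma$ as the concatenation $\gamma'\cdot\gamma''$ of $\gamma'$ with the complementary sub-path $\gamma''$, and let $\hat\gamma$ be the cycle obtained from $\gamma''$ by appending a copy of $\alpha'$ pushed infinitesimally into $D \setminus D'$. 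Then $\hat\gamma$ is a (weakly) simple cycle of $G$ bounding the disk $D\setminus D'$, so it is again contractible. By pushing $\alpha'$ to the $D\setminus D'$ side we avoid creating crossings with $\alpha$ at the two splice points, and the copy of $\alpha'$ — being parallel to a sub-arc of the simple curve $\alpha$ — contributes no crossing of its own, so $\hat\gamma$ meets $\alpha$ only at crossings lying in the interior of $\gamma''$. Because the interior of $\gamma'$ was crossing-free while the two endpoints of $\alpha'$ are genuine crossings of $\gamma$ with $\alpha$, the cycle $\hat\gamma$ crosses $\alpha$ strictly fewer times than $\gamma$.

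To finish, it remains to check that $\hat\gamma$ is still a \emph{shortest} simple contractible cycle, and this is exactly where the tightness of $\alpha$ enters. The chord $\alpha'$ and the path $\gamma'$ cobound the disk $D'$, so they are homotopic rel endpoints (and, when $\alpha$ is a cycle, replacing $\alpha'$ by $\gamma'$ in $\alpha$ produces a freely homotopic cycle); tightness of $\alpha$ therefore gives $w(\gamma') \ge w(\alpha')$. Hence $w(\hat\gamma) = w(\gamma'') + w(\alpha') \le w(\gamma'') + w(\gamma') = w(\gamma)$, while $w(\hat\gamma) \ge w(\gamma)$ since $\gamma$ is a shortest simple contractible cycle; so $w(\hat\gamma) = w(\gamma)$. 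Thus $\hat\gamma$ is a shortest simple contractible cycle crossing $\alpha$ fewer times than $\gamma$, contradicting the choice of $\gamma$. Therefore $\gamma$ crosses $\alpha$ zero times, which proves the lemma.

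I expect the main obstacle to be carrying out this surgery rigorously in the combinatorial (embedded-graph) setting rather than the purely topological one: one must argue, using the paper's perturbation-based definition of ``crossing'', that $\hat\gamma$ is an honest (weakly) simple cycle of $G$, that the reroute along $\alpha'$ can be performed so that no new crossing with $\alpha$ is introduced at either splice point, and that the crossing count with $\alpha$ genuinely decreases. The remaining fussy points — empty or length-zero sub-paths $\gamma'$ or $\gamma''$, zero-weight cycles, and intersections that merely touch rather than cross — should all be absorbed by the general-position assumption and do not affect the structure of the argument.
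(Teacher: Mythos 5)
You should know up front that the paper contains no proof of this statement: it is imported verbatim from Cabello (Lemma~4.1 of the cited paper), so the only meaningful comparison is with the standard uncrossing argument, which is exactly the strategy you adopt. At the level of curves on the surface your surgery is sound: the disk $D$, the innermost chord $\alpha'$, the inequality $w(\alpha')\le w(\gamma')$ forced by tightness of $\alpha$ (by substituting $\gamma'$ for $\alpha'$ inside $\alpha$), and the drop of two in the number of crossings are all correct.

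The genuine gap is precisely the point you flag and then dismiss: the spliced object $\hat\gamma=\gamma''\cdot\alpha'$ need not be a \emph{simple cycle of $G$}, and general position does not rescue this. Both $\alpha$ and $\gamma$ are walks in the embedded graph; $\alpha$ may run along edges and vertices of $\gamma$ without crossing it (and a tight arc of this paper's greedy system is itself only weakly simple, possibly repeating edges), so after you perturb $\alpha$ to make intersections transverse, $\alpha'$ is no longer a walk in $G$, and the underlying walk $\gamma''\cdot\alpha'$ can repeat vertices and edges. Your contradiction, however, quantifies over \emph{simple} contractible cycles twice: the step $w(\hat\gamma)\ge w(\gamma)$ and the claim that $\hat\gamma$ violates the crossing-minimal choice of $\gamma$ both require $\hat\gamma$ to lie in that class --- and simplicity is also the form in which the paper actually uses the lemma, since in Lemma~\ref{lem:contractible-alg} the cycle must survive as a simple cycle of the cut-open planar graph. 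Nor is the repair automatic: pinching a weakly simple contractible closed walk at a repeated vertex can produce pieces that are individually non-contractible, so one cannot simply ``extract'' a simple contractible cycle of no greater weight and no more crossings without further argument. Closing the proof needs an additional idea --- a careful treatment of overlaps between $\alpha$ and $\gamma$, a lexicographic choice of $\gamma$, or a tie-breaking weight perturbation that preserves the tightness of $\alpha$, in the spirit of the proof of Lemma~\ref{lem:crossing} --- none of which appears in your write-up.
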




Cabello~\cite{c-fscss-10} uses this observation to compute a shortest simple contractible cycle in a surface embedded graph; unfortunately, his algorithm runs in $O(n^2\log n)$ time.
Cabello \etal~\cite{cdem-fotc-10} use the same observations to find a shortest contractible \emph{closed walk} that encloses a non-empty set of faces in $O(n\log n)$ time (with no dependence on $g$).
%
%
However, the resulting closed walk $W$ might traverse some edges of~$G$ more than once.  The set of edges that $W$ traverses an odd number of times does constitute a boundary subgraph; unfortunately, this subgraph could be disconnected or even empty.

Our algorithm closely follows the algorithm of Cabello \etal~\cite{cdem-fotc-10}, but with modifications to ensure that the resulting walk traverses at least one edge an odd number of times.

We use the slicing operation ($\snip$) along tight cycles and arcs in~$G$.  The following lemma
implies it is safe for our algorithm to find minimum-weight separating subgraphs in sliced copies
of~$\Sigma$.

\begin{lemma}
\label{lem:global_null-homologous-projections}
Let~$\alpha$ be an arbitrary simple cycle or arc in~$G$.
Let
${\Sigmasnip = \Sigma \snip \alpha}$ and let~$\Gsnip = G \snip \alpha$. Any null-homologous closed
walk~$\gammasnip$ in~$\Gsnip$ projects  to a null-homologous closed walk in~$G$.
\end{lemma}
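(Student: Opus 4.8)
The plan is to translate the homological statement into the language of faces and boundaries. A closed walk $\gammasnip$ in $\Gsnip$ is null-homologous if and only if, viewed as an even subgraph (taking each edge with the parity of the number of times $\gammasnip$ traverses it), it equals the boundary of some subset $\mathcal{Y}$ of faces of $\Gsnip$. The key observation is that slicing along $\alpha$ sets up a natural bijection between the faces of $\Gsnip$ and the faces of $G$: slicing never destroys, merges, or creates faces; it only duplicates the vertices and edges of $\alpha$. Thus each face $f'$ of $\Gsnip$ projects to a well-defined face $\pi(f')$ of $G$, and this correspondence is a bijection. Given the witnessing face set $\mathcal{Y}$ for $\gammasnip$, I would let $\mathcal{Y}_G = \{\pi(f') : f' \in \mathcal{Y}\}$ be the corresponding set of faces in $G$.

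First I would argue that the projection of $\gammasnip$, call it $\gamma$, is exactly the boundary (as an even subgraph) of $\mathcal{Y}_G$. The point is that the projection map $\pi$ commutes with taking boundaries, in the following sense: an edge $e$ of $G$ either is disjoint from $\alpha$, in which case it has a unique preimage edge $e'$ in $\Gsnip$ with the same two incident faces; or $e$ lies on $\alpha$, in which case it has two preimage edges $e'_+$ and $e'_-$ in $\Gsnip$, each bordering the new boundary on one side and one of the original (projected) faces on the other. In the first case, $e \in \partial \mathcal{Y}_G$ iff exactly one of its two faces is in $\mathcal{Y}_G$, iff exactly one of the two faces of $e'$ is in $\mathcal{Y}$, iff $e' \in \partial\mathcal{Y}$; and $e$ appears in $\gamma$ with the same parity that $e'$ appears in $\gammasnip$. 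In the second case, one checks the parities of $e'_+$ and $e'_-$ in $\gammasnip$ against membership of the two projected faces in $\mathcal{Y}$, and sees that the parity with which $e$ is traversed by $\gamma$ matches whether exactly one of those faces lies in $\mathcal{Y}_G$. Summing these local checks shows $\gamma$ agrees with $\partial\mathcal{Y}_G$ as even subgraphs of $G$, so $\gamma$ is a boundary subgraph and hence null-homologous in $G$.

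I expect the main obstacle to be the bookkeeping in the second case above: when an edge $e$ of $\alpha$ is split into $e'_+$ and $e'_-$, one must carefully track which original faces of $G$ those two darts see, and confirm that the contribution of $e$ to $\gamma$ (with multiplicity, reduced mod $2$) is consistent with $\partial\mathcal{Y}_G$ no matter how $\gammasnip$ traverses the two copies. The cleanest way to handle this is to work entirely modulo $2$ from the start — i.e., identify $\gammasnip$ with its $\Z_2$-even-subgraph and use the identity $[\,\cdot\,]$-additivity of boundaries — so that the statement reduces to the single equation $\pi_\#(\partial\mathcal{Y}) = \partial(\pi(\mathcal{Y}))$ in $\Z_2$-chains, where $\pi_\#$ is the induced map on $1$-chains sending $e' \mapsto e = \pi(e')$ (with the two copies $e'_\pm$ both mapping to $e$, so they cancel in pairs). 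Once that chain-level identity is established, the lemma follows immediately: $\gammasnip$ null-homologous means $\gammasnip = \partial\mathcal{Y}$ for some face set, hence $\pi_\#(\gammasnip) = \pi_\#(\partial \mathcal{Y}) = \partial(\pi(\mathcal{Y}))$ is null-homologous in $G$, and $\pi_\#(\gammasnip)$ is precisely the (mod-$2$ reduction of the) projected closed walk.
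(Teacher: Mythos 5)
Your proposal is correct and follows essentially the same route as the paper's proof: pass to the mod-2 even subgraph of $\gammasnip$, use the natural bijection between faces of $\Gsnip$ and faces of $G$ to project the witnessing face set, and verify edge-by-edge (splitting into edges off $\alpha$ and the two copies of each edge on $\alpha$) that traversal parities match membership in the boundary of the projected face set. The paper carries out exactly this case analysis, including the subcase where both copies of an edge of $\alpha$ see the same face, so no further comparison is needed.
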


\begin{proof}
Let~$\gammasnip$ be a null-homologous closed walk in~$\Gsnip$ and let~$\gamma$ be its projection in~$G$.
Let $\Hsnip$ be the even subgraph of $\Gsnip$ containing exactly the edges that appear an odd number of
times in $\gammasnip$.
Let $H$ be the even subgraph of $G$ containing exactly the edges that appear an odd number of times
in $\gamma$.
Subgraph~$\Hsnip$ bounds a subset of faces~$\Fsnip$ in~$\Gsnip$. 
Let~$F$ be natural mapping of~$\Fsnip$ into~$G$.
We will argue that~$H$ is the boundary of~$F$, proving the lemma.

Consider any edge $e$ of $G$.
Suppose $e$ is not in $\alpha$.
In this case, $\Gsnip$ contains one copy~$e'$ of $e$, and all faces incident to~$e'$ map to faces
incident to~$e$.
Edge $e'$ being incident to exactly one face of $\Fsnip$ is equivalent to $e' \in H'$, which in turn
is equivalent to $\gammasnip$ using $e'$ an odd number of times, $\gamma$ using $e$ an odd number of
times, and finally $H$ containing $e$.

Now suppose $e$ is in $\alpha$.
Graph $\Gsnip$ contains two copies of $e$ denoted $e_1$ and $e_2$ each incident to one face $f_1$
and $f_2$, respectively (note that $f_1$ and $f_2$ may be equal).
If neither or both of $f_1$ and $f_2$ are in $\Fsnip$, then $\Hsnip$ includes neither or both of
$e_1$ and $e_2$.
In turn, $\gammasnip$ goes through the two copies of $e$ an even number of times total, meaning
$\gamma$ uses $e$ an even number of times and $e \notin H$.
If one, but not both, of $f_1$ and $f_2$ are in $\Fsnip$, then $\Hsnip$ includes exactly one of
$e_1$ or $e_2$.
In turn, $\gammasnip$ goes through the two copies of $e$ an odd number of times total, meaning
$\gamma$ uses $e$ an odd number of times and $e \in H$.

In all cases, an edge $e$ is in $H$ if and only if exactly one incident face to $e$ is in $F$.
\end{proof}

Finally, we can present our algorithm for finding a minimum-weight separating subgraph that happens
to be contractible.
\begin{lemma}
\label{lem:contractible-alg}
There exists an $O(n \log \log n)$-time algorithm that computes a minimum-weight separating subgraph if any such subgraph is a simple contractible cycle. If not, the algorithm either returns some separating subgraph (that may not be minimum weight) or nothing.
\end{lemma}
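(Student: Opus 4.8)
The plan is to run a faster, slightly modified version of the algorithm of Cabello \etal~\cite{cdem-fotc-10} that computes a shortest contractible closed walk enclosing a non-empty set of faces. Two changes are needed: replacing its planar subroutines by faster ones so the running time drops from $O(n\log n)$ to $O(n\log\log n)$, and forcing the walk it returns to enclose some face an \emph{odd} number of times, so that the set of edges it traverses an odd number of times is non-empty---and hence a genuine separating subgraph. The correctness reduction is short. If some minimum-weight separating subgraph $\minSS$ is a simple contractible cycle, then $\minSS$ is itself a contractible closed walk enclosing a non-empty set of faces: the disk it bounds avoids $s^*$ by the definition of ``contractible'' given above, and it contains at least one face. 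Hence the walk $W$ our procedure returns has weight $w(W)\le w(\minSS)$. For any closed walk $W$, the set $H$ of edges it traverses an odd number of times is a boundary subgraph with $w(H)\le w(W)$, because all weights are non-negative, and $H$ is non-empty precisely when $W$ encloses some face an odd number of times. So, provided the modified procedure guarantees that $W$ encloses some face an odd number of times, $H$ is a separating subgraph with $w(H)\le w(W)\le w(\minSS)$, hence of minimum weight, and the algorithm outputs $H$. When no minimum-weight separating subgraph is a simple contractible cycle, the algorithm outputs $H$ if it happens to be non-empty and nothing otherwise---both of which the lemma allows.

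The algorithm follows \cite{cdem-fotc-10}, treating the boundary component $s^*$ as ``the outside'': compute a shortest non-separating loop through an arbitrary vertex, slice $\Sigma$ along a collection of tight arcs and cycles (a cut graph) whose \emph{union} has linear total complexity, and recover a shortest enclosing cycle in the resulting planar instance by computing the shortest cycle separating the two boundaries of a combinatorial annulus, which is a planar minimum $(s,t)$-cut. Every shortest-path subcomputation runs in $O(n)$ time via Henzinger \etal~\cite{hkrs-fspap-97}, since $g=o(\log n)$ throughout this paper; slicing costs $O(n)$ because the cut graph has linear size; and the final planar minimum cut runs in $O(n\log\log n)$ time via Italiano \etal~\cite{insw-iamcmf-11} in place of Frederickson's algorithm~\cite{f-faspp-87}. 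The weight bound $w(W)\le w(\minSS)$ comes from Lemma~\ref{lem:disjoint-tight-arc}: applied once for each tight arc or cycle we cut along, it shows that some \emph{shortest} simple contractible cycle enclosing a non-empty face set---whose weight is at most $w(\minSS)$---is disjoint from every slicing curve, so it survives into the planar instance as a valid competitor for the planar minimum cut. Finally, projecting the cycle found downstairs back through the slicings produces a closed walk $W$ in $G$ whose odd-edge set is exactly the boundary of the set of faces it encloses, by Lemma~\ref{lem:global_null-homologous-projections}.

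The hard part is the second modification: guaranteeing that $W$ encloses some face an odd number of times, rather than ``hugging'' the slicing curves and cancelling in $\Z_2$. A planar minimum cut is dual to a simple cycle in the planar instance, but that cycle can run along both copies of a sliced edge, so its projection may traverse every edge of $G$ an even number of times even while topologically surrounding a face---exactly the degeneracy flagged in the discussion preceding this lemma. The fix, as in \cite{cdem-fotc-10}, is to rig the planar instance so that its minimum cut is forced to separate $s^*$ from at least one designated interior face; then the set of faces enclosed by $W$ is a proper, non-empty subset of the faces of $G$, so its boundary is non-empty by the face-to-boundary correspondence established in the proof of Lemma~\ref{lem:global_null-homologous-projections}. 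Verifying that this can always be arranged---and that the iterated application of Lemma~\ref{lem:disjoint-tight-arc} really does keep an optimal cycle intact through every slicing---while maintaining the $O(n\log\log n)$ running time, is the technical heart of the argument; everything else is bookkeeping.
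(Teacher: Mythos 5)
Your overall reduction (run a sliced/planarized shortest-cycle computation, take the odd-multiplicity edge set, use Lemma~\ref{lem:disjoint-tight-arc} to keep an optimal simple contractible cycle alive through the slicing, and Lemma~\ref{lem:global_null-homologous-projections} to see that the projection bounds a face set) is the right frame, and matches the paper's setup. But the proof has a genuine gap exactly where you flag "the technical heart": you never actually supply the mechanism that forces the returned walk to have a non-empty odd-edge set while preserving optimality. Your proposed fix --- "rig the planar instance so that its minimum cut is forced to separate $s^*$ from at least one designated interior face" --- is not in \cite{cdem-fotc-10} (the paper explicitly notes that the Cabello \etal\ walk can have an empty or disconnected odd-edge set; that is the whole reason this lemma needs a new idea), and it does not work as stated: you do not know in advance which faces the optimal simple contractible cycle separates from $s^*$, so designating one particular face may destroy the weight comparison $w(\gamma)\le w(\sigma)$, and trying all $\Theta(n)$ candidate faces breaks the $O(n\log\log n)$ bound.

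The paper's resolution designates an \emph{edge}, not a face. Slice along a greedy system of tight arcs $A$ anchored on $s^*$ to get a planar graph $\Gsnip$ (a disk), pick one edge $e$ on some arc, and let $e_1,e_2$ be its two copies on the disk boundary. Run the \L\c{a}cki--Sankowski $O(n\log\log n)$ planar shortest-cycle computation twice, once in $\Gsnip\setminus e_1$ and once in $\Gsnip\setminus e_2$, and keep the shorter cycle $\gamma$. Non-emptiness is then automatic: the only cycle of $\Gsnip$ whose projection fails to be a separating subgraph is the full disk boundary, and $\gamma$ avoids a boundary edge, so it cannot be that boundary; hence its projection's odd-edge set bounds a proper non-empty face set. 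Optimality is also automatic: by Lemma~\ref{lem:disjoint-tight-arc} some shortest simple contractible cycle $\sigma$ does not cross $A$, and being simple it uses $e$ at most once, so $\sigma$ survives as a simple cycle in at least one of the two restricted planar graphs and is a competitor there, giving $w(\gamma)\le w(\sigma)$. (The paper also handles the degenerate case where the restricted graphs have no cycles at all, in which case the algorithm returns nothing.) Without this edge-forbidding device, or an equivalent one, your argument does not establish the lemma; also note that your planarization route via a shortest non-separating loop and an annulus is unnecessary here --- slicing along arcs anchored at $s^*$ already yields a disk, and the needed subroutine is a planar shortest cycle avoiding a prescribed edge rather than an annulus $(s,t)$-cut.
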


\begin{proof}


The algorithm computes a system $A$ of tight arcs anchored on $s^*$ in $O(n)$ time as described in
Section~\ref{sec:characterizing_crossings}.
Let $\Gsnip$ denote the planar graph $G \snip A$; this graph has $O(n)$ vertices.

Pick an arbitrary edge $e$ on one of the tight arcs $\alpha$, and let $e_1$ and $e_2$ be distinct
copies of $e$
in~$\Gsnip$.  Let $\gamma_1$ and $\gamma_2$ be the shortest simple cycles in the  subgraphs $\Gsnip
\backslash e_1$ and $\Gsnip \backslash e_2$, respectively.  Our algorithm computes both~$\gamma_1$
and~$\gamma_2$ in $O(n \log\log n)$ time using the algorithm of \L\c{a}cki and
Sankowski~\cite{ls-mcsc-11}. Note that graphs~$\Gsnip \backslash e_1$ and~$\Gsnip \backslash e_2$
may not contain any cycles. In this case,~$\Gsnip$ contains no simple cycles.
Lemma~\ref{lem:disjoint-tight-arc} implies~$G$ does not contains any simple contractible cycles to begin with and our algorithm returns nothing. 
For the rest of this section, we assume~$\gamma_1$ and~$\gamma_2$ are well defined.

\begin{figure}[ht]
\centering
\includegraphics[height=1.2in]{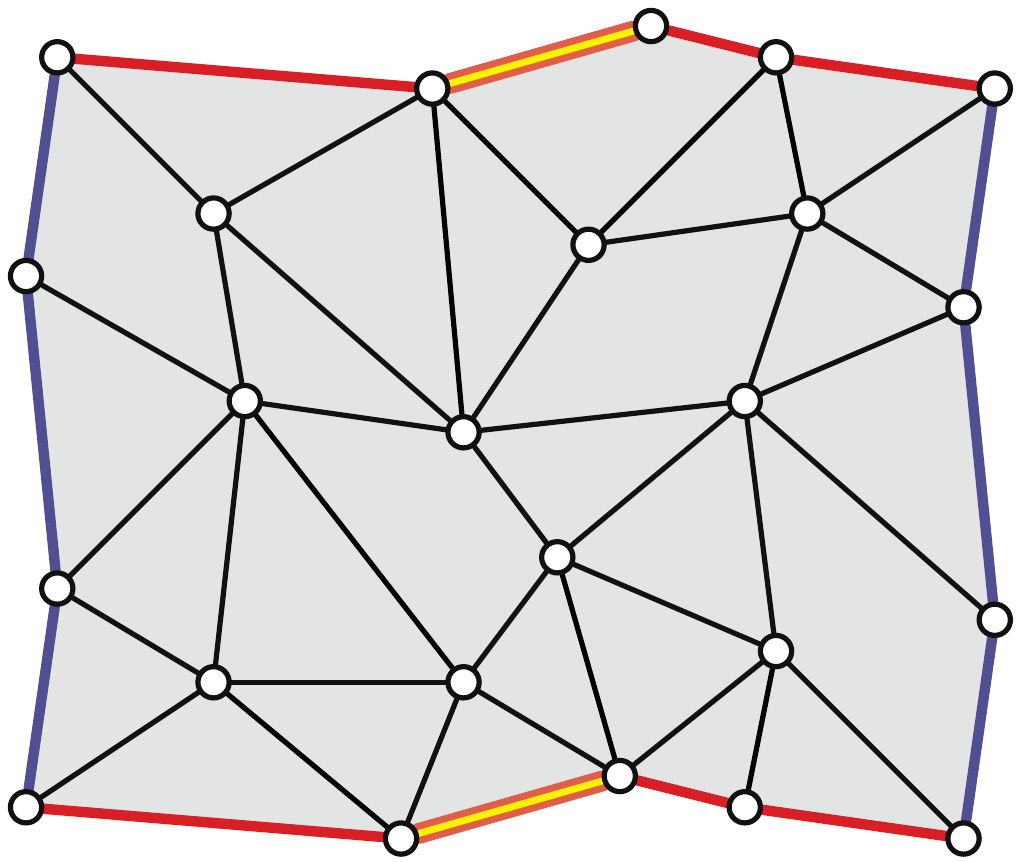}
\caption{At least one copy of $e$ is forbidden in the planarized graph.}
\label{fig:global_forbidden-pair}
\end{figure}

Let~$\gamma$ be the shorter of the cycles~$\gamma_1$ and $\gamma_2$.  By multiple instantiations of Lemma~\ref{lem:global_null-homologous-projections}, cycle~$\gamma$ projects to a null-homologous closed walk $\gamma'$ in the original graph $G$, which may or may not be simple.
Our algorithm returns the symmetric difference over all edges in~$\gamma'$.
The only cycle in $\Gsnip$ that is not a separating subgraph in~$G$ is the boundary of $\Gsnip$.
Because $\gamma$ avoids at least one edge of the boundary, the carrier of $\gamma'$ must be non-empty. If our algorithm returns anything, it must return a separating subgraph.

Now, suppose some minimum weight separating subgraph of~$G$ is a simple contractible cycle.
Lemma~\ref{lem:disjoint-tight-arc} implies that some shortest simple contractible cycle $\sigma$ in
$G$ does not cross $A$.  (We emphasize that our algorithm does not necessarily compute~$\sigma$.)  This cycle $\sigma$ appears as a simple cycle in $\Gsnip$ that avoids at least one of the edges $e_1$ or $e_2$.  Thus, $\sigma$ cannot be shorter than $\gamma$, and our algorithm returns a minimum-weight separating subgraph.
\end{proof}


\subsection{Non-contractible}
\label{sec:global_non-contractible}

Now suppose some minimum-weight separating subgraph $\minSS$ is non-contractible.  At a high level,
our algorithm for this case computes a set $F$ of faces, such that some minimum-weight separating
subgraph of $G$ separates $s^*$ from at least one face in $F$.  (Equivalently, $F^*$ is a set of
vertices of $G^*$, such that the global minimum cut in $G^*$ is an $(s,t)$-cut for some $t\in
F^*$.)  Then for each face in~$F$, we compute a minimum-weight subgraph separating $s^*$ from that
face using one of our earlier algorithms.

Throughout this section, we assume without loss of generality that every edge of $G$ lies on the boundary of two distinct faces of $G$.  We can enforce this assumption if necessary by adding $O(n)$ infinite-weight edges to $G$.

The following lemma can be seen as the main technical take-away from this section.
After its appearance in a preliminary version of our work~\cite{efn-gmcse-12}, it was
generalized by Borradaile \etal~\cite{benw-amcnt-16} for their construction of a minimum $(s,t)$-cut
oracle for surface embedded graphs.

\begin{lemma}
\label{lem:global_split-nocross}
Let~$\minSS$ be a minimum-weight separating subgraph.
Let~$\gamma$ be a closed walk in~$G$ that lies in the closure of the component of~$\Sigma \setminus
\minSS$ not incident to~$s^*$, and let~$H$
be a shortest even subgraph homologous to~$\gamma$.
There is a minimum weight separating subgraph~$\minSS'$ (possibly~$\minSS$) such
that~$H$ lies in the closure of the component of~$\Sigma \setminus \minSS'$ not incident to $s^*$. 
\end{lemma}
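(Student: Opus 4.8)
The plan is a homological exchange argument that pushes the shortest representative $H$ ``inward'' across $\minSS$ by altering which faces are declared to lie inside. Let $R$ be the (unique) component of $\Sigma\setminus\minSS$ not incident to $s^*$; since $s^*$ is connected, $\Sigma\setminus\minSS$ has exactly two components, $R$ and the component $Q\ni s^*$, with $\minSS=\partial R=\partial Q$. First I replace $\gamma$ by the even subgraph $H_0$ of edges it traverses an odd number of times, so $H_0\subseteq\overline R$ and $[H_0]=[\gamma]=[H]$. Because $H\oplus H_0$ is null-homologous and $\Sigma$ is connected with nonempty boundary, there is a \emph{unique} set of faces $F_0$ with $\partial F_0=H\oplus H_0$. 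I then set $R':=R\cup(F_0\cap Q)$ and $\minSS':=\partial R'$; since $F_0\cap Q$ is disjoint from $R$ this is a symmetric difference, so $R'=R\oplus(F_0\cap Q)$ and $\minSS'=\minSS\oplus\partial(F_0\cap Q)$. A short edge-by-edge check gives $H\subseteq\overline{R'}$: if some face of an edge $e\in H$ lies in $R$ we are done, and otherwise every finite face of $e$ lies in $Q$, so $e\notin H_0$, hence $e\in\partial F_0$, which forces exactly one face of $e$ into $F_0\cap Q\subseteq R'$.

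The heart of the argument is showing that $\minSS'$ is again a minimum-weight separating subgraph. Write $B:=\partial(F_0\cap Q)$. By inspecting the two faces of an edge of $B$ one obtains the pair of containments $B\setminus\minSS\subseteq H$ and $B\setminus H\subseteq\minSS$. Applying minimality of $H$ within its homology class to the homologous subgraph $H\oplus B$ gives $w(B\cap H)\le w(B\setminus H)$, and combining this with the two containments yields
\[
   w(B\setminus\minSS)\ \le\ w(B\cap H)\ \le\ w(B\setminus H)\ \le\ w(B\cap\minSS).
\]
Hence $w(\minSS')=w(\minSS)-w(B\cap\minSS)+w(B\setminus\minSS)\le w(\minSS)$. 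Since $R'\supseteq R$ is nonempty, $\minSS'$ is a separating subgraph, so minimality of $\minSS$ forces $w(\minSS')=w(\minSS)$.

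It remains to identify $R'$ with the component of $\Sigma\setminus\minSS'$ not incident to $s^*$. First, $R'$ is not all of $\Sigma$: otherwise $Q\subseteq F_0$, so the whole boundary cycle $s^*$ lies in $\partial F_0=H\oplus H_0=H$ (as $H_0$ avoids the boundary), and exclusive-or-ing $H$ with the null-homologous cycle $s^*$ would strictly shorten it, contradicting minimality of $H$; the case $w(s^*)=0$ is degenerate and handled separately, e.g.\ by rerooting. Next, every edge of $B$ has a face outside $R$, so $\minSS'$ has no edge interior to $R$ and $R$ stays within a single component of $\Sigma\setminus\minSS'$; since $R$ avoids $s^*$, that component is the non-$s^*$ one. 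Invoking (for the minimum-weight subgraph $\minSS'$) the fact already established for $\minSS$, $\Sigma\setminus\minSS'$ has exactly two components, each monochromatic with respect to $R'$; the non-$s^*$ one contains $R\subseteq R'$ and is therefore contained in $R'$, while the other cannot also lie in $R'$ (that would make $R'$ everything). Hence $R'$ is exactly the non-$s^*$ component, and with $H\subseteq\overline{R'}$ this proves the lemma — and in particular $\minSS'=\minSS$ whenever $F_0\cap Q=\varnothing$.

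The step I expect to be the real obstacle is the second one: forcing a \emph{single} region $R'$ to simultaneously swallow $H$ and to have boundary no heavier than $\minSS$. Everything there rests on the two set containments for $B$ and on playing the two minimality hypotheses — of $H$ within its homology class and of $\minSS$ among separating subgraphs — off against each other so the weight inequality points in the useful direction; once the chain of four inequalities is in place, the rest is bookkeeping. The secondary nuisances are the degenerate cases in the last paragraph (zero-weight boundary, or $R'$ meeting $s^*$ only at a vertex), which I expect the standing assumption that every edge bounds two distinct faces to dispose of.
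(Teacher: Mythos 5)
Your proposal is correct and takes essentially the same route as the paper: you construct the same subgraph $X'$ (the boundary of the far region enlarged by the faces bounded by $H\oplus\gamma$ on the near side), verify the same incidence property for the edges of $H$, and your chain of inequalities for $B = X\oplus X'$ is just a repackaging of the paper's edge-by-edge comparison of $w(X') + w(H\oplus X\oplus X')$ against $w(X)+w(H)$, playing the minimality of $H$ in its homology class against the minimality of $X$ among separating subgraphs. The additional bookkeeping in your last paragraph (the exactly-two-components claim and the identification of $R'$ with the non-$s^*$ component) is more than the paper itself argues and is not needed for the core exchange.
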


\begin{proof}
If $\gamma$ is null-homologous, then $H$ is empty and the lemma is trivial, so assume otherwise.
If~$H$ lies in the closure of the component of~$\Sigma \setminus \minSS$ not incident to $s^*$, then we are done, so assume otherwise.  See Figure \ref{fig:global_nonsep-vs-shortsep}. 

\begin{figure}[ht]
\centering
\includegraphics[height=1.25in]{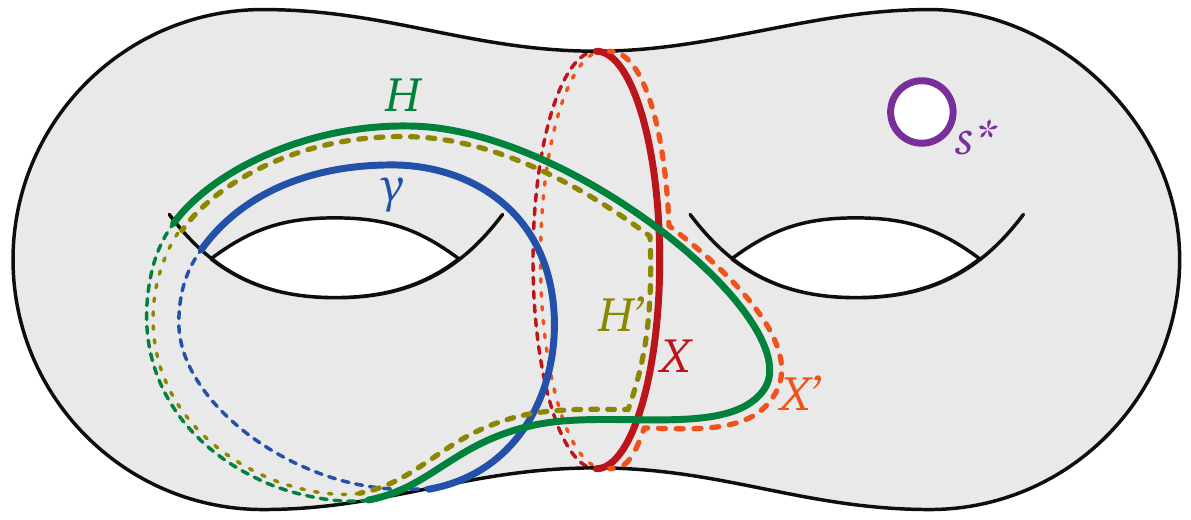}
\caption{The setting of Lemma~\ref{lem:global_split-nocross}. A~$\Z_2$-minimal even subgraph~$H$ is separated from face~$f$ by a minimum weight separating subgraph~$\minSS'$.}
\label{fig:global_nonsep-vs-shortsep}
\end{figure}

Recall, subgraph~$\minSS$ bounds the union of one non-empty component of faces not incident to
$s^*$.
Call the faces in this component \emph{far} and the rest \emph{near}.
Similarly, the even subgraph $H \oplus \gamma$ is null-homologous and therefore bounds a subset of
faces of $G$.
Call the faces in this subset \emph{white} and the rest \emph{black}.
(If $H = \gamma$, then every face of $G$ is black.)

Let~$\minSS'$ be the boundary of the union of the far faces and white faces in $G$.
There is at least one far face, so subgraph~$\minSS'$ is separating.
Every edge of~$H$ is incident to a white face, so~$H$ lies in the closure of the  component
of~$\Sigma \setminus \minSS'$ not incident to $s^*$.

It remains to argue that~$\minSS'$ is a minimum-weight separating subgraph of $G$.

For any subgraph~$A$ of $G$, let $w(A)$ denote the sum of the weights of the edges of $A$. Because both~$\minSS'$ and $\minSS$ are null-homologous, the even subgraph $H' = H \oplus \minSS' \oplus \minSS$ is homologous to $H$, and therefore to $\gamma$.  We immediately have $w(H') \ge w(H)$, because $H$ is $\Z_2$-minimal.

  We now prove that $w(\minSS') + w(H') \leq w(H) + w(\minSS)$ by bounding the contribution of each edge $e \in E(G)$ to both sides of the inequality.  Both $\minSS'$ and $H'$ are subgraphs of $\minSS\cup H$; moreover, $\minSS' \oplus H' = \minSS \oplus H$.  There are three cases to consider.
\begin{itemize}
\item
If $e \not\in \minSS\cup H$, then $e$ contributes $0$ to both sides of the inequality.
\item
If $e \in \minSS \oplus H$, then $e \in \minSS' \oplus H'$.  In this case, $e$ contributes $w(e)$ to both sides of the inequality.
\item
If $e \in \minSS \cap H$, then $e$ contributes exactly $2w(e)$ to the right side of the inequality.  Trivially, $e$ contributes at most $2w(e)$ to the left side.
\end{itemize}
We conclude that $\minSS'$ is also a minimum-weight separating subgraph.
\end{proof}

\begin{lemma}
\label{lem:global_split-alg}
There is a~$g^{O(g)} n \log \log n$-time algorithm that computes a minimum-weight separating subgraph of $G$ if any minimum-weight separating subgraph of $G$ is non-contractible.  If every minimum-weight separating subgraph of $G$ is contractible, the algorithm returns a separating subgraph that may not have minimum weight.
\end{lemma}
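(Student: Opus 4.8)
The plan is to reduce the task to $2^{O(g)}$ minimum $(s^*,f)$-cut computations, one for each member of a carefully chosen set $F$ of candidate faces. Recall that here $\Sigma$ has a single boundary component, so $\beta = 2g$, and that a minimum-weight boundary subgraph of $G$ separating the face $s^*$ from a face $f$ is, after dualizing (Lemma~\ref{lem:cut-duality}), exactly a minimum $(s^*,f)$-cut in the graph dual to $G$, which Corollary~\ref{C:min-cut-crossing} computes in $g^{O(g)}n\log\log n$ time. The point of $F$ is that whenever some minimum-weight separating subgraph is non-contractible, at least one member of $F$ lies on the far side of \emph{some} minimum-weight separating subgraph, so one of these $(s^*,f)$-cut computations returns an optimal separating subgraph.

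To build $F$, use the every-homology-class version of Theorem~\ref{Th:Z2-minimal-crossing} to compute, in $g^{O(g)}n\log\log n$ time, a minimum-weight even subgraph $H_h$ in every homology class $h\in(\Z_2)^{2g}$. For each nonzero $h$ the subgraph $H_h$ is nonempty; pick an arbitrary edge $e_h\in H_h$ and put both faces incident to $e_h$ into $F$, so $|F|\le 2(2^{2g}-1)=2^{O(g)}$. For every $f\in F$ compute a minimum-weight boundary subgraph of $G$ separating $s^*$ from $f$, and return the lightest one found; if $F=\varnothing$, which happens only when $g=0$, return the boundary of any single face. Each subgraph produced in the loop is a nonempty boundary subgraph, hence a separating subgraph, so the output is always a separating subgraph whether or not the hypothesis holds; and the running time is $g^{O(g)}n\log\log n + 2^{O(g)}\cdot g^{O(g)}n\log\log n = g^{O(g)}n\log\log n$.

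For correctness, suppose some minimum-weight separating subgraph $\minSS$ is non-contractible, and let $\Sigma''$ be the component of $\Sigma\setminus\minSS$ not incident to $s^*$. The topological crux is that $\Sigma''$ carries a cycle $\gamma$ of $G$ that is \emph{not} null-homologous in $\Sigma$. Since $\minSS$ is non-contractible, $\Sigma''$ is a connected surface with boundary $\minSS$ that is not a disk. If $\Sigma''$ has positive genus, a non-separating cycle of the subgraph of $G$ it carries remains non-separating after the (connected) remaining part of $\Sigma$ is reglued along $\partial\Sigma''$, hence is non-null-homologous in $\Sigma$; if $\Sigma''$ has genus $0$ it has at least two boundary cycles, and cutting $\Sigma$ along any one of them --- which is a cycle of $\minSS\subseteq G$ --- leaves $\Sigma$ connected through the remaining boundary cycles, so that cycle is non-null-homologous in $\Sigma$. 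In either case $\gamma$ exists and lies in the closure of $\Sigma''$. Put $h := [\gamma]\neq 0$; then $H_h$ is a shortest even subgraph homologous to $\gamma$, so Lemma~\ref{lem:global_split-nocross} yields a minimum-weight separating subgraph $\minSS'$ such that $H_h$ lies in the closure of the component of $\Sigma\setminus\minSS'$ not incident to $s^*$. Since $[H_h]=h\neq 0$ while $\minSS'$ is null-homologous, $H_h\not\subseteq\minSS'$, so the chosen edge $e_h$ has at least one incident face $f$ that is a far face of $\minSS'$ (both of its faces if $e_h\notin\minSS'$, exactly one if $e_h\in\minSS'$). This $f$ belongs to $F$, and $\minSS'$ separates $s^*$ from $f$, so the subgraph our algorithm computes for this $f$ has weight at most $w(\minSS')$; being itself a separating subgraph, it has weight exactly $w(\minSS')$, which is the minimum over all separating subgraphs.

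I expect the main obstacle to be the topological claim --- that the far component of a non-contractible minimum-weight separating subgraph always carries a cycle that is non-null-homologous in the whole surface --- stated properly for honest graph cycles rather than abstract curves, together with the small bookkeeping that one arbitrary edge of $H_h$ with its two incident faces always ``catches'' a far face of $\minSS'$. The remaining ingredients --- computing all the $H_h$, invoking Lemma~\ref{lem:global_split-nocross}, and running the $(s^*,f)$-cut subroutine --- are assembled directly from results already established.
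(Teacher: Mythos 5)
Your proposal is correct and takes essentially the same route as the paper: enumerate all nontrivial $\Z_2$-homology classes, compute a $\Z_2$-minimal even subgraph $H_h$ in each class via Theorem~\ref{Th:Z2-minimal-crossing}, use the faces incident to an arbitrary edge of $H_h$ as candidate targets for the minimum $(s,t)$-cut algorithm of Section~\ref{sec:crossing}, and establish correctness through Lemma~\ref{lem:global_split-nocross}. The only real difference is that your two-case topological argument (a non-separating cycle when the far component has positive genus, a boundary circle of the far component otherwise) is a slightly more careful rendering of the paper's terser claim that the non-disk far component carries a cycle that is non-separating in $\Sigma$.
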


\begin{proof}
In a preprocessing phase, we  construct a homology basis from a tree-cotree decomposition in $O(gn)$ time~\cite{e-dgteg-03}.  Then we enumerate all~$2^{2g}-1$ non-trivial homology classes by considering subsets of cycles in this homology basis.  For each non-trivial homology class $h$, we perform the following steps:
\begin{itemize}
\item
Compute a minimum-weight subgraph $H_h$ in homology class $h$, in $g^{O(g)}n\log\log n$ time, as described by Theorem \ref{Th:Z2-minimal-crossing}.
\item
Fix an arbitrary edge $e$ of $H_h$.  By assumption, $e$ lies on the boundary of two distinct faces
$f_L$ and~$f_R$.  In particular, at least one of these faces is not $s^*$.
\item
If $f_L\ne s^*$, compute a minimum-weight subgraph $\minSS_h$ of $G$ that separates $s^*$ and $f_L$, in $g^{O(g)}n\log\log n$ time, using the minimum $(s,t)$-cut algorithm of Section~\ref{sec:crossing}.  Otherwise, $\minSS_h$ is undefined.
\item
If $f_R\ne s^*$, compute a minimum-weight subgraph $\minSS'_h$ of $G$ that separates $s^*$ and $f_R$, in $g^{O(g)}n\log\log n$ time, again using the minimum $(s,t)$-cut algorithm of Section~\ref{sec:crossing}.  Otherwise, $\minSS'_h$ is undefined.
\end{itemize}
Altogether we compute between $2^{2g}-1$ and $2^{2g+1}-2$ separating subgraphs of $G$ (some of which may coincide); the output of our algorithm is the smallest of these separating subgraphs.  The overall running time of our algorithm is $2^{O(g)} \cdot g^{O(g)}n\log\log n = g^{O(g)}n\log\log n$.

It remains to prove that our algorithm is correct.  Let~$\minSS$ be any minimum-weight subgraph
of~$G$ such that the component of $\Sigma \setminus \minSS$ not incident to $s^*$ is not a disk.
Let $\Sigma'$ be the closure of the component of $\Sigma \setminus \minSS$ that does not contain
$s^*$.  Because $\Sigma'$ is not a disk, it contains a cycle $\gamma$ that is not separating in
$\Sigma'$, and therefore not separating in $\Sigma$.  Let~$h$ be the homology class of $\gamma$ in
$\Sigma$, and let $H_h$ be any minimum-weight even subgraph of $G$ that is homologous with $\gamma$
in $\Sigma$.  Lemma~\ref{lem:global_split-nocross} implies, without loss of generality, that~$H_h$
lies entirely in~$\Sigma'$.  Thus, every edge of $H_h$ is on the boundary of at least one face $f'$
in~$\Sigma'$; it follow that $\minSS$ is a minimum-weight even subgraph separating $s^*$ and $f'$.  We conclude that when our algorithm considers homology class $h$, either $\minSS_h$ or $\minSS'_h$ is a minimum-weight separating subgraph of $G$.
\end{proof}

Modifying the previous algorithm to use results of Section \ref{sec:homcover}, instead of the corresponding results in Section \ref{sec:crossing}, immediately gives us the following:

\begin{lemma}
\label{lem:global_split-alg2}
There is a~$2^{O(g)} n \log n$-time algorithm that computes a minimum-weight separating subgraph of $G$ if any minimum-weight separating subgraph of $G$ is non-contractible.  If every minimum-weight separating subgraph of $G$ is contractible, the algorithm returns a separating subgraph that may not have minimum weight.
\end{lemma}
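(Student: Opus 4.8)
The plan is to reuse the algorithm and the correctness argument from the proof of Lemma~\ref{lem:global_split-alg} essentially verbatim, swapping in the \(\Z_2\)-homology cover machinery of Section~\ref{sec:homcover} wherever the greedy-arc/crossing-sequence machinery of Section~\ref{sec:crossing} was invoked. Concretely: in a preprocessing phase we still build a homology basis from a tree-cotree decomposition in \(O(gn)\) time and enumerate all \(2^{2g}-1\) nontrivial homology classes as subsets of that basis. For each nontrivial class \(h\) we first compute a minimum-weight even subgraph \(H_h\) homologous to \(h\), now via Theorem~\ref{thm:min-even} rather than Theorem~\ref{Th:Z2-minimal-crossing}. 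We then pick an arbitrary edge \(e\) of \(H_h\), observe that it bounds two distinct faces \(f_L\) and \(f_R\) of which at least one differs from \(s^*\), and for each such face compute a minimum-weight subgraph of \(G\) separating \(s^*\) from it, now via Corollary~\ref{cor:mincut} (the minimum \((s,t)\)-cut algorithm built on the homology cover) rather than Corollary~\ref{C:min-cut-crossing}. The algorithm returns the lightest of the \(O(2^{2g})\) separating subgraphs so produced.

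Correctness needs no new ideas: it is exactly the argument in the proof of Lemma~\ref{lem:global_split-alg}, which rests on Lemma~\ref{lem:global_split-nocross}. If some minimum-weight separating subgraph \(\minSS\) is non-contractible, then the component \(\Sigma'\) of \(\Sigma\setminus\minSS\) not containing \(s^*\) is not a disk, so it contains a cycle \(\gamma\) that is non-separating in \(\Sigma'\) and hence in \(\Sigma\); letting \(h\) be the homology class of \(\gamma\), Lemma~\ref{lem:global_split-nocross} lets us assume the minimum-weight even subgraph \(H_h\) homologous to \(\gamma\) lies entirely inside \(\Sigma'\), so every edge of \(H_h\) bounds a face \(f'\) of \(\Sigma'\), and then some minimum-weight subgraph separating \(s^*\) from \(f'\) is in fact a minimum-weight separating subgraph of \(G\); this is found when the algorithm processes class \(h\). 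When no minimum-weight separating subgraph is non-contractible, each \(\minSS_h\) or \(\minSS_h'\) still separates \(s^*\) from a genuine face, so the algorithm still outputs a bona fide separating subgraph, merely not necessarily of minimum weight --- precisely as stated.

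The only point that genuinely requires attention is the running-time bookkeeping, and this is the step I expect to be the ``main obstacle,'' though it is really just arithmetic: there are \(2^{2g}-1 = 2^{O(g)}\) homology classes, and for each we make \(O(1)\) calls to subroutines that each run in \(2^{O(g)}\, n\log n\) time --- here one uses that \(\Sigma\) has a single boundary, so \(\beta = 2g\) and the factors \(16^\beta \beta^3\) from Theorem~\ref{thm:min-even} and \(256^g g^3\) from Corollary~\ref{cor:mincut} are each \(2^{O(g)}\) --- while the tree-cotree preprocessing costs only \(O(gn)\). Multiplying, the total running time is \(2^{O(g)}\cdot 2^{O(g)}\, n\log n = 2^{O(g)}\, n\log n\), as claimed. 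No deeper difficulty arises: the substitution of Section~\ref{sec:homcover}'s algorithms for those of Section~\ref{sec:crossing} is entirely black-box, and the one sentence of care needed is simply the observation that all the exponential factors emerging from Section~\ref{sec:homcover} collapse into a single \(2^{O(g)}\) once they are multiplied by the \(2^{2g}\)-fold outer loop.
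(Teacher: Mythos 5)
Your proposal is correct and matches the paper's approach exactly: the paper proves this lemma in one sentence by substituting the Section~\ref{sec:homcover} subroutines (Theorem~\ref{thm:min-even} and Corollary~\ref{cor:mincut}) for the Section~\ref{sec:crossing} ones in the algorithm of Lemma~\ref{lem:global_split-alg}, with correctness carried over unchanged via Lemma~\ref{lem:global_split-nocross}. Your extra bookkeeping that the $2^{2g}$ outer loop times the $2^{O(g)}\,n\log n$ subroutine cost collapses to $2^{O(g)}\,n\log n$ is exactly the intended (implicit) arithmetic.
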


\subsection{Summing up}

Finally, to compute the minimum-weight separating subgraph in $G$, we run both  algorithms described in Lemmas~\ref{lem:contractible-alg} and~\ref{lem:global_split-alg}.  If either algorithm returns nothing, the other algorithm returns a minimum-weight separating subgraph of $G$; otherwise, both algorithms return non-empty separating subgraphs of $G$, and the smaller of those two subgraphs is a minimum-weight separating subgraph of $G$.  We conclude:

\begin{theorem}
Let $G$ be an edge-weighted undirected graph embedded on a surface with genus $g$ with one boundary
component.
We can compute a minimum-weight separating subgraph in $G$ in either $g^{O(g)} n \log \log n$ time or $2^{O(g)} n \log n$ time.
\end{theorem}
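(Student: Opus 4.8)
The plan is to combine the two case-specific subroutines developed in Sections~\ref{sec:global_contractible} and~\ref{sec:global_non-contractible} by running both and returning the better output, exploiting the structural dichotomy recorded just before Section~\ref{sec:global_contractible}: every minimum-weight separating subgraph $\minSS$ of $G$ is either \emph{contractible}---meaning the unique component of $\Sigma\setminus\minSS$ not incident to $s^*$ is a disk, in which case $\minSS$ is in fact a shortest (weakly) simple contractible cycle---or \emph{non-contractible}. These are the only two possibilities, and an algorithm that provably handles each one separately, together with the ability to compare candidate outputs, suffices.

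Concretely, I would first invoke Lemma~\ref{lem:contractible-alg} to obtain an algorithm $\mathcal{A}_1$ running in $O(n\log\log n)$ time that returns a minimum-weight separating subgraph whenever some minimum-weight separating subgraph is a simple contractible cycle, and otherwise returns either some (not necessarily minimum-weight) separating subgraph or nothing. Then I would invoke Lemma~\ref{lem:global_split-alg} to obtain an algorithm $\mathcal{A}_2$ running in $g^{O(g)}n\log\log n$ time that returns a minimum-weight separating subgraph whenever some minimum-weight separating subgraph is non-contractible, and otherwise returns some (not necessarily minimum-weight) separating subgraph. The final algorithm runs both $\mathcal{A}_1$ and $\mathcal{A}_2$; if either returns nothing, it outputs the result of the other; otherwise it outputs the lighter of the two returned subgraphs.

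For correctness, fix any minimum-weight separating subgraph $\minSS$. By the dichotomy, either $\minSS$ is a simple contractible cycle, in which case $\mathcal{A}_1$ returns a minimum-weight separating subgraph, or $\minSS$ is non-contractible, in which case $\mathcal{A}_2$ does; so at least one of the two outputs is a minimum-weight separating subgraph. Since every output that is not ``nothing'' is a genuine separating subgraph, and therefore has weight at least $w(\minSS)$, returning the lighter of the two outputs (or the unique non-null output) always yields a minimum-weight separating subgraph. The total running time is the sum of the two subroutine running times, namely $g^{O(g)}n\log\log n$. Replacing $\mathcal{A}_2$ by the algorithm of Lemma~\ref{lem:global_split-alg2}, which runs in $2^{O(g)}n\log n$ time, gives the second claimed bound.

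There is essentially no difficult step remaining; all the substantive work lies in Lemmas~\ref{lem:contractible-alg}, \ref{lem:global_split-alg}, and~\ref{lem:global_split-alg2} (with Lemma~\ref{lem:global_split-nocross} underlying the non-contractible case). The one point that needs care---and the closest thing to an obstacle---is the precise behavior of each subroutine when its \emph{guessed} case does not hold: we rely on the fact, asserted by Lemmas~\ref{lem:contractible-alg} and~\ref{lem:global_split-alg}, that $\mathcal{A}_1$ and $\mathcal{A}_2$ never return a subgraph that fails to be separating (they return either a valid separating subgraph or nothing), which is what makes the ``run both, keep the best'' combination sound regardless of which case actually occurs.
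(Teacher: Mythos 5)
Your proposal is correct and matches the paper's own argument essentially verbatim: the paper likewise runs the algorithms of Lemmas~\ref{lem:contractible-alg} and~\ref{lem:global_split-alg} (or \ref{lem:global_split-alg2} for the second bound), relies on the contractible/non-contractible dichotomy to guarantee at least one output is optimal, and returns the lighter of the returned separating subgraphs (or the unique one if the other subroutine returns nothing). No gaps to report.
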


\begin{corollary}
Let $G$ be an edge-weighted undirected graph embedded on a surface with genus $g$.
We can compute a global minimum cut in $G$ in either $g^{O(g)} n \log \log n$ time or $2^{O(g)} n \log n$ time.
\end{corollary}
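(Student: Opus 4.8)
The plan is to package the preceding theorem on minimum-weight separating subgraphs together with the duality established in Lemma~\ref{lem:mincut-z2}, following the setup at the start of Section~\ref{sec:global}. First I would note that filling in every boundary component of~$\Sigma$ with a disk changes neither the graph nor its cuts, so without loss of generality the underlying surface has no boundary; as in the section setup, I would also add zero-weight edges if necessary so that no edge of $G$ has the same face on both sides, and I would assume the edge weights are non-negative (global minimum cut is the intended problem, and Section~\ref{S:NPhard} already shows what happens otherwise). Then I would pass to the dual graph $G^*$, which inherits the edge weights of $G$ under the bijection $e\leftrightarrow e^*$ and is cellularly embedded on the same orientable surface~$\Sigma$, with the $n$ vertices of $G$ becoming faces and the faces of $G$ becoming vertices. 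Picking an arbitrary vertex $s$ of $G$, the face $s^*$ of $G^*$ can have its interior removed to produce a combinatorial surface $\Sigma\setminus s^*$ of genus~$g$ with \emph{exactly one} boundary component, carrying the embedded graph $G^*$.

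Next I would invoke the preceding theorem on minimum-weight separating subgraphs: applied to $G^*$ on $\Sigma\setminus s^*$, it computes a minimum-weight separating subgraph $Y$ of $G^*$ in $\Sigma\setminus s^*$ in $g^{O(g)} n\log\log n$ time, or in $2^{O(g)} n\log n$ time. I would then return the dual edge set $Y^*$ in $G$. Correctness follows from the two directions of Lemma~\ref{lem:mincut-z2}: since $Y$ is a separating subgraph of $G^*$ in $\Sigma\setminus s^*$, that lemma guarantees $Y^*$ is a cut of $G$ (in fact an $(s,t)$-cut for some face $t^*$ of $G^*$ in the far component), so $w(Y^*) = w(Y)$ is at least the weight of a global minimum cut; conversely, if $X$ is a global minimum cut of $G$, the lemma says $X^*$ is a separating subgraph of $G^*$ in $\Sigma\setminus s^*$, whence $w(Y)\le w(X^*) = w(X)$. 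Combining the two inequalities shows $Y^*$ is a global minimum cut of $G$, and the running time is exactly that of the theorem.

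The steps above are essentially routine once the theorem is available; the only points needing a moment's care are (i) verifying that $\Sigma\setminus s^*$ really is an orientable genus-$g$ surface with a single boundary component whose combinatorial complexity (vertices, edges, and faces of $G^*$) is $O(n)$ — immediate from Euler's formula together with our standing assumption $g=O(n^{1-\eps})$, so that the running time expressed in terms of the original parameter $n$ is unchanged — and (ii) being careful to use \emph{both} directions of the cut/separating-subgraph correspondence in Lemma~\ref{lem:mincut-z2}, so that the computed subgraph is certifiably optimal rather than merely feasible. I do not anticipate any genuine obstacle here: this corollary is a straightforward consequence of the separating-subgraph theorem and dual-graph duality.
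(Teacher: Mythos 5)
Your proposal is correct and matches the paper's approach: the corollary is obtained exactly as you describe, by filling in any boundaries, dualizing, deleting the face $s^*$ for an arbitrary vertex $s$, and applying the separating-subgraph theorem together with the correspondence of Lemma~\ref{lem:mincut-z2}. No gaps.
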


%

\paragraph{Acknowledgments}
The authors would like to thank Chandra Chekuri and Aparna Sundar for helpful discussions on some of the preliminary work included here. 
We would also like to thank the anonymous reviewers of our earlier extended abstracts \cite{cen-mcshc-09, en-mcsnc-11, efn-gmcse-12} for many helpful comments and suggestions.

\bibliographystyle{abuser}
\bibliography{bib/topology,bib/data-structures,bib/optimization,bib/other}

\end{document}